\documentclass[a4paper,12pt]{article}
\usepackage[T1]{fontenc}
\pdfoutput=1

\usepackage{yfonts}
\usepackage{comment}
\usepackage{jheppub}
\usepackage{longtable}
\usepackage{array}
\usepackage{macro}
\usepackage{appendix}
\usepackage{amsmath,amssymb}
\usepackage{tikz}
\usepackage{pgfkeys}
\usepackage{pgfopts}
\usepackage{xcolor}
\usepackage{pict2e}
\usepackage{young}
\usepackage{youngtab}
\usepackage{ytableau}
\usetikzlibrary{decorations.pathreplacing}

\usepackage{tikz}

\newcommand{\vol}{\mathrm{vol}}
\newcommand{\inner}[2]{\left\langle #1,#2\right\rangle}
\newcommand{\Ltwo}[2]{\langle #1,#2\rangle}

\newtheorem{Theorem}{Theorem}[section]

\newtheorem{Proposition}[Theorem]{Proposition}

\makeatletter
\DeclareRobustCommand{\loplus}{\mathbin{\mathpalette\dog@lsemi{+}}}
\DeclareRobustCommand{\lotimes}{\mathbin{\mathpalette\dog@lsemi{\times}}}
\DeclareRobustCommand{\roplus}{\mathbin{\mathpalette\dog@rsemi{+}}}
\DeclareRobustCommand{\rotimes}{\mathbin{\mathpalette\dog@rsemi{\times}}}

\newcommand{\dog@rsemi}[2]{\dog@semi{#1}{#2}{-90,90}}
\newcommand{\dog@lsemi}[2]{\dog@semi{#1}{#2}{270,90}}
\newcommand{\dog@semi}[3]{%
  \begingroup
  \sbox\z@{$\m@th#1#2$}%
  \setlength{\unitlength}{\dimexpr\ht\z@+\dp\z@\relax}%
  \makebox[\wd\z@]{\raisebox{-\dp\z@}{%
    \begin{picture}(1,1)
    \linethickness{\variable@rule{#1}}
    \roundcap
    \put(0.5,0.5){\makebox(0,0){\raisebox{\dp\z@}{$\m@th#1#2$}}}
    \put(0.5,0.5){\arc[#3]{0.5}}
    \end{picture}%
  }}%
  \endgroup
}
\newcommand{\variable@rule}[1]{%
  \fontdimen8  
  \ifx#1\displaystyle\textfont3\else
    \ifx#1\textstyle\textfont3\else
      \ifx#1\scriptstyle\scriptfont3\else
        \scriptscriptfont3\relax
  \fi\fi\fi
}
\makeatother

\title{\textbf The asymptotic charges of Curtright dual graviton and Curtright extensions of $\mathfrak{BMS}$ algebra}

\author[a,b]{Federico Manzoni}

\affiliation[a]{Mathematics and Physics department, Roma Tre, Via della Vasca Navale 84, Rome, Italy}
\affiliation[b]{INFN Roma Tre Section, Physics department, Via della Vasca Navale 84, Rome, Italy}
\emailAdd{federico.manzoni@uniroma3.it, federico13.manzoni97@gmail.com, ORCID ID: 0000-0002-9979-6154}

\abstract{This paper studies the asymptotic gauge charges of the Curtright mixed-symmetry rank-3 field \( \phi_{[\rho\sigma]\nu} \) in Minkowski spacetime, interpreted in \( D = 5 \) as the dual graviton. In Bondi coordinates at future null infinity, we impose radiation fall-offs and fix a de Donder-like gauge together with an on-shell traceless condition, similarly to what happens in linearized gravity. Surface charges associated with the residual gauge transformations are constructed as boundary integrals via N\"other's 2-form. In \( D = 5 \), exploiting Hodge/Hodge-like decompositions on \( S^{3} \), the charge splits into a scalar sector \( Q_{\Phi} \), a vector sector \( Q_{V} \) and a TT sector $Q_{y^{\text{TT}}}$. \( Q_{\Phi} \) is parametrized by a single arbitrary scalar function \( \Phi \) (interpreted as the supertranslation-like parameter), \( Q_{V} \) is parametrized by a vector field \( V^{i} \in \mathfrak{Diff}(S^{3}) \) and the TT sector $Q_{y^{\text{TT}}}$ is parametrized by a transverse-traceless rank-2 tensor $y_{ij}^{\text{TT}} \in \mathfrak{TT}(S^3)$. The corresponding charge algebra closes only if $V_i \in \mathfrak{o}(4)$ as semidirect sum \( \mathfrak{o}(4) \loplus (C^{\infty}(S^3) \oplus \mathfrak{TT}(S^3)) \), i.e. an abelian extension of a $\mathfrak{BMS}$-like algebra featuring a higher-spin-like supertranslation sector.}

\begin{document} 
\maketitle
%\flushbottom
%\newpage

\section{Introduction}
The study of asymptotic symmetries has long provided a bridge between gauge redundancy and genuine physical information encoded at infinity \cite{Ciambelli:2022vot, strominger2018lectures, Manzoni:2025gxw, Heissenberg:2019fbn, Compere:2018aar}. In gravity, the analysis of asymptotically flat space-times led to the discovery that the symmetry group at null infinity is larger than the Poincar\'e group, giving rise to the Bondi-Metzner-Sachs-van der Burg (BMS) group and, in particular, to an infinite-dimensional family of angle-dependent translations \cite{Bondi:1962px,PhysRev.128.2851,Sachs:1962wk}. In the last decade this subject has acquired a renewed centrality, largely due to the realization that asymptotic symmetries, soft theorems and memory effects are different facets of the same infrared structure of gauge theories \cite{Pasterski:2015zua, Jokela:2019apz, Afshar_2019, Pate_2017, Strominger_2014, Francia:2018jtb, Hamada_2017, deAguiarAlves:2025vfu, Esmaeili:2020eua}. These developments have stimulated a systematic re-examination of the precise role played by boundary conditions, gauge choices and the definition of conserved charges, not only in four-dimensional gravity \cite{Barnich:2011ChargeAlgebra, Barnich:2010BMSsuperrotations, Campiglia:2014SubleadingSoft, Campiglia:2015NewSymmetriesSmatrix, Kapec:2014VirasoroSmatrix, Freidel:2021BMSW}   but also in higher dimensional gravity \cite{TanabeTanahashiShiromizu2010NullInfinity5D,TanabeKinoshitaShiromizu2011NullInfinityAnyD, KapecLysovPasterskiStrominger2017HigherDSupertranslations,HollandsIshibashi2005BondiEnergyHigherD, Fuentealba:2022BMS5PRL, Fuentealba:2022HamiltonianBMS5}, different backgrounds \cite{Kehagias_2016, Tolish:2016ggo,Bonga:2020fhx,Manzoni:2024agc, Ferreira:2016hee}, higher-spin gauge theories \cite{Campoleoni:2019ptc, Campoleoni:2020ejn, Campoleoni:2017mbt, Campiglia:2015qka, Campoleoni:2017qot, Strominger_2014} and for more exotic gauge systems \cite{Afshar:2018apx, Romoli:2024hlc, Ferrero:2024eva,Francia:2018jtb, Romoli:2025map, manz2, Manzoni:2025zmi, Manzoni:2024tow}.

A key lesson emerging from this modern perspective is that the asymptotic symmetry group is not a purely kinematical datum but depends delicately on which fall-off conditions one imposes on the fields. In $D>4$ gravity, for instance, there is a well-known tension between boundary conditions tailored to accommodate radiative degrees of freedom and those that allow for an enlargement of the asymptotic symmetry group beyond Poincar\'e \cite{TanabeTanahashiShiromizu2010NullInfinity5D, TanabeKinoshitaShiromizu2011NullInfinityAnyD, Fuentealba:2022BMS5PRL, Fuentealba:2022HamiltonianBMS5}. Different choices can lead to different residual gauge algebras and different sets of finite, non-trivial charges, while still capturing physically meaningful radiation. Accordingly, one natural strategy is to fix a gauge that reduces the equations of motion to a simple wave equation, impose radiation-compatible fall-offs motivated by flux finiteness, solve explicitly for the residual gauge parameters compatible with these conditions and then construct the associated surface charges \cite{Ruzziconi:2019pzd, Ciambelli:2022vot}.

The same circle of ideas extends well beyond the metric formulation of gravity. In fact, already at the free-field level, massless gauge fields can be formulated in terms of Lorentz tensors \cite{Bekaert:2006py} of various symmetry types, including representations described by non-trivial Young diagrams \cite{Curtright:1980un, Curtright:1980yk, Curtright:1980yj, Curtright:2019yur, CURTRIGHT1985304, Labastida:1986gy, Labastida:1987kw}. Such mixed symmetry fields arise naturally in higher-spin theory \cite{vukovic2018, Vasiliev_2004, witfre, sagnotti2012notes}, in the spectrum of string theory \cite{Polchinski:1998rq, Polchinski:1998rr, green_schwarz_witten_2012, Green:2012pqa} and in duality-covariant formulations of gravity and gauge theories \cite{Hull:2000zn, Hull:2001iu, Medeiros_2003, Manzoni:2025zmi}. Their covariant description requires a refined notion of gauge symmetry, typically involving more than one gauge parameter and, crucially, a hierarchy of gauge-for-gauge redundancies. These features make the analysis of asymptotic symmetries particularly interesting and potentially subtle. From this standpoint, mixed-symmetry fields provide a controlled arena in which one can test how robust the standard infrared story is under changes of field variables and under duality transformations.

A paradigmatic example is the Curtright field, the simplest ``hook'' mixed-symmetry tensor, which in the index convention used in this paper is written as \( \phi_{[\rho\sigma]\nu} \) and corresponds to the Young tableau \( \lambda=(2,1) \). The Curtright model generalizes the notion of gauge field beyond totally symmetric and totally antisymmetric tensors and comes equipped with two gauge parameters, one symmetric and one antisymmetric, together with a gauge-for-gauge vector parameter \cite{CURTRIGHT1985304}. In five space-time dimensions, the Curtright field is of particular interest because it provides a dual description of the graviton: on-shell, the two formulations encode the same propagating degrees of freedom, while off-shell their gauge structures look quite different \cite{hamermesh1989group, NicolasBoulanger_2003}. Understanding how asymptotic charges behave under such dualities is therefore a natural question, both for clarifying the physical meaning of dual observables and for exploring whether the infrared symmetry algebra is invariant under changes of field variables and dual descriptions. For works concerning global magnetic dual charges for the graviton and its symmetry properties also related to generalized symmetry in linearized gravity, see \cite{Hull:2023iny,Hull:2024qpy,2024CQGra..41s5012H,Hull:2024xgo,Hull:2024bcl,Hull:2024ism}. 

More broadly, the present work may be viewed as part of a wider line of research in which non-standard gauge structures, dual formulations and asymptotic symmetry algebras are used as probes of gravitational physics beyond the conventional metric description. In particular, three-dimensional higher-spin Chern-Simons gravities provide a distinguished arena in which gauge symmetry is enlarged in a controlled way and where the asymptotic symmetry analysis leads to infinite-dimensional boundary algebras of $W$-type or to their flat-space higher-spin analogues \cite{Campoleoni:2010zq,Henneaux:2010xg,Campoleoni:2011hg,Gonzalez:2013oaa}. These examples have made especially clear that the physical content of the asymptotic symmetry group depends delicately on the choice of boundary conditions, gauge fixing and phase-space variables. Related questions also arise in colored gravity models, where gravitational degrees of freedom are supplemented by non-abelian internal structures, leading to a richer interplay between bulk gauge symmetries and boundary symmetry algebras \cite{Joung:2017hsi,Monjo:2025u13}. In parallel, the study of extensions of the Bondi-Metzner-Sachs-van der Burg group in three and four dimensions has shown that even for ordinary gravity the asymptotic symmetry problem is considerably subtler than the original BMS analysis might suggest, with supertranslations, superrotations and their associated charge algebras depending sensitively on the adopted asymptotic framework \cite{Barnich:2010ojg,Barnich:2011ChargeAlgebra, Barnich:2010BMSsuperrotations,Barnich:2014kra,Freidel_2021weyl,Campiglia_2020}. From a different but closely related perspective, dual-graviton constructions and duality-symmetric formulations of linearized gravity have long suggested that the same propagating degrees of freedom may admit inequivalent off-shell gauge descriptions, with potentially different natural boundary observables \cite{Bunster:2013oaa,Hull:2001iu,Hull:2023iny,Hull:2024qpy}. Likewise, exceptional field theory approaches replace ordinary spacetime covariance by generalized diffeomorphisms and enlarged tensor hierarchies, thereby offering a broader setting in which duality covariance and generalized gauge symmetry become structural principles \cite{Hohm:2013pua,Hohm:2013vpa,Hohm:2013uia,Hohm:2014fxa,West:2014xma}. Although these programs are conceptually distinct, they share the common lesson that enlarged gauge structures and boundary symmetry algebras may encode information that is not manifest in the standard metric formulation and may therefore provide useful insight into the infrared and, more speculatively, the quantum structure of gravity. In this perspective, the Curtright field offers a particularly interesting testing ground. In five spacetime dimensions it furnishes the dual description of the graviton, while at the same time exhibiting a mixed-symmetry gauge structure that is substantially richer than that of the ordinary metric field. Studying its asymptotic charges is therefore a natural way to ask to what extent asymptotic symmetry data are preserved, modified or reorganized under dual formulations of gravity.

The aim of this paper is to carry out a detailed analysis of asymptotic charges for the Curtright field at null infinity, in a framework designed to make contact with the gravitational BMS story in \( D = 5 \). The starting point is considering Bondi patch of Minkowski space-time with coordinates adapted to future null infinity, together with radiation fall-offs motivated by finiteness and non-vanishing of the energy flux. We then fix a de Donder-like gauge condition, supplemented by an on-shell traceless condition in order to isolate the irreducible on shell representation, and thereby reduces the equations of motion to a free wave equation for \( \phi_{[\rho\sigma]\nu} \). This is very similar to what happens in linearized gravity and this prescription could be extended to all mixed symmetry tensors. With these gauge conditions in place, the residual gauge parameters are constrained by differential equations and by the requirement that they preserve the chosen fall-offs. To solve this coupled system in a uniform way across dimensions, the paper assumes polyhomogeneous expansions for the gauge parameters, allowing both half-integer powers and logarithmic terms.

Having identified the relevant residual transformations, the next step is the construction of the corresponding asymptotic charges. The resulting N\"oether charges are surface integrals over \( S^{D-2} \) at fixed retarded time and are expressed in terms of the leading boundary data of the (partial) field strength components with \( ru \)-indices, giving us an electric-like charge. In this respect, the mechanism is closely analogous to what happens in \( p \)-form gauge theories in Lorenz gauge \cite{Manzoni:2024tow}: a gauge condition removes potentially divergent components, the charge becomes electric-like in that it involves radial-null components of the field strength and only angular components of the gauge parameters survive in the final expression. This parallel is conceptually useful because it highlights how mixed-symmetry gauge fields fit into the broader pattern of infrared charges for massless gauge systems while at the same time stressing the new ingredients brought in by Young-projected tensors and by complex gauge-for-gauge redundancy hierarchy.

In \( D = 5 \) we can directly compare to the graviton because of duality. Exploiting Hodge and Hodge-like decompositions on \( S^{3} \) for the symmetric and antisymmetric gauge parameters, the Curtright charge naturally splits into a scalar sector \( Q_{\Phi} \), a vector sector \( Q_{V} \) and a TT sector \( Q_{y^{\text{TT}}} \). The scalar sector is parametrized by a single arbitrary function \( \Phi \) on \( S^{3} \), which is interpreted as the analogue of the supertranslation parameter while the vector sector is parametrized by a vector field \( V^{i} \) that could play the role of superrotations in this setting. However, the presence of the TT sector, interpretable as a higher spin supertranslation imposes $V^i \in \mathfrak{o}(4)$, i.e. a Killing vector, to ensure closure of the algebra. The resulting charge algebra closes as the semidirect sum \( \mathfrak{o}(4) \loplus (C^{\infty}(S^3) \oplus \mathfrak{TT}(S^3)) \) forming an abelian extension of a $\mathfrak{BMS}$-like algebra: the Curtright extension $\mathfrak{CBMS}(S^3)$. Notably, within the boundary conditions and asymptotic gauge-fixing adopted here, one finds only one independent scalar supertranslation-like parameter, rather than two scalar parameters as suggested by certain Hamiltonian treatments \cite{Fuentealba:2022HamiltonianBMS5}. Beyond the specific results for \( D = 5 \) , the analysis presented here fits into a broader programme of clarifying asymptotic symmetries for general massless fields in arbitrary dimensions. 

The paper is structured as follows. Section \ref{sec1}  reviews the Curtright \( (2,1) \) field, its gauge symmetries and gauge-for-gauge redundancy, and sets up the conventions used throughout. Section \ref{sec3} imposes radiation fall-offs in Bondi coordinates, implements the de Donder-like and traceless gauge fixing and solves the resulting residual-gauge system via polyhomogeneous expansions, isolating the gauge-parameter components relevant for finite charges. Section \ref{sec3} constructs the asymptotic charges from N\"oether's 2-form and derives a general expression valid in arbitrary dimension. Section \ref{sec4} specializes to \( D = 5 \) , performs the Hodge/Hodge-like decompositions of gauge parameters on \( S^{3} \) using the conditions $H^1_{\text{dR}}(S^3)=H^2_{\text{dR}}(S^3)=0$, derives the split \( Q = Q_{\Phi} + Q_{V} + Q_{y^{\text{TT}}}\), computes the charge algebra and discusses the comparison with graviton asymptotic charges. The appendices collect complementary material, including the explicit residual-gauge and fall-off equations, the Hodge-like decomposition technology for $(p,q)$-mixed symmetry tensors, technical commutation relations and asymptotic gauge-fixing steps.

\paragraph{Notation for traces, divergences and contractions.} We adopt the following shortcutting notations:
\begin{enumerate}
    \item for the gauge field $\phi^{[\alpha \beta]\nu}$
\begin{equation}
\begin{aligned}
    &\hat{\phi}^{\nu}:=\phi_{\alpha}^{\ \alpha \nu}=\eta_{\alpha \beta}\phi^{[\alpha \beta] \nu}\equiv0, \qquad \bar{\phi}^{\rho}:=\phi_{\alpha}^{\ \rho \alpha}=\eta_{\alpha \beta}\phi^{[\alpha \rho] \beta};\\
    &\hat{\partial} \cdot \phi^{\rho \nu}:=\partial_{\alpha}\phi^{\alpha \rho \nu}=\eta_{\alpha \beta}\partial^{\alpha} \phi^{[\beta \rho] \nu}, \qquad \bar{\partial} \cdot \phi^{\rho \sigma}:=\partial_{\alpha}\phi^{\rho \sigma \alpha}=\eta_{\alpha \beta}\partial^{\alpha}\phi^{[\rho \sigma] \beta},
    \label{notation}
\end{aligned}
\end{equation}
    \item for the symmetric gauge parameter $\lambda^{(\mu \nu)}$
\begin{equation}
    \lambda^\mu{}_\mu:=\eta_{\mu \nu} \lambda^{(\mu \nu)}, \qquad \partial \cdot \lambda^{\mu}:=\eta_{\mu \rho}\partial^{\rho}\lambda^{(\mu \nu)},
\end{equation}
    \item for the antisymmetric gauge parameter $\Lambda^{[\mu \nu]}$
\begin{equation}
    \Lambda^\mu{}_\mu:=\eta_{\mu \nu} \Lambda^{[\mu \nu]}\equiv 0, \qquad \partial \cdot \Lambda^{\mu}:=\eta_{\mu \rho}\partial^{\rho}\Lambda^{[\mu \nu]},
\end{equation}
    \item for the vector gauge-for-gauge parameter $\Theta^{\mu}$
    \begin{equation}
        \partial \cdot \Theta:=\eta_{\mu \rho}\partial^{\rho}\Theta^{\mu}.
    \end{equation}
\end{enumerate}
Similar notations hold for the lower indexes case.

\section{The Curtright three indexes field}\label{sec1}
In order to describe gauge fields the standard way is to use both totally symmetric and totally antisymmetric Lorentz tensors of arbitrary rank. Both types of gauge field tensors appear quite naturally in theoretical analyses of some interesting physical problems such as the description of massless particle fields of spin $s(s+\frac{1}{2})$ using symmetric gauge tensors (spinor-tensors) of rank $s$ or gauge fields naturally coupled to strings described by antisymmetric tensors. However, in 1980 Thomas Curtright generalize the concept of a gauge field to include higher rank Lorentz tensors which are neither totally symmetric nor totally antisymmetric under spacetime index permutations \cite{Curtright:1980un,Curtright:1980yj,Curtright:1980yk}. The basic simplest example is the hook field: a rank 3 tensor whose index permutation symmetry corresponds to the Young diagram $(2,1)$
\begin{equation}
\begin{ytableau}
  \none & & \\
  \none  & \\
\end{ytableau} \ \ .
\end{equation}
The fundamental step is to construct irreducible representations of the permutation group using symmetrizers and antisymmetrizers; in the following we assume to have applied the antisymmetrizer last, obtaining a field with manifest antisymmetry on the first two indexes $\phi_{[\rho \sigma]\nu}$. From the Young diagram of the Curtright three indexes field we can extract the gauge parameter removing one box so that we still have a Young diagram
\begin{equation}
    \begin{ytableau}
  \none & & \\
  \none  & \\
\end{ytableau} \ \ \ \Rightarrow  \begin{ytableau}
  \none & \\
  \none  & \\
\end{ytableau} \ \ \ \ ,  \begin{ytableau}
  \none & & \\
\end{ytableau} \ \ ;
\end{equation}
so we have one totally symmetric gauge parameter $\lambda_{(\mu \nu)}\equiv \lambda_{\mu \nu}$ and one totally antisymmetric gauge parameter $\Lambda_{[\mu \nu]} \equiv \Lambda_{\mu \nu}$. We can note that we can again eliminate a box and still get a Young diagram
\begin{equation}
\begin{aligned}
    &\begin{ytableau}
  \none & \\
  \none  & \\
\end{ytableau}  \ \ \ \Rightarrow  \begin{ytableau}
  \none & \\
\end{ytableau} \ \ ;\\
&\begin{ytableau}
  \none & & \\
\end{ytableau} \ \ \ \Rightarrow \begin{ytableau}
  \none & \\
\end{ytableau} \ \ ;
\end{aligned}
\end{equation}
this is called gauge-for-gauge redundancy: essentially the gauge parameters of a gauge field have their own gauge redundancy, here parametrized by a vector $\Theta_{\mu}$. The most obvious gauge transformation for the Curtright three indexes field is given by
\begin{equation}
    \phi'_{[\rho \sigma] \nu} = \phi_{[\rho \sigma ]\nu}+\partial_{\rho}\lambda_{(\sigma \nu)}-\partial_{\sigma}\lambda_{(\rho \nu)}-\partial_{\nu}\Lambda_{(\rho \sigma)},
\end{equation}
with
\begin{equation}
\begin{aligned}
&\Lambda'_{[\mu \nu]}=\Lambda_{[\mu \nu]}+\partial_{\mu}\Theta_{\nu}-\partial_{\nu}\Theta_{\mu}, \quad \lambda'_{(\mu \nu)}=\lambda_{(\mu \nu)}+3\partial_{\mu}\Theta_{\nu}+3\partial_{\nu}\Theta_{\mu},
\label{gfg}
\end{aligned}    
\end{equation}
but the irreducibility condition under $\mathrm{GL}(D-1,1)$, namely 
\begin{equation}
\phi_{[\rho \sigma]\nu}+\phi_{[\nu \rho]\sigma}+\phi_{[\sigma \nu ]\rho}=0,
\label{irregl}
\end{equation} 
requires adding the term $-2\partial_{\nu}\Lambda_{\sigma \rho}$ to the gauge redundancy of the field $\phi_{(\rho \sigma )\nu}$; therefore we have 
\begin{equation}
    \phi'_{[\rho \sigma] \nu} = \phi_{[\rho \sigma ]\nu}+\partial_{\rho}\lambda_{(\sigma \nu)}-\partial_{\sigma}\lambda_{(\rho \nu)}+\partial_{\rho}\Lambda_{[\sigma \nu]}-\partial_{\sigma}\Lambda_{[\rho \nu]}+2\partial_{\nu}\Lambda_{[\sigma \rho]}.
\end{equation}
We define a field strength
\begin{equation}\label{field}
    H_{[\alpha \beta \gamma]\mu}:=\partial_{\alpha}\phi_{[\beta \gamma]\mu}+\partial_{\beta}\phi_{[\gamma \alpha]\mu}+\partial_{\gamma}\phi_{[\alpha \beta]\mu},
\end{equation}
that is more precisely a "partial field strength" using the nomenclature of \cite{Manzoni:2025zmi} since it is not completely gauge invariant
\begin{equation}
    \delta_{\lambda, \Lambda} H_{[\alpha \beta \gamma]\mu}=-2\partial_{\nu}[\partial_{\alpha}\Lambda_{[\beta \gamma]}+\partial_{\beta}\Lambda_{[ \gamma \alpha]}+\partial_{\gamma}\Lambda_{[\alpha \beta ]}].
\end{equation}
However the contractions $H_{[\alpha \beta \gamma]\mu}H^{[\alpha \beta \gamma]\mu}$ and $H_{\alpha \beta }H^{\alpha \beta }$,
where ${\displaystyle H_{\alpha \beta }:=\eta ^{\gamma \mu }H_{\alpha \beta \gamma \mu }}$, can be combined into a gauge invariat lagrangian density
\begin{equation}
    {\displaystyle \mathcal{L}:=-{\frac {1}{6}}(H_{\alpha \beta \gamma \mu }H^{\alpha \beta \gamma \mu }-3H_{\alpha \beta }H^{\alpha \beta }).}
    \label{laden}
\end{equation}
The Euler-Lagrange equations obtained by varying the lagrangian density are
\begin{equation}
    E_{[\alpha \beta]\gamma}+\frac{1}{2}[\eta_{\alpha \gamma}E_{\beta}-\eta_{\beta \gamma}E_{\alpha}]=0,
\end{equation}
where 
\begin{equation}
    E_{[\alpha \beta] \gamma}:=\eta^{\mu \nu}\partial_{\mu}H_{[\nu \alpha \beta] \gamma}-\eta^{\mu \nu}\partial_{\gamma}H_{[\alpha \beta \mu]\nu}, \qquad E_{\alpha}:=\eta^{\mu \nu}E_{[\alpha \mu]\nu}=2\eta^{\mu \nu}\partial_{\nu}H_{\mu \alpha}.
    \label{eomfs}
\end{equation}
Here we prove some off-shell identities satisfied by the partial field strength \eqref{field}.

\begin{Proposition}[Off-shell identities of $H_{[\alpha\beta\gamma]\mu}$]
The partial field strength $H_{[\alpha\beta\gamma]\mu}$ defined in \eqref{field}
satisfies the off-shell identities 
    \begin{enumerate}[(a)]
        \item \begin{equation}
H_{[\alpha\beta\gamma]\mu}
- H_{[\mu\alpha\beta]\gamma}
+ H_{[\gamma\mu\alpha]\beta}
- H_{[\beta\gamma\mu]\alpha}
=0 ,
\label{lemma1}
\end{equation}
\item \begin{equation}
    \partial_{[\delta} H_{\alpha\beta\gamma]\mu} = 0,
\label{bianchi}
\end{equation}
\item \begin{equation}
    \partial^{\mu} H_{\alpha\beta\gamma\mu}
- \partial_{\alpha} H_{\beta\gamma}
- \partial_{\beta} H_{\gamma\alpha}
- \partial_{\gamma} H_{\alpha\beta}
= 0;
\label{bianchicon}
\end{equation}
\end{enumerate}
\end{Proposition}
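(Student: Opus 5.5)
The three identities are purely algebraic consequences of the definition \eqref{field} of $H_{[\alpha\beta\gamma]\mu}$ as a curl of $\phi$, together with the $\mathrm{GL}(D-1,1)$ irreducibility condition \eqref{irregl}. I would therefore prove them by direct substitution, in the order (a), (b), (c), using (b) to obtain (c).

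For (a), I will substitute \eqref{field} into each of the four terms. This gives twelve terms of the form $\partial\phi$, which I regroup according to the index carried by the derivative. Take the terms containing $\partial_\alpha$. The first term contributes $\partial_\alpha\phi_{[\beta\gamma]\mu}$, the second contributes $-\partial_\alpha\phi_{[\beta\mu]\gamma}$, the third contributes $\partial_\alpha\phi_{[\gamma\mu]\beta}$, and the fourth contributes nothing. Using antisymmetry in the first pair, $-\phi_{[\beta\mu]\gamma}=\phi_{[\mu\beta]\gamma}$, so the bracket becomes
\begin{equation*}
\phi_{[\beta\gamma]\mu}+\phi_{[\mu\beta]\gamma}+\phi_{[\gamma\mu]\beta},
\end{equation*}
which is exactly the cyclic sum in \eqref{irregl} and hence vanishes. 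I then check that the groups multiplying $\partial_\beta$, $\partial_\gamma$ and $\partial_\mu$ reduce in the same way to cyclic sums, possibly after relabelling and with an overall sign. The identity follows.

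For (b), I write $H_{[\alpha\beta\gamma]\mu}=3\,\partial_{[\alpha}\phi_{\beta\gamma]\mu}$, treating $\mu$ as a spectator index. Then $\partial_{[\delta}H_{\alpha\beta\gamma]\mu}$ is proportional to $\partial_{[\delta}\partial_{\alpha}\phi_{\beta\gamma]\mu}$. This vanishes because partial derivatives commute, so the derivative pair is symmetric while the bracket antisymmetrizes it. This step uses only the curl structure and does not need \eqref{irregl}.

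For (c), I expand (b) using the antisymmetry of $H$ in its first three indices. The four-index antisymmetrization reduces to
\begin{equation*}
\partial_\delta H_{\alpha\beta\gamma\mu}-\partial_\alpha H_{\beta\gamma\delta\mu}+\partial_\beta H_{\gamma\delta\alpha\mu}-\partial_\gamma H_{\delta\alpha\beta\mu}=0 .
\end{equation*}
I then contract with $\eta^{\delta\mu}$. The first term gives $\partial^\mu H_{\alpha\beta\gamma\mu}$. In each of the remaining terms I cyclically move $\delta$ into the third slot, which costs no sign for a cyclic permutation of three antisymmetric indices. I then recognize the trace $H_{\alpha\beta}=\eta^{\gamma\mu}H_{\alpha\beta\gamma\mu}$, which is antisymmetric in $\alpha\beta$.

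The only real obstacle is sign bookkeeping, both in the regrouping for (a) and in matching the traced terms in (c) to the relative signs $-\partial_\alpha H_{\beta\gamma}-\partial_\beta H_{\gamma\alpha}-\partial_\gamma H_{\alpha\beta}$ stated in \eqref{bianchicon}. I would handle this by fixing once and for all the convention that $H_{\alpha\beta}$ traces the third index against the fourth, and by verifying each term explicitly. If the signs fail to match, I would instead obtain (c) by contracting (a) with a derivative and combining it with (b).
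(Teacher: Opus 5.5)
Your proposal is correct and follows essentially the paper's proof. For (a) you substitute the definition and cancel using the irreducibility condition; grouping by derivative index is just tidier bookkeeping than the paper's reassembly into other $H$'s, and the $\partial_\beta$, $\partial_\gamma$, $\partial_\mu$ groups do reduce to cyclic sums. Part (b) follows from commuting partial derivatives, and (c) from contracting (b) on $(\delta,\mu)$: with the paper's convention $H_{\alpha\beta}=\eta^{\gamma\mu}H_{\alpha\beta\gamma\mu}$ your expanded form gives exactly the stated signs, so the fallback via (a) is not needed.
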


\begin{proof}
For the identity $\textit{(a)}$, let us use the irreducibility condition for
$\phi_{[\beta\gamma]\mu},\phi_{[\gamma\alpha]\mu},\phi_{[\alpha\beta]\mu}$, namely
\begin{equation*}
\begin{aligned}
&\phi_{[\beta\gamma]\mu}+\phi_{[\mu\beta]\gamma}+\phi_{[\gamma\mu]\beta}=0,\\
&\phi_{[\gamma\alpha]\mu}+\phi_{[\mu\gamma]\alpha}+\phi_{[\alpha\mu]\gamma}=0,\\
&\phi_{[\alpha\beta]\mu}+\phi_{[\mu\alpha]\beta}+\phi_{[\beta\mu]\alpha}=0,
\end{aligned}
\end{equation*}
to rewrite \eqref{field} as
\begin{equation*}
\begin{aligned}
H_{[\alpha\beta\gamma]\mu}
=
\partial_{\alpha}(-\phi_{[\gamma\mu]\beta}-\phi_{[\mu\beta]\gamma})
+\partial_{\beta}(-\phi_{[\alpha\mu]\gamma}-\phi_{[\mu\gamma]\alpha})+\partial_{\gamma}(-\phi_{[\beta\mu]\alpha}-\phi_{[\mu\alpha]\beta}) .
\end{aligned}
\end{equation*}
Using the definition of $H_{[\gamma\mu\alpha]\beta}$ and
$H_{[\alpha\mu\beta]\gamma}$ we have
\begin{equation*}
\begin{aligned}
H_{[\alpha\beta\gamma]\mu}
=
- H_{[\gamma\mu\alpha]\beta}
- H_{[\alpha\mu\beta]\gamma}
+\partial_{\mu}(\phi_{[\beta\alpha]\gamma}+\phi_{[\alpha\gamma]\beta})-\partial_{\beta}\phi_{[\mu\gamma]\alpha}
-\partial_{\gamma}\phi_{[\beta\mu]\alpha}.
\end{aligned}
\end{equation*}
Thanks to the irreducibility condition of $\phi_{[\beta\alpha]\gamma}$ and the definition of
$H_{[\gamma\beta\mu]\alpha}$ we get
\begin{equation*}
H_{[\alpha\beta\gamma]\mu}
=
- H_{[\gamma\mu\alpha]\beta}
- H_{[\alpha\mu\beta]\gamma}
- H_{[\gamma\beta\mu]\alpha},
\end{equation*}
rearranging and using antisymmetry properties we get \eqref{lemma1}.
The identity $\textit{(b)}$ is a consequence of the antisymmetry property of $\phi_{[\alpha \beta]\mu}$ and the commutation of partial derivatives while the identity $\textit{(c)}$ follows from identity $\textit{(b)}$ by contracting on ($\delta, \mu$). 
\end{proof}

In order to discuss asymptotic symmetries at future null infinity of Bondi patch of Minkowski space-time, we introduce Bondi coordinates $(u,r,\{x^i\})$ where $u:=t-r$ and $\{x^i\}$ is a set o $D-2$ angular variables parameterizing the null infinity $(D-2)$-dimensional sphere $S^{D-2}$. Minkowski line element reads
\begin{equation}
    ds^2=-du^2-2dudr+r^2\gamma_{ij}dx^idx^j \ \ \ i,j=1,...,D-2;
\end{equation}
metric and non vanishing Christoffel symbols are given by
\begin{equation}
   g_{\mu \nu} =\begin{bmatrix}
-1 &-1 & 0\\
-1 & 0 & 0\\
0 & 0 & r^2\gamma_{ij}
\end{bmatrix}, \ \ \ 
 g^{\mu \nu} =\begin{bmatrix}
0 & -1 & 0\\
-1 & 1 & 0\\
0 & 0 & \frac{1}{r^2}\gamma_{ij}^{-1}
\end{bmatrix};
\end{equation}
\begin{equation}
    \Gamma^i_{jr}=\Gamma^i_{rj}=\frac{1}{r}\delta^i_j, \ \ \ \Gamma^u_{ij}=-\Gamma^r_{ij}=r\gamma_{ij}, \ \ \ \Gamma^k_{ij}=\frac{1}{2}\gamma^{kl}[-\partial_l\gamma_{ij}+\partial_j\gamma_{li}+\partial_i\gamma_{jl}].
    \label{cribon}
\end{equation}

\begin{comment}
    The equations of motion are the same as before but now with covariant derivative instead of ordinary ones. In Bondi coordinates the gauge transformation and gauge for gauge transformation are expressed in terms of covariant derivatives as
\begin{equation}
\begin{aligned}
   \phi'_{[\rho \sigma] \nu} &= \phi_{[\rho \sigma ]\nu}+\nabla_{\rho}\lambda_{(\sigma \nu)}-\nabla_{\sigma}\lambda_{(\rho \nu)}+\nabla_{\rho}\Lambda_{[\sigma \nu]}-\nabla_{\sigma}\Lambda_{[\rho \nu]}+2\nabla_{\nu}\Lambda_{[\sigma \rho]};\\
\Lambda'_{\mu \nu}&=\Lambda_{\mu \nu}+\nabla_{\mu}\Theta_{\nu}-\nabla_{\nu}\Theta_{\mu};\\
\lambda'_{\mu \nu}&=\lambda_{\mu \nu}+2\nabla_{\mu}\Theta_{\nu}+2\nabla_{\nu}\Theta_{\mu}.
\end{aligned}    
\end{equation}
\end{comment}

\section{The de Donder-like gauge fixing and residual gauge}\label{sec2}
Let us enumerate the independent field components. First of all, note that when the three indexes are all equal the field component is vanishing due to the irreducibility condition; moreover, since the antisymmetry property, field components such that the first two indexes are equal turn out to be vanishing. Furthermore some field components are related to others by antisymmetry or irreducubility property. Below the counting of the independent field components.

\paragraph{Components with $(u,r)$ in the antisymmetric pair.}

By antisymmetry, the non-vanishing components are
\begin{equation}
\phi_{[ur]u}, \qquad
\phi_{[ur]r}, \qquad
\phi_{[ur]i}.
\end{equation}
This gives
$1+1+(D-2)=D$
independent components in this sector.

\paragraph{Components with one angular index in the antisymmetric pair.}

For each $i=1,\dots ,D-2$, we have
\begin{equation}
\begin{aligned}
&\phi_{[ui]u}, \qquad
\phi_{[ui]r}, \qquad
\phi_{[ri]r}, \\
& \qquad \ \ 
\phi_{[ui]j}, \qquad \  
\phi_{[ri]j}.
\end{aligned}
\end{equation}
For fixed $i$, the first row gives $3$ components while the second row $2(D-2)$;  therefore
$3(D-2)+2(D-2)^2$ independent components in this sector.

\paragraph{Components with two angular indices in the antisymmetric pair.}

For $i<j$ the only independent field components are
\begin{equation}
\phi_{[ij]k},  
\end{equation}
which are subject to the restriction obtained from the irreducibility condition,
\begin{equation}
\phi_{[ij]k}
+\phi_{[ki]j}
+\phi_{[jk]i}
=0 ,
\end{equation}
which removes the totally antisymmetric part. Hence their number is \((D-2)\binom{D-2}{2}-\binom{D-2}{3}.
\)

\paragraph{Total number of components.}

Summing all contributions, one finds
\begin{equation}
D + 3(D-2)+2(D-2)^2
+ (D-2)\binom{D-2}{2}-\binom{D-2}{3}
= \frac{D(D^2-1)}{3},
\end{equation}
in agreement with the dimension of the irreducible $\mathrm{GL}(D-1,1)$ representation associated with the Young diagram $(2,1)$. For the specific case of $D=5$ we have
\begin{equation}
    5+9+18+9-1=40.
\end{equation}
Since we are interested in irreducible representation of  $\mathrm{SO}(D-1,1)$ we need to require the vanishing of the unique trace of the Curtright field. However, the traceless condition is not gauge invariant, therefore, as in the graviton case, we are going to require it on-shell ensuring a compatibility condition with other gauge fixing.\\

Let us discuss the gauge and gauge-for-gauge fixing. In order to reduce the equations of motion \eqref{eomfs} to a massless wave equation we have to require that
\begin{equation}
    \mathcal{D}_{\beta \alpha }:=\hat{\partial} \cdot \phi_{ \beta \alpha}-\frac{1}{2}\partial_{\alpha}\Bar{\phi}_{\beta}=0,
    \label{ddg}
\end{equation}
which is a de Donder-like fixing and where we are using notations \eqref{notation}. The equations of motion \eqref{eomfs} can be written as
\begin{equation}
    \Box \phi_{[\alpha \beta]\gamma}+\partial_{\gamma}(\mathcal{D}_{\alpha \beta}-\mathcal{D}_{\beta \alpha})+\partial_{\beta}\mathcal{D}_{\alpha \gamma}-\partial_{\alpha}\mathcal{D}_{\beta \gamma}=0,
    \label{eqde}
\end{equation}
hence, in de Donder-like gauge the equation of motion are 
\begin{equation}
    \Box \phi_{[\alpha \beta ]\gamma}=0;
    \label{eom}
\end{equation}
whose explicit writing for the independent field components is in Appendix \ref{AppB}, Section \ref{B3}. Indeed the non-Box term $E_{[\alpha \beta] \gamma}^{(\not \Box)}$  in the equations of motion is
\begin{equation}
\begin{aligned}
E_{[\alpha \beta] \gamma}^{(\not \Box)}&:=\partial_{\gamma}(\partial_{\alpha}\Bar{\phi}_{\beta}-\partial_{\beta}\Bar{\phi}_{\alpha})+\partial_{\beta}\hat{\partial} \cdot \phi_{\alpha \gamma}-\partial_{\alpha}\hat{\partial} \cdot \phi_{\beta \gamma}-\partial_{\gamma}\Bar{\partial} \cdot \phi_{[\alpha \beta]},
\end{aligned}
\end{equation}
while the non-Box term in equations \eqref{eqde} gives
\begin{equation}
\begin{aligned}
    &\partial_{\gamma}(\hat{\partial} \cdot \phi_{ \alpha \beta}-\frac{1}{2}\partial_{\beta}\Bar{\phi}_{\alpha}-\hat{\partial} \cdot \phi_{ \beta \alpha}+\frac{1}{2}\partial_{\alpha}\Bar{\phi}_{\beta})+\partial_{\beta}(\hat{\partial} \cdot \phi_{ \alpha \gamma}-\frac{1}{2}\partial_{\gamma}\Bar{\phi}_{\alpha})-\partial_{\alpha}(\hat{\partial} \cdot \phi_{ \beta \gamma}-\frac{1}{2}\partial_{\gamma}\Bar{\phi}_{\beta})
\end{aligned}
\end{equation}
which can be rewritten as
\begin{equation}
\begin{aligned}
    \partial_{\gamma}(\hat{\partial} \cdot \phi_{ \alpha \beta}-\hat{\partial} \cdot \phi_{  \beta \alpha})+\partial_{\beta}\hat{\partial} \cdot \phi_{\alpha \gamma}-\partial_{\alpha}\hat{\partial} \cdot \phi_{\beta \gamma}+\partial_{\gamma}(\partial_{\alpha}\Bar{\phi}_{\beta}-\partial_{\beta}\Bar{\phi}_{\alpha})\equiv E_{[\alpha \beta] \gamma}^{(\not \Box)}
\end{aligned}
\end{equation}
using the irreducibility condition to write 
\begin{equation}
    \hat{\partial} \cdot \phi_{ \alpha \beta}-\hat{\partial} \cdot \phi_{  \beta \alpha}=-\Bar{\partial} \cdot \phi_{[\alpha \beta]}.
    \label{barcap}
\end{equation}

The residual gauge after the de Donder gauge fixing is given by gauge parameters that satisfy
\begin{equation}
\begin{aligned}
\Box (\lambda_{\beta\alpha}+ \Lambda_{\beta\alpha})
-\partial_{\beta}\bigl(\partial\!\cdot\!(\lambda_{\alpha}+
\Lambda_{\alpha})\bigr)
-\frac{1}{2}\,\partial_{\alpha}\partial_{\beta}\lambda^{\mu}_{\mu}
+\frac{1}{2}\,\partial_{\alpha}\bigl(\partial\!\cdot\!\lambda_{\beta}\bigr)
-\frac{5}{2}\,\partial_{\alpha}\bigl(\partial\!\cdot\!\Lambda_{\beta}\bigr)=0 .
\end{aligned}
\end{equation} 
As for the graviton field, we now use the residual gauge to fix the traceless condition in order to recover the on-shell irreducible representation. The traceless condition can be imposed requiring
\begin{equation}
\partial_{\beta}\lambda^{\mu}{}_{\mu}
-\partial\!\cdot\!\lambda_{\beta}
+\partial\!\cdot\!\Lambda_{\beta}
= -\bar{\phi}_{\beta} .
\label{tracecon}
\end{equation}
The compatibility condition with the de Donder-like gauge is given by
\begin{equation}
\Box\bigl(\partial\!\cdot\!\Lambda_{\alpha}\bigr)
=\frac{1}{4}\,\partial_{\alpha}
\bigl(\partial_{\beta}\bar{\phi}^{\beta}\bigr);
\label{compatibility}
\end{equation}
Requiring \eqref{tracecon} and \eqref{compatibility} it is enough to reach de Donder-like plus traceless gauge.
In de Donder gauge, the equations of motion for the \((2,1)\) field take the simple form \eqref{eom}.
Contracting indices immediately yields
\begin{equation}
\Box \bar{\phi}^{\beta}=0,
\label{eom_trace}
\end{equation}
and taking an additional divergence gives
\begin{equation}
\Box\bigl(\partial_{\beta}\bar{\phi}^{\beta}\bigr)
=\partial_{\beta}\bigl(\Box\bar{\phi}^{\beta}\bigr)=0.
\label{eom_div_trace}
\end{equation}
Therefore, the source term on the right-hand side of
\eqref{compatibility}
is the gradient of a harmonic function, i.e. well defined.
Hence, the compatibility equation does not require
\(\partial_{\beta}\bar{\phi}^{\beta}\) to vanish identically.
It is sufficient that the field satisfies the equations of motion
\eqref{eom}, which ensure that the source term in
\eqref{compatibility}
is well defined.
In this sense, the compatibility condition is solvable
on-shell, although it does not collapse to an identity, in contrast to what happens in the graviton case. Moreover, from the irreducibility condition \eqref{irregl} we have that the $\bar{\bullet}$-divergence is related to the $\hat{\bullet}$-divergencies, see \eqref{barcap}. Since those ones are vanishing due to the de Donder-like gauge plus the traceless gauge we have also that the field is both $\bar{\bullet}$ and $\hat{\bullet}$ divergenceless. Therefore, also the trace is divergenceless. This is the analogue of what happens in the graviton case. After this gauge fixing the equations of motion \eqref{eomfs} can be written as
\begin{equation}
    \partial^{\nu}H_{[\nu \alpha \beta] \gamma}=0, \quad H_{\nu \alpha}=0.
    \label{eqfix}
\end{equation} 
The residual gauge is then parameterized by 
\begin{equation}
\begin{aligned}
     &\partial_{\beta}\lambda^{\mu}{}_{\mu}
-\partial\!\cdot\!\lambda_{\beta}
+\partial\!\cdot\!\Lambda_{\beta}=0,\\
&\Box\bigl(\partial\!\cdot\!\Lambda_{\alpha}\bigr)=0.
\label{resgauge}
\end{aligned}
\end{equation}

Regarding the gauge-for-gauge, we note that whatever gauge-for-gauge we want to fix, it must be such that also the new gauge parameters, as well as the original ones, must be in the residual gauge in order to not spoil de Dender-like plus traceless gauge and so to be compatible with our gauge fixing. Taking a divergence, from \eqref{gfg}, we have
\begin{equation}
    \partial \cdot \lambda'_{\beta}=\partial \cdot \lambda_{\beta}+3\Box\Theta_{\beta}+3\partial_{\beta}\partial \cdot\Theta, \qquad \partial \cdot \Lambda'_{\beta}=\partial \cdot \Lambda_{\beta}+\Box\Theta_{\beta}-\partial_{\beta}\partial \cdot\Theta;
    \label{divpar}
\end{equation}
moreover
\begin{equation}
    \lambda^{'\mu}{}_{\mu}=\lambda^{\mu}{}_{\mu}+6\partial \cdot \Theta.
    \label{traprar}
\end{equation}
We can ensure that a gauge-for-gauge transformation does not spoil our gauge fixing, i.e,
\begin{equation}
\begin{aligned}
     &\partial_{\beta}\lambda^{'\mu}{}_{\mu}
-\partial\!\cdot\!\lambda'_{\beta}
+\partial\!\cdot\!\Lambda'_{\beta}=0,\\
&\Box\bigl(\partial\!\cdot\!\Lambda'_{\alpha}\bigr)=0,
\label{eqgauge}
\end{aligned}
\end{equation}
by requiring conditions on $\Theta_{\mu}$ given by inserting \eqref{divpar} and \eqref{traprar} in \eqref{eqgauge}
\begin{equation}
\begin{aligned}
    &\partial_{\beta}\lambda^{'\mu}{}_{\mu}
-\partial\!\cdot\!\lambda_{\beta}
+\partial\!\cdot\!\Lambda_{\beta} + 6 \partial_{\beta} \partial \cdot \Theta -3\Box \Theta_{\beta}-3 \partial_{\beta} \partial \cdot \Theta+\Box \Theta_{\beta} -\partial_{\beta} \partial \cdot \Theta=0;\\
&\Box\bigl(\partial\!\cdot\!\Lambda_{\alpha}\bigr)+\Box(\Box \Theta_{\alpha} -\partial_{\alpha} \partial \cdot \Theta)=0;
\end{aligned}
\end{equation}
using that the original gauge parameters are in the residual gauge \eqref{resgauge}, $\Theta_{\mu}$ has to satisfy the following system in order to fix a possible gauge-for-gauge without producing new gauge parameter that are not in the residual gauge 
\begin{equation}
\begin{aligned}
&\Box\Theta_\beta-\partial_\beta(\partial\!\cdot\!\Theta)=0,\\    
&\Box\Bigl(\Box\Theta_\alpha-\partial_\alpha(\partial\!\cdot\!\Theta)\Bigr)=0 ,
\label{comp}
\end{aligned}
\end{equation}
which is redundant, so only the first one is necessary. 

The gauge-for-gauge fixing we require is
\begin{equation}
\lambda'_{ui}=\lambda'_{ri}=\Lambda'_{ui}=\Lambda'_{ri}=0;
\end{equation}
looking to the gauge-for-gauge transformations \eqref{gfg} this means
\begin{equation}
\begin{aligned}
    &\Lambda_{r i}+\partial_{r}\Theta_{i}-\partial_{i}\Theta_{r}=0,\\
&\lambda_{ri}+3\partial_{r}\Theta_{i}+3\partial_{i}\Theta_{r}=0,\\
      &\Lambda_{u i}+\partial_{u}\Theta_{i}-\partial_{i}\Theta_{u}=0,\\
&\lambda_{ui}+3\partial_{u}\Theta_{i}+3\partial_{i}\Theta_{u}=0.
\end{aligned}    
\end{equation}
Solving for the gradient components of $\Theta_{\mu}$ we get 
\begin{align}
\partial_r\Theta_i &= -\frac16\,\lambda_{ri}-\frac12\,\Lambda_{ri},
\label{gfgf1}\\
\partial_i\Theta_r &= -\frac16\,\lambda_{ri}+\frac12\,\Lambda_{ri}, \\
\partial_u\Theta_i &= -\frac16\,\lambda_{ui}-\frac12\,\Lambda_{ui}, \\
\partial_i\Theta_u &= -\frac16\,\lambda_{ui}+\frac12\,\Lambda_{ui} ,
\label{gfgf4}
\end{align}
that fix gradients of some components of $\Theta_{\mu}$.
Now we need to check compatibility with the residual gauge, i.e. with \eqref{comp}, that can be rewritten in terms of an auxiliary object $F_{\mu \nu}:= \partial_{\mu}\Theta_{\nu}-\partial_{\nu}\Theta_{\mu}$ as 
\begin{equation}
    \partial^{\mu}F_{\mu \nu}=0 \quad \Rightarrow \quad \begin{cases}

\partial^\mu F_{\mu r} &= \partial^u F_{u r}+\partial^i F_{i r}=0,\\
\partial^\mu F_{\mu u} &= \partial^r F_{r u}+\partial^i F_{i u}=0,\\
\partial^\mu F_{\mu i} &= \partial^u F_{u i}+\partial^r F_{r i}+\partial^j F_{j i}=0.
    \end{cases}
\end{equation}
Equations \eqref{gfgf1}--\eqref{gfgf4} fix some components of $F_{\mu \nu}$ but we have enough free functions to ensure compatibility.

To study the residual gauge we have also to require some fall-offs condition to preserve. To this goal we study the energy-momentum tensor obtained by varying the action with respect to the metric. Since the lagrangian depends on the metric only through index contractions, the result can be written as
\begin{equation}
T_{\mu\nu}
=
\frac{4}{3}\,
H_{\mu\alpha\beta\gamma} H_{\nu}^{\alpha\beta\gamma}
- 2\,
H_{\mu\alpha} H_{\nu}^{\alpha}
-\frac{1}{6}\,
\eta_{\mu\nu}
\left(
H_{\alpha\beta\gamma\delta} H^{\alpha\beta\gamma\delta}
- 3 H_{\alpha\beta} H^{\alpha\beta}
\right).
\end{equation}
All components of $T_{\mu\nu}$ are therefore quadratic in the partial field strength or its unique trace. Requiring the energy flux to be finite and non-vanishing implies the asymptotic conditions
\begin{equation}
\begin{aligned}
    H_{uriu} \sim \mathcal{O}(r^{-(\frac{D-4}{2})}), \qquad  H_{urir} \sim \mathcal{O}(r^{-(\frac{D-4}{2})}), \qquad  H_{urij} \sim \mathcal{O}(r^{-(\frac{D-6}{2})}).
    \label{leadingHcharge}
\end{aligned}    
\end{equation}
where we have taken into account that in general, for each angular index $i$ we have one $r$ more than the cartesian component since in Bondi coordinates the field carries additional powers of $r$ due to the jacobian of the coordinate transformation. In Bondi coordinates the gauge transformations has to preserve the radiation fall-offs on field components
\begin{equation}
\delta_{\lambda,\Lambda}(\phi_{[\alpha \beta]\gamma})\sim  \begin{cases}
        O\big(r^{-\frac{(D-2)}{2}}\big)& if \ \alpha,\beta, \gamma \notin \{x^i\};\\
        O\big(r^{-\frac{(D-4)}{2}}\big)& if \ \alpha \ or \ \beta \ or \ \gamma \in \{x^i\};\\
        O\big(r^{-\frac{(D-6)}{2}}\big)& if \ \alpha,\beta \ or \ \beta, \gamma \ or \ \alpha, \gamma \in \{x^i\};\\
        O\big(r^{-\frac{(D-8)}{2}}\big)& if \ \alpha,\beta, \gamma  \in \{x^i\}.
    \end{cases}
\end{equation}

We now assume a polyhomogeneous power law expansion for the gauge parameters of the form\footnote{We will be interested in $l$ of the form $\frac{D-n}{2}$ where $n \in \mathbb{Z}$; we have $\frac{D-n}{2}<\frac{D-m}{2}$ if and only if $m<n$ and $\frac{D-n}{2}>\frac{D-m}{2}$ if and only if $m>n$. Given $\frac{1}{r^{\frac{D-n}{2}}}$ the power $\frac{1}{r^{\frac{D-m}{2}}}$ is subleading if and only if $m<n$.}
\begin{equation}
\begin{aligned}
\lambda_{\mu \nu}&=\sum_{l\in \frac{1}{2}\mathbb{Z}}\frac{\lambda_{\mu \nu}^{(l)}(u,\{x^i\})}{r^l}+\frac{\bar{\lambda}_{\mu \nu}^{(l)}(u,\{x^i\})}{r^l}ln(r),\\
\Lambda_{\mu \nu}&=\sum_{l\in \frac{1}{2}\mathbb{Z}}\frac{\Lambda_{\mu \nu}^{(l)}(u,\{x^i\})}{r^l}+\frac{\bar{\Lambda}_{\mu \nu}^{(l)}(u,\{x^i\})}{r^l}ln(r).
\label{expgauge}
\end{aligned}
\end{equation}
Looking at the residual gauge condition equations in  \ref{B1}, preservation of fall-off condition equations in \ref{B2} and equation of motions in \ref{B3} we find the expansions of the gauge parameter components entering in the asymptotic charge
\begin{equation}
\begin{aligned}
\lambda_{ij}&=\frac{\lambda_{ij}^{(\frac{D-8}{2})}(\{x^i\})}{r^{\frac{D-8}{2}}}+\sum_{l>\frac{D-8}{2}}\frac{\lambda_{\mu \nu}^{(l)}(u,\{x^i\})}{r^l}+\frac{\bar{\lambda}_{\mu \nu}^{(l)}(u,\{x^i\})}{r^l}ln(r),\\
\Lambda_{ij}&=\frac{\Lambda_{ij}^{(\frac{D-8}{2})}(\{x^i\})}{r^{\frac{D-8}{2}}}+\sum_{l>\frac{D-8}{2}}\frac{\Lambda_{\mu \nu}^{(l)}(u,\{x^i\})}{r^l}+\frac{\bar{\Lambda}_{\mu \nu}^{(l)}(u,\{x^i\})}{r^l}ln(r).
\label{expgaugelead}
\end{aligned}
\end{equation}
These expansions depend both on the choice of fall-offs and on the gauge fixings considered here.
\section{Asymptotic charges}\label{sec3}
The asymptotic charge relevant for our discussion has its roots in the covariant phase space formalism à la Wald \cite{Lee:1990nz,Ciambelli:2022vot}; we consider the N\"other charge
\begin{equation}
    Q_{\Lambda, \lambda}[S^{D-2}_u]=\lim_{r \rightarrow +\infty}\oint_{S^{D-2}_u} k^{ur}_{\Lambda, \lambda}r^{D-2}d\Omega,
    \label{leewald}
\end{equation}
where we already specialize to the codimension-2 celestial sphere at Minkowski null infinity. In \eqref{leewald}, $k^{\mu \nu}_{\Lambda,\lambda}$ is the N\"other two-form, $S_u^{D-2}$ is the celestial sphere at Minkowski null infinity, and $r^{D-2}d\Omega$ is its integration measure. To compute the N\"other two-form $k^{\mu \nu}_{\Lambda,\lambda}$ we need to compute the variation of the lagrangian density \eqref{laden} 
\begin{equation}
    {\displaystyle \delta \mathcal{L}=-{\frac {1}{3}}[\delta(H_{\alpha \beta \gamma \mu })H^{\alpha \beta \gamma \mu }-3\delta(H_{\alpha \beta })H^{\alpha \beta }]}=:\delta \mathcal{L}_1+\delta\mathcal{L}_2,
    \label{ladenvar}
\end{equation}

\begin{comment}
where
\begin{equation}
\begin{aligned}
    \delta(H_{\alpha \beta \gamma \mu })=&\partial_{\alpha}\delta\phi_{[\beta \gamma]\mu}+\partial_{\beta}\delta\phi_{[\gamma \alpha]\mu}+\partial_{\gamma}\delta\phi_{[\alpha \beta]\mu}= \\ =&+\partial_{\alpha}[\partial_{\beta}\lambda_{(\gamma \mu)}-\partial_{\gamma}\lambda_{(\beta \mu)}-\partial_{\beta}\Lambda_{[\mu \gamma ]}-\partial_{\gamma}\Lambda_{[\beta \mu]}+2\partial_{\mu}\Lambda_{[\gamma \beta]}]+\\
    &+\partial_{\beta}[\partial_{\gamma}\lambda_{(\alpha \mu)}-\partial_{\alpha}\lambda_{(\gamma \mu)}-\partial_{\gamma}\Lambda_{[\mu \alpha ]}-\partial_{\alpha}\Lambda_{[\gamma \mu]}+2\partial_{\mu}\Lambda_{[\alpha \gamma]}]+\\
    &+\partial_{\gamma}[\partial_{\alpha}\lambda_{(\beta \mu)}-\partial_{\beta}\lambda_{(\alpha \mu)}-\partial_{\alpha}\Lambda_{[\mu \beta ]}-\partial_{\beta}\Lambda_{[\alpha \mu]}+2\partial_{\mu}\Lambda_{[\beta \alpha]}]=\\
    =&+2\partial_{\mu}[\partial_{\alpha}\Lambda_{[\beta \gamma]}+\partial_{\beta}\Lambda_{[\gamma \alpha]}+\partial_{\gamma}\Lambda_{[\alpha \beta]}];
\end{aligned}
\end{equation}
and 
\begin{equation}
\begin{aligned}
     \delta(H_{\alpha \beta})=&\eta^{\gamma \mu}\delta(H_{\alpha \beta \gamma \mu })=2\partial^{\gamma}[\partial_{\alpha}\Lambda_{[\beta \gamma]}+\partial_{\beta}\Lambda_{[\gamma \alpha]}+\partial_{\gamma}\Lambda_{[\alpha \beta]}].
\end{aligned}
\end{equation}

\end{comment}
\paragraph{Variation of the first term $\delta \mathcal{L}_1$.}
The variation of the first term reads
\begin{equation}
\delta\mathcal L_1
=-\frac{1}{3}H^{\alpha\beta\gamma\mu}\,
\delta H_{\alpha\beta\gamma\mu}.
\end{equation}
Using the definition \eqref{field} and the antisymmeetry properties of $H^{\alpha\beta\gamma\mu}$ we get
\begin{equation}
H^{\alpha\beta\gamma\mu}\delta H_{\alpha\beta\gamma\mu}
=3H^{\alpha\beta\gamma\mu}\partial_{\alpha}
\delta\phi_{[\beta\gamma]\mu},
\end{equation}
and hence
\begin{equation}
\delta\mathcal L_1
=-H^{\alpha\beta\gamma\mu}\partial_{\alpha}
\delta\phi_{[\beta\gamma]\mu}.
\end{equation}
Integrating by parts,
\begin{equation}
\delta\mathcal L_1
=
\bigl(\partial_{\alpha}H^{\alpha\beta\gamma\mu}\bigr)
\delta\phi_{[\beta\gamma]\mu}
-\partial_{\alpha}\Bigl(
H^{\alpha\beta\gamma\mu}\delta\phi_{[\beta\gamma]\mu}
\Bigr).
\end{equation}

\paragraph{Variation of the trace term $\delta \mathcal{L}_2$.}
The variation of the second term of the Lagrangian reads
\begin{equation}
\delta\mathcal L_2
=H^{\alpha\beta}\delta H_{\alpha\beta}
=H^{\alpha\beta}\partial^{\gamma}\delta\phi_{[\alpha\beta]\gamma}
+H^{\alpha\beta}
\bigl(\partial_{\alpha}\delta\bar\phi_{\beta}
-\partial_{\beta}\delta\bar\phi_{\alpha}\bigr).
\end{equation}
Since $H^{\alpha\beta}$ is antisymmetric, the second term reduces to
\begin{equation}
H^{\alpha\beta}
\bigl(\partial_{\alpha}\delta\bar\phi_{\beta}
-\partial_{\beta}\delta\bar\phi_{\alpha}\bigr)
=2H^{\alpha\beta}\partial_{\alpha}\delta\bar\phi_{\beta},
\end{equation}
and, integrating by parts, we get
\begin{align}
\delta\mathcal L_2
={}&
-\bigl(\partial^{\gamma}H^{\alpha\beta}\bigr)
\delta\phi_{[\alpha\beta]\gamma}
-2\bigl(\partial_{\alpha}H^{\alpha\beta}\bigr)
\delta\phi_{[\beta\gamma]}{}^{\gamma}
+\partial_{\mu}\big(H^{\alpha\beta}\delta\phi_{[\alpha\beta]}{}^{\mu}
+2H^{\mu\beta}\delta\phi_{[\beta\gamma]}{}^{\gamma}\big),
\end{align}
\paragraph{Total variation.} Combining $\delta\mathcal L_1$ and $\delta\mathcal L_2$, the total variation
takes the form
\begin{equation}
\delta\mathcal L
=
\Bigl(
\partial_{\alpha}H^{\alpha\beta\gamma\mu}
-\partial^{\mu}H^{\beta\gamma}
\Bigr)\delta\phi_{[\beta\gamma]\mu}
-2\bigl(\partial_{\alpha}H^{\alpha\beta}\bigr)
\delta\phi_{[\beta\gamma]}{}^{\gamma}
+\partial_{\mu}\\j^{\mu},
\end{equation}
where
\begin{equation}
j^{\mu}
=
-\,H^{\mu\beta\gamma\nu}\delta\phi_{[\beta\gamma]\nu}
+H^{\beta\gamma}\delta\phi_{[\beta\gamma]}{}^{\mu}
+2H^{\mu\beta}\delta\phi_{[\beta\gamma]}{}^{\gamma}.
\end{equation}
\paragraph{N\"oether two-form.}

The N\"oether two-form can be computed by inserting the gauge variation
\begin{equation}
\delta_{\lambda,\Lambda}\phi_{[\rho\sigma]\nu}
=
\underbrace{\partial_{\rho}\lambda_{(\sigma\nu)}-\partial_{\sigma}\lambda_{(\rho\nu)}}_{\delta_\lambda\phi_{[\rho\sigma]\nu}}
+
\underbrace{\partial_{\rho}\Lambda_{[\sigma\nu]}-\partial_{\sigma}\Lambda_{[\rho\nu]}+2\,\partial_{\nu}\Lambda_{[\sigma\rho]}}_{\delta_\Lambda\phi_{[\rho\sigma]\nu}},
\label{fullgaugephi}
\end{equation}
into
\begin{equation}
j^{\mu}
=
-\,H^{\mu\beta\gamma\nu}\,\delta\phi_{[\beta\gamma]\nu},
\label{ThetaDef}
\end{equation}
where we have assumed the de Donder-like plus traceless gauge fixing and so the equation of motion \eqref{eqfix}. We note that since $H^{\mu\beta\gamma\nu}$ is contracted with a (2,1) irreducible tensor, therefore all other representations different from (2,1) on the contracted indices do not enter. This forces a further identity on $H^{\mu\beta\gamma\nu}$ in the determination of $j^{\mu}$
\begin{equation}
H^{\mu\beta\gamma\nu}
+ H^{\mu\gamma\nu\beta}
+ H^{\mu\nu\beta\gamma}
= 0 \, .
\label{id}
\end{equation}
We get, using antisymmetry property on $(\beta,\gamma)$ and identity \eqref{id}
\begin{equation}
\begin{aligned}
j^{\mu}
=&
-2\,H^{\mu\beta\gamma\nu}\,\partial_{\beta}\lambda_{\gamma\nu}
-2\,H^{\mu\beta\gamma\nu}\,\partial_{\beta}\Lambda_{\gamma\nu}
+2\,H^{\mu\beta\gamma\nu}\,\partial_{\nu}\Lambda_{\beta\gamma}=\\
=&-2\,H^{\mu\beta\gamma\nu}\,\partial_{\beta}\lambda_{\gamma\nu}
-6\, H^{\mu\nu\beta\gamma}\,\partial_{\nu}\Lambda_{\beta\gamma}.
\end{aligned}
\label{j:traceless}
\end{equation}
Integrating by parts we get
\begin{equation}
j^{\mu}=\partial_{\alpha}k^{\mu\alpha}+R^{\mu} \, ,
\label{j:split}
\end{equation}
where
\begin{equation}
\begin{aligned}
k^{\mu\alpha}
&:=
-2\,H^{\mu\alpha\gamma\beta}\lambda_{\gamma\beta}
-6\,H^{\mu\alpha\beta\gamma}\Lambda_{\beta\gamma}
\end{aligned}
\label{def:U}
\end{equation}
while the residual term \(R^\mu\) is
\begin{equation}
\begin{aligned}
R^{\mu}
&:=
2\,(\partial_{\beta}H^{\mu\beta\gamma\nu})\,\lambda_{\gamma\nu}
+6\,(\partial_{\nu}H^{\mu\nu\beta\gamma})\,\Lambda_{\beta\gamma},
\end{aligned}
\label{def:R0}
\end{equation}
that is vanishing due to equations of motion \eqref{eqfix}. In particular, on-shell, we get
\begin{equation}
j^\mu\approx
\partial_{\alpha}k^{[\mu\alpha]}.
\label{ThetaFullOnShell}
\end{equation}

The charge is 
\begin{equation}
\begin{aligned}
Q_{\Lambda, \lambda}[S^{D-2}_u]&=-2\lim_{r \rightarrow +\infty}\oint_{S^{D-2}_u} k^{ur}_{\Lambda, \lambda}r^{D-2}d\Omega\\
&=-2\lim_{r \rightarrow +\infty}\oint_{S^{D-2}_u}H^{u r i j}\bigl(\lambda_{(i j)}+3\,\Lambda_{[i j]}\bigr)r^{D-2}d\Omega=\\
&=-2\lim_{r \rightarrow +\infty}\oint_{S^{D-2}_u}g^{u \alpha}g^{r\beta}g^{i\gamma}g^{j\delta}H_{\alpha \beta \gamma \delta}\bigl(\lambda_{(i j)}+3\,\Lambda_{[i j]}\bigr)r^{D-2}d\Omega=\\
&=-2\lim_{r \rightarrow +\infty}\oint_{S^{D-2}_u}g^{u r}g^{r\beta}g^{i\tilde{i}}g^{j\tilde{j}}H_{r \beta \tilde{i} \tilde{j}}\bigl(\lambda_{(i j)}+3\,\Lambda_{[i j]}\bigr)r^{D-2}d\Omega=\\
&=-2\lim_{r \rightarrow +\infty}\oint_{S^{D-2}_u}\gamma^{i\tilde{i}}\gamma^{j\tilde{j}}H_{r u \tilde{i} \tilde{j}}\bigl(\lambda_{(i j)}+3\,\Lambda_{[i j]}\bigr)r^{D-6}d\Omega,
\end{aligned}
\end{equation}
since $g^{ij}=\frac{\gamma^{ij}}{r^2}$. Let us note here that the gauge condition $\hat{\nabla} \cdot \phi_{ \beta \alpha}=0$, which is the result of $\mathcal{D}_{\alpha \beta}=0$ plus the traceless gauge, implies that $\partial_u\phi_{rij}^{(\frac{D-6}{2})}=0$; therefore the leading order of $H_{ruij}$ is $\mathcal{O}(r^{-(D-4)/2})$. Inserting the leading orders \eqref{expgaugelead} we get a finite $O(1)$, non-vanishing charge
\begin{equation}
    Q_{\Lambda, \lambda}[S^{D-2}_u]= -2\oint_{S^{D-2}_u}\gamma^{i\tilde{i}}\gamma^{j\tilde{j}}H^{\big(\frac{D-4}{2}\big)}_{r u \tilde{i} \tilde{j}}\bigg(\lambda^{\big(\frac{D-8}{2}\big)}_{(i j)}+3\,\Lambda^{\big(\frac{D-8}{2}\big)}_{[i j]}\bigg)d\Omega
    \label{chargefin}
\end{equation}
since $-\frac{D-4}{2}-\frac{D-8}{2}+D-6=0$.

This form of the charge resemble the one of $p$-form gauge fields and, in fact, the mechanisms leading to charge \eqref{chargefin} are very similar to those entering in the $p$-form charge \cite{manz2, Manzoni:2024tow}: gauge condition that shut off the field components that can make the charge divergent, the entering of the partial field strength\footnote{In the case of $p$-forms is the true field strength that enter in the charge.} components with $ru$-indices (hence an electric-like charge), only angular components of the gauge parameters entering in the charge. Moreover, let us note that the gauge condition fixed here is essentially the same fixed for $p$-forms once we take into account the indices structure of the field. Indeed, for a $p$-form, there is no non-trivial trace and the de Donder-like gauge fixing reduces to Lorenz-like gauge fixing since there is only one independent divergence for a $p$-form field. \\
There are so at least two possible guesses we can made with a relative degree of safety: the form of the charge for a hook field ($p,1$) with $p>1$ and the path to follows for derive finite $O(1)$, non-vanishing charge for a generic mixed symmetry tensor gauge field. 
\paragraph{Asymptotic charges for ($p,1$)-hook field.}
The guess follows from a straightforward merging of the technicalities in deriving the charge \eqref{chargefin} and in generalizing Maxwell field to the $p$-form case \cite{manz2,Esmaeili:2020eua, Afshar:2018apx, Manzoni:2024tow}. The ($p,1$) hook field $\phi_{[\mu_1,...\mu_{p}]\nu}$ has only one trace (by contracting the last index with one of the first $p$ antisymmetric ones) and two independent divergencies (the one where the derivative contacts with the last index and the one where contracts with one of the antisymmetric indices). Therefore, we are in a situation very similar to the case of the $(2,1)$ field with $p$-antisymmetric indices like the case of a $p$-form field. The guess is that the charge is of the form
\begin{equation}
    Q^{(p,1)}[S^{D-2}_u]= \#_1\oint_{S^{D-2}_u}\gamma^{i_1\tilde{i}_1}...\gamma^{i_{p}\tilde{i}_{p}}H^{\big(a(D,p)\big)}_{r u \tilde{i}_1 ... \tilde{i}_{p}}\bigg(\lambda^{\big(b(D,p)\big)}_{[{i}_1 ... {i}_{p-1}]i_p}+\#_2\,\Lambda^{\big(b(D,p)\big)}_{[{i}_1 ... {i}_{p}]}\bigg)d\Omega,
    \label{chargefingen}
\end{equation}
where $a(D,p):=\frac{D-2(p+1)+2}{2}$, $b(D,p):=\frac{D-(2(p+1)+2)}{2}$ and $\#_1,\#_2$ are two constants which depends critically form the coefficients in the lagrangian. 
\paragraph{Asymptotic charges for general mixed symmetry tensor fields.} The asymptotic charges of a general mixed symmetry tensor gauge field with $N$-families of indices that carry the irreducible representation $(p_1,...,p_N)$ could be computed following the same footsteps. A tensor field of this type will be denoted by
\begin{equation}
    \phi_{A^{(1)}\, A^{(2)} \cdots A^{(N)}} \, , 
\qquad 
A^{(i)} := [a^{(i)}_1 \cdots a^{(i)}_{p_i}] \, ,
\end{equation}
meaning
\begin{equation}
    \phi_{A^{(1)}\,\cdots\,A^{(N)}} 
= \phi_{[a^{(1)}_1 \cdots a^{(1)}_{p_1}] \, \cdots \, [a^{(N)}_1 \cdots a^{(N)}_{p_N}]} \, .
\end{equation}
This field coincides with our $\phi_{[\alpha \beta],\nu}$ when $N=2$ and $p_1=2$, $p_2=1$. Since in general we have more than 2 divergences and 1 trace here we use the notation $\mathrm{div}_i \phi$ for the divergence on the column $i$ and $\mathrm{tr}_{(i,j)} \phi$ for the trace between the columns $i$ and $j$.
Besides antisymmetry in each column, Young irreducibility has to be imposed: for each $i=1,\dots,N-1$ and for any choice of one index $j$ taken from the $(i+1)$-th column, the total antisymmetrization over the whole $i$-th column together with $j$ vanishes
\begin{equation}
\phi_{A^{(1)}...\,[A^{(i)} \, j] \, , \, A^{(i+1)} \setminus j \, , \, A^{(i+2)} \cdots A^{(N)}} = 0 \, .
\label{young}
\end{equation}
Here $A^{(i+1)} \setminus j$ denotes the $(i+1)$-th family with the index $j$ removed and the antisymmetrization bracket in runs over $p_i+1$ indices (the whole $i$-th family plus $j$).\\ 
The key point is that we can expresses divergences $\mathrm{div}_{i+1}\phi$ on column $(i+1)$ as linear combinations of divergences $\mathrm{div}_{i}\phi$ on column $i$. Iterating from the rightmost column to the left, one obtains a chain of reductions
\begin{equation}
    \mathrm{div}_{N}\phi \;\longrightarrow\; \mathrm{div}_{N-1}\phi \;\longrightarrow\; \cdots \;\longrightarrow\; \mathrm{div}_{1}\phi \, .
\end{equation} 
To see the mechanism, fix an adjacent pair $(i,i+1)$ and select an index $j$ appearing in $A^{(i+1)}$ . The irreducibility condition will contains a term of the form $\phi_{A^{(1)}...[a^{(i)}_1 \cdots a^{(i)}_{p_i}] [j \cdots a^{(i+1)}_{p_{i+1}}]...A^{(N)}} $ and terms of the form $\phi_{A^{(1)}...[a^{(i)}_1 \cdots j] [a^{(i)}_{p_i} \cdots a^{(i+1)}_{p_{i+1}}]...A^{(N)}}$. Taking the divergence with respect to $j$ will produce a relation between $\mathrm{div}_{i}\phi$ and $\mathrm{div}_{i+1}\phi$. In this precise sense, for a Young--irreducible tensor in column convention, the divergences taken on later families are not independent: they can be rewritten in terms of divergences taken on earlier families, ultimately in terms of $\mathrm{div}_1\phi$.\\ 
Similarly, for the traces
\begin{equation}
    \mathrm{tr}_{(i,j)} \phi \;\longrightarrow\; 
\mathrm{tr}_{(i,j-1)} \phi \;\longrightarrow\; \cdots \;\longrightarrow\;
\mathrm{tr}_{(i,i+1)} \phi \, ,
\end{equation}
namely any trace between non-adjacent columns can be rewritten as a linear combination of traces between closer columns, ultimately reducing to traces between adjacent columns (and, if desired, to traces involving a preferred column). \\
The crucial step now will be to reduce the equations of motion to an homogeneous wave equation rewriting all the non-$\Box$ terms in terms of a single divergence and a single trace, for example $\mathrm{div}_N\phi$ and $\mathrm{tr}_{(N-1,N)}\phi$, and fixing a de Donder-like gauge to cancel them. This produce residual gauge conditions and we use the residual gauge to eliminate the trace $\mathrm{tr}_{(N-1,N)}\phi$ so that all the traces will vanish and the field will be divergences-free. The next step is the gauge-for-gauge fixing such that only gauge parameters with angular indices enter in the charge. 

The two paragraphs above are mainly speculative and a deeper understanding of these technicalities and the effectiveness of these guesses is one of the next point in the agenda for future works.

\section{Connections with the graviton asymptotic symmetries in $D=5$}\label{sec4}
Since in $D=5$ the graviton and the Curtright field are dual, our scope is to connect the Curtright field asymptotic charge to supertranslation and superrotation charges of the graviton. The $D=5$ charge for the Curtright field reads
\begin{equation}
    Q_{\Lambda, \lambda}[S^{3}_u]= \oint_{S^{3}_u}\gamma^{i\tilde{i}}\gamma^{j\tilde{j}}H^{\big(\frac{1}{2}\big)}_{r u \tilde{i} \tilde{j}}\bigg(\lambda^{\big(-\frac{3}{2}\big)}_{(i j)}+3\,\Lambda^{\big(-\frac{3}{2}\big)}_{[i j]}\bigg)d\Omega;
    \label{chargefinD5}
\end{equation}
the two gauge parameters can be decomposed using Hodge and Hodge-like decompositions which are additional geometric/topological inputs derived from $H^1_{\text{dR}}(S^3)=H^2_{\text{dR}}(S^3)=0$:
\begin{itemize}
 \item Antisymmetric parameter. The antisymmetric parameter can be consider as a 2-form on $S^3$; since $H^2_{\text{dR}}(S^3)=0$ we have no harmonic part
     \begin{equation}
        \Lambda_{[ij]}^{\big(-\frac{3}{2}\big)} = D_{[i} W_{j]}+\epsilon_{ijk}D^k\Phi;
    \end{equation}
    \item Symmetric parameter. The symmetric parameter can be considered as a (1,1) mixed symmetry tensor obtained by the Young projector $\Pi_{(1,1)}$ acting of the bi-form space on $S^3$. Since $H^1_{\text{dR}}(S^3)=0$ we have
\begin{equation}\label{eq:decomp-sym2}
\begin{aligned}
  \lambda^{\big(-\frac{3}{2}\big)}_{(ij)} 
=
-D^{k} B_{k(ij)}
+D_{(i} C_{j)}
+D_{i}D_{j}\,\phi 
-D^{k}D_{(i}\,\Xi_{j)k}
+D^{k}D^{\ell}\,Y_{k i \ell j} .
\end{aligned}
\end{equation}
    where $Y_{lisj}$ carries the same irreducible representation of the Riemann tensor and $S_{ij}:=D^{k}D^{\ell}\,Y_{k i \ell j}$ is symmetric since its antisymmetric part is proportional to the antisymmetric part of the "Ricci" tensor associated to $Y_{kils}$ that is vanishing (see Appendix \ref{AppD}, Section \ref{s}). Moreover, $B$ has to be vanishing. Indeed the $(2) \otimes (1) = (2,1) \oplus (3)$ and both terms vanishes\footnote{For the 3-form is obvious, while for the $(2,1)$ we would have 
\begin{equation*}\label{eq:B-step1}
B_{kij}=B_{kji}=-B_{jki}.
\end{equation*}
Substituting this into the irreducibility condition yields
\begin{equation*}\label{eq:B-step2}
0=B_{kij}+B_{ijk}+B_{jki}
= B_{kij}+B_{ijk}-B_{kij}
= B_{ijk}.
\end{equation*}} requiring a simmetrization on $(i,j)$. 
\end{itemize}
Let us define $\mathcal{H}^{ij} := \gamma^{i\tilde{i}}\gamma^{j\tilde{j}}H^{\big(\frac{1}{2}\big)}_{ru\tilde{i}\tilde{j}}$ that represents the leading-order dynamical components of the Curtright field at the boundary; performing an asymptotic gauge fixing (see Appendix \ref{AppD}, Section \ref{agfa}) to get $D^{\tilde{i}}H^{\big(\frac{1}{2}\big)}_{ru\tilde{i}\tilde{j}}=0$, we have
\begin{equation}
D_i\mathcal{H}^{ij}=D_i\gamma^{i\tilde{i}}\gamma^{j\tilde{j}}H^{\big(\frac{1}{2}\big)}_{ru\tilde{i}\tilde{j}}=\gamma^{j\tilde{j}}D^{\tilde{i}}H^{\big(\frac{1}{2}\big)}_{ru\tilde{i}\tilde{j}}=0.
    \label{gaugefix0}
\end{equation} 
 Moreover by the equations of motion \eqref{eqfix} we also have that
\begin{equation}
    \mathcal{H}:=\gamma_{ij}\mathcal{H}^{ij} \approx 0.
    \label{trace=0}
\end{equation}
The charge reduces, upon integration by parts, to
\begin{equation}
    Q_{\Lambda, \lambda}[S^{3}_u] = \oint_{S^3} \bigg(\Phi\epsilon_{ijk}D^k\mathcal{H}^{ij} + Y_{ki\ell j}D^{k}D^{\ell}\mathcal{H}^{ij}\bigg)d\Omega=:Q_{\Phi}+Q_Y.
\end{equation}  
Let us better study the charge $Q_Y$. To this goal we define $v^{i} := D_{j}\mathcal{H}^{ij}$ and the "Ricci" tensor and scalar associated to $Y_{ki\ell j}$
\begin{equation}
y_{ij} := \gamma^{k\ell }Y_{ki\ell j} \ , \qquad y := \gamma^{ij}y_{ij} \ .
\end{equation}
A tensor with the same symmetries of the Riemann tensor is completely determined by its "Ricci" tensor and scalar in 3 dimensions, hence $Q_Y$ can be completely rewritten it terms of $y_{ij}$ and $y$ as
\begin{equation}
Q_{Y}
=
\int_{S^{3}}
\Big[
y_{ij}\,\Delta \mathcal{H}^{ij}
- y_{i\ell}\,D_{j}D^{\ell}\mathcal{H}^{ij}
+\frac{y}{2}\,D_{i}v^i
\Big]\,
d\Omega \ .
\end{equation}
Using the commutation relation in Appendix \ref{AppD}, \eqref{comm2} and \eqref{comm3} with $K=1$ and the fact that $y_{ij}$ is symmetric we get
\begin{equation}
Q_{Y}
=
\int_{S^{3}}
\Big[
y_{ij}\,(\Delta-3)\,\mathcal{H}^{ij}
- y_{i\ell}\,D^{\ell}v^{i}
\Big]\,
d\Omega=\int_{S^{3}}
y_{ij}\,
\Big[
(\Delta-3)\,\mathcal{H}^{ij}-D^{j}v^{i}
\Big]\,
d\Omega  \ .
\label{cargey}
\end{equation}
A convenient parametrization of a $y_{ij}$ on $S^{3}$ is the York decomposition
\begin{equation}
\begin{aligned}
y_{ij}
=&\ \frac13\,\gamma_{ij}A
+\Big(D_{i}D_{j}-\frac13\,\gamma_{ij}\Delta\Big)B
+\Big(D_{(i}V_{j)}-\frac13\,\gamma_{ij}\,D^kV_k\Big) + y_{ij}^{\text{TT}},
\label{decy}
\end{aligned}
\end{equation}
where $V_{i}$ is a vector field on $S^{3}$ and $y_{ij}^{\text{TT}}$ is transverse-traceless. Using \eqref{decy} into \eqref{cargey} and integrate by parts on $S^{3}$ we have
\begin{equation}
Q_{Y}
=
\int_{S^{3}}
\Big(
A\mathcal{J}_{A}
+B\mathcal{J}_{B}
+V_{i}\,\mathcal{J}_{V}^{\,i}+y_{ij}^{\text{TT}}\mathcal{J}^{ij}_{y^{\text{TT}}}
\Big)\,
d\Omega \ ,
\label{}
\end{equation}
with currents
\begin{equation}
\begin{aligned}
&\mathcal{J}_{A}
:=
\frac13\,\gamma_{ij}\bigg[(\Delta-3)\,\mathcal{H}^{ij}-D^{j}v^{i}\bigg],\\
&\mathcal{J}_{B}
:=
-\Big(D_{i}D_{j}-\frac13\,\gamma_{ij}\Delta\Big)\bigg[(\Delta-3)\,\mathcal{H}^{ij}-D^{j}v^{i}\bigg], \\
&\mathcal{J}_{V}^{\,i}
:=
-\Bigg(D_{j}\bigg[(\Delta-3)\,\mathcal{H}^{(ij)}-D^{(j}v^{i)}\bigg]-\frac13\,D^{i}\bigg(\gamma_{k\ell}\bigg[(\Delta-3)\,\mathcal{H}^{k\ell}-D^{\ell}v^{k}\bigg]\bigg)\Bigg), \\
&J^{ij}_{y^{\text{TT}}}:=(\Delta-3)\,\mathcal{H}^{(ij)}
\end{aligned}
\end{equation}
Under relations \eqref{trace=0}, \eqref{gaugefix0} and \eqref{comm3}
\begin{equation}
    \mathcal{J}_A\approx 0, \quad \mathcal{J}_B\approx 0,  \quad \mathcal{J}^i_V \approx -D_{j}\big((\Delta-3)\,\mathcal{H}^{(ij)}\big)+\Delta v^i +v^i.
\end{equation}

In the end, the Curtright charge in $D=5$ is 
\begin{equation}
\begin{aligned}
Q_{\Lambda, \lambda}[S^{3}_u]=Q_{\Phi}+Q_V+Q_{y^\text{TT}},
    \end{aligned}
\end{equation}  
where 
\begin{equation}
\begin{aligned}
   & Q_{\Phi}:=\oint_{S^3}\Phi\epsilon_{ijk}D^k\mathcal{H}^{[ij]}d\Omega,\\
   &Q_V:=-\oint_{S^3} V_i\bigg(D_{j}\big((\Delta-3)\,\mathcal{H}^{(ij)}\big)-\Delta v^i -v^i\bigg)d\Omega, \\
   &Q_{y^\text{TT}}:=\oint_{S^3}y_{ij}^\text{TT}(\Delta-3)\,\mathcal{H}^{(ij)}d\Omega.
\end{aligned}    
\end{equation}
This splitting of the charge is due to our de Donder plus traceless gauge fixing, to equations of motion \eqref{eqfix} and global geometric/topological conditions such as $H^1_{\text{dR}}(S^3)=H^2_{\text{dR}}(S^3)=0$ which allow Hodge and Hodge-like decompositions without harmonic components. We have so a charge whose parameter is the scalar arbitrary function $\Phi \in C^{\infty}(S^3)$, a charge whose parameter is a vector field $V^i  \in \mathfrak{Diff}(S^3)$ on $S^3$ and a charge whose parameter is a rank-2 transverse-traceless tensor $y_{ij}^\text{TT} \in \mathfrak{{TT}}(S^3)$ on $S^3$. Moreover, the charge $Q_{\Phi}$ can be rewritten using exterior calculus as
\begin{equation}
    Q_{\Phi}=2\oint_{S^3}\Phi \wedge d\mathcal{H}
\end{equation}
where we used the relation\footnote{This relation can be shown starting from 
$de^i + \omega^i{}_k \wedge e^k \;=\; 0 \,$ and using that $\mathcal{H} \;:=\; \frac12\,\mathcal{H}_{ij}\,e^i \wedge e^j \, $ since in $Q_{\Phi}$ only the antisymmetric part of $\mathcal{H}^{ij}$ is relevant. By explicit valuation of $d\mathcal{H}$ and using that on $D=3$ we have $e^k \wedge e^i \wedge e^j \;=\; \varepsilon^{kij}\,d\Omega$, we find $d\mathcal{H}
\;=\;
\frac12\,\varepsilon^{kij}\,D_k\mathcal{H}_{ij}\,d\Omega.$} $2d\mathcal{H}
\;=\;
\varepsilon_{ijk}\,D^k \mathcal{H}^{[ij]}\,d\Omega.$

\begin{comment}

% Calcolo di d\mathcal{H}:
\[
d\mathcal{H}
= \frac12\,d(\mathcal{H}_{ij}) \wedge e^i \wedge e^j
+\frac12\,\mathcal{H}_{ij}\,de^i \wedge e^j
-\frac12\,\mathcal{H}_{ij}\,e^i \wedge de^j \, .
\]
% Usando (1), cioè de^i = -\omega^i{}_k \wedge e^k, si ottiene
\[
d\mathcal{H}
= \frac12\,d(\mathcal{H}_{ij}) \wedge e^i \wedge e^j
-\frac12\,\mathcal{H}_{ij}\,\omega^i{}_k \wedge e^k \wedge e^j
+\frac12\,\mathcal{H}_{ij}\,e^i \wedge \omega^j{}_k \wedge e^k \, .
\]
% Riordinando e riconoscendo la derivata covariante di \mathcal{H}_{ij},
\begin{equation}
D\mathcal{H}_{ij}
\;:=\;
d\mathcal{H}_{ij} + \omega_i{}^{k}\,\mathcal{H}_{kj} + \omega_j{}^{k}\,\mathcal{H}_{ik}
\;=\;
(D_k\mathcal{H}_{ij})\,e^k \, ,
\tag{3}
\end{equation}
% si ottiene
\begin{equation}
d\mathcal{H}
\;=\;
\frac12\,(D_k\mathcal{H}_{ij})\,e^k \wedge e^i \wedge e^j \, .
\tag{4}
\end{equation}
\end{comment}

Let us now compute the charge algebra which is  defined by the relations 
\begin{equation}
\begin{aligned}
   & \{Q_{\Phi_1}, Q_{\Phi_2}\}= \delta_{\Phi_1} Q_{\Phi_2}, \quad \{Q_{V}, Q_{W}\} = \delta_{V} Q_{W}, \quad \{Q_{V}, Q_{\Phi}\} = \delta_{V} Q_{\Phi}, \\
   &\{Q_{y^\text{TT}}, Q_{\Phi}\}= \delta_{y^\text{TT}} Q_{\Phi}, \quad \{Q_{V}, Q_{y^\text{TT}}\} = \delta_{V} Q_{y^\text{TT}}, \quad \{Q_{y_1^\text{TT}}, Q_{y_2^\text{TT}}\} = \delta_{y_1^\text{TT}} Q_{y_2^\text{TT}}.
\end{aligned}
\end{equation}

\begin{enumerate}
    \item The computation of $\delta_{\Phi_1} Q_{\Phi_2}$ reads
    \begin{equation}
        \delta_{\Phi_1} Q_{\Phi_2}=2\oint_{S^3}\Phi_2 \wedge d(\delta_{\Phi_1}\mathcal{H})=0,
    \end{equation}
    since the gauge variation of $H_{ru\tilde{i}\tilde{j}}$ contains only $\Lambda_{ru}$ so it does not contain $\Phi$. For similar reasons also $\delta_{y^\text{TT}} Q_{\Phi}=0$ and $\delta_{y_1^\text{TT}} Q_{y_2^\text{TT}}=0$ since the gauge variation does not contains these pieces of the parameters.
    \item For the computation of $\delta_{V} Q_{W}$ we have
    \begin{equation}
    \begin{aligned}
        \delta_{V} Q_{W}&=-\oint_{S^3}W_i\delta_{V}(\mathcal{J}^id\Omega)=\oint_{S^3}W_i(\mathcal{L}_{V}(\mathcal{J}^id\Omega))=-\oint_{S^3}(\mathcal{L}_{V}W_i)\mathcal{J}^id\Omega\\
        &=Q_{[V,W]}.
    \end{aligned}
    \end{equation}
    where $\mathcal{L}_V$ is the Lie derivative with respect to $V$. By a similar computation we get also
    \begin{equation}
        \delta_{V} Q_{y^\text{TT}}=Q_{\mathcal{L}_V(y^{\text{TT}})},
    \end{equation}
    however, to ensure that $\mathcal{L}_V(y^{\text{TT}})$ is itself transverse-traceless we need $V$ to be a Killing vector, i.e. an isometry of $S^3$. Therefore, to close the algebra we need $V^i \in \mathfrak{o}(4)$. 
    \item Let us compute $\delta_V Q_\Phi$:
\begin{equation}
    \delta_V Q_\Phi = \oint_{S^3} \Phi \wedge d(\delta_V \mathcal{H}) =  \oint_{S^3} \Phi \wedge d(-\mathcal{L}_V \mathcal{H})= \oint_{S^3} \mathcal{L}_V(\Phi) \wedge d(\mathcal{H})=Q_{\mathcal{L}_V(\Phi)}
\end{equation}
where we used that the variation along a vector field of a tensor is its Lie derivative with respect to the vector field, that Lie derivative and exterior derivative commute and integrating by parts. Hence,
\begin{equation}
    \{Q_{V}, Q_{\Phi}\}=Q_{\mathcal{L}_V(\Phi)}
\end{equation}
and we have a semi-direct sum algebra.
\end{enumerate}
The final algebra is 
\begin{equation}
    \mathfrak{CBMS}(S^3):=\mathfrak{o}(4) \loplus (C^{\infty}(S^3)\oplus \mathfrak{TT}(S^3)),
    \label{alg}
\end{equation}
that resemble $\mathfrak{BMS}$ algebra where supertranslation-like transformations are parameterized by $\Phi$ while rotation transformations are parameterized by $V_i$. The Curtrigth-BMS algebra \eqref{alg} forms an abelian extension of a $\mathfrak{MBS}$-like algebra; however by Lie algebra cohomology this extension can be only trivial, in the sense that the extension splits as a semidirect sum. The $\mathfrak{TT}(S^3)$ term can be interpreted as a higher spin-like supertranslation since give rise to an abelian ideal that is a module for $\mathfrak{o}(4)$.\\
Despite this, within the boundary conditions and asymptotic gauge fixing adopted here, the leading charge analysis yields a single independent scalar supertranslation-like parameter. This should be viewed as a feature of the present framework, rather than as a definitive exclusion of a second scalar mode suggested in other approaches, such as Hamiltonian treatments. The missing second scalar could be suppressed by the specific Curtright gauge-fixing conditions used to define $\mathcal{H}^{ij}$. A smarter choice of gauge fixing or gauge-for-gauge fixing, as well as an analysis containing subleading or logarithmic expansion terms could be the right path to unveil the supertranslations structure. Accordingly, our analysis does not rule out the existence of an additional scalar parameter in a different asymptotic setup. Rather, it indicates that such a mode is not captured by the specific boundary conditions, gauge choices and leading-order truncation adopted in the present work. Moreover, in order to obtain extended or generalized Curtrigth versions of $\mathfrak{BMS}$ algebra we probably need to relax boundary constraints, much like of what happens in $D=4$. In the following table we make a comparison between the graviton and the Curtright field in $D=5$.

\renewcommand{\arraystretch}{1.25}

\begin{longtable}{|p{3.2cm}|p{4.4cm}|p{6.2cm}|}
\caption{Comparison between the graviton and the Curtright field in \( D=5 \).}
\endfirsthead
\hline
\textbf{Aspect} & \textbf{Graviton $h_{(\mu \nu)}$} & \textbf{Curtright field $\phi_{[\mu \nu]\rho}$} \\
\hline
Off shell representation & Symmetric tensor of \( \mathrm{GL}(1,4) \) with Young diagram (1,1)
&  Mixed symmetry of \( \mathrm{GL}(1,4) \) with Young diagram (2,1)\\
\hline
On shell (Lorentz) representation & Symmetric traceless tensor of \( \mathrm{SO}(3) \) with Young diagram (1,1)
&  Mixed symmetry traceless of \( \mathrm{SO}(3) \) with Young diagram (2,1)\\
\hline
Index symmetry &  Symmetric rank--2 tensor
\[
h_{\mu \nu}=h_{\nu \mu}
\]
&  Mixed-(2,1)-symmetry rank--3 tensor
\[
\phi_{[\mu \nu]\rho}=-\phi_{[\nu \mu]\rho}, \qquad
\phi_{[\mu \nu \rho]}=0
\]\\
\hline

Gauge parameter & One gauge parameter
\[
\xi_{\mu}
\]
with
\[
\delta_{\xi} h_{\mu \nu}=\partial_{\mu}\xi_{\nu}+\partial_{\nu}\xi_{\mu}
\]
& Two independent gauge parameters
\[
\lambda_{\mu \nu}\equiv\lambda_{(\mu \nu)}, \qquad \Lambda_{\mu \nu}\equiv\Lambda_{[\mu \nu]}
\]
with
\[
\delta_{\lambda,\Lambda} \phi_{[\mu \nu]\rho}
=
\partial_{[\mu}\lambda_{\nu] \rho}
+\partial_{[\mu}\Lambda_{\nu] \rho}
+2\partial_{\rho}\Lambda_{\nu \mu}
\]
\\
\hline
Gauge-for-gauge & No non-trivial gauge-for-gauge & First-stage gaue-for-gauge
\[
\Lambda'_{[\mu \nu]}= \Lambda_{[\mu \nu]}+\partial_{\mu}\Theta_{\nu}-\partial_{\nu}\Theta_{\mu},
\]
\[
\lambda'_{(\mu \nu)}= \lambda_{(\mu \nu)}+3\partial_{\mu}\Theta_{\nu}+3\partial_{\nu}\Theta_{\mu}
\]
that leave \( \delta\phi_{[\mu \nu]\rho} \) unchanged \\
\hline    
Duality relation & Standard metric-like formulation of linearized gravity& Hodge dual/dual representation description of the graviton \\
\hline
Off shell and on shell degrees of freedom & \begin{itemize}
    \item Off shell: \[
\frac{D(D+1)}{2}\overset{\text{$D=5$}}{=}15
\]
    \item On shell: \[
\frac{D(D-3)}{2}\overset{\text{$D=5$}}{=}5
\]
\end{itemize}
 & \begin{itemize}
   \item Off shell: \[
\frac{D(D^2-1)}{3}\overset{\text{$D=5$}}{=}40
\]
    \item On shell: \[
\frac{D(D-2)(D-4)}{3}\overset{\text{$D=5$}}{=}5
\]
\end{itemize}
\\
\hline
Duality relation & Standard metric-like formulation of linearized gravity& Hodge dual/dual representation description of the graviton \\
\hline
Basic asymptotic symmetry algebra & $$\mathfrak{o}(4) \loplus (C^{\infty}(S^3)\oplus C^{\infty}(S^3))$$ & $$\mathfrak{o}(4) \loplus (C^{\infty}(S^3)\oplus \mathfrak{TT}(S^3))$$\\
\hline
\end{longtable}

\section{Conclusions and outlook}

This work studied the asymptotic charge structure of the Curtright field, a prototypical gauge system with mixed index symmetry. Mixed-symmetry fields bring in genuinely new ingredients: there are multiple gauge parameters with different symmetry structure and there is complex gauge-for-gauge redundancy. The analysis developed a concrete and workable framework in which all these features can be controlled simultaneously, leading to finite charges and a closed asymptotic symmetry algebra.

A first major step was the selection of gauge fixing and the imposition of radiation-compatible fall-off conditions. These fall-offs were not chosen arbitrarily: they were motivated by the requirement that the radiative degrees of freedom have a well-defined flux through null infinity. In this setting, the equations of motion were simplified by imposing a de Donder-like gauge condition, supplemented by an on-shell condition that removes trace components. Conceptually, these conditions play the same role as the familiar gauge choices in linearized gravity: they isolate the propagating content, reduce the dynamics to a tractable wave-like form and turn the problem of asymptotic symmetries into a problem of characterizing the residual gauge transformations compatible with both the gauge choice and the boundary behavior.\\
A technically important part of the work was that to capture all potentially charge-relevant modes, the analysis allowed for more general asymptotic expansions, including half-integer powers and possible logarithmic branches. This polyhomogeneous approach is a natural generalization of what appears in higher-dimensional analyses of other gauge theories. With the residual transformations under control, the work constructed the associated surface charges using a covariant Noether-based strategy; charges are "electric-like" in the sense that they are built from radial--null components of the field strength contracted with asymptotic gauge data on the celestial sphere. This parallels the structure familiar from other gauge theories like $p$-forms one, while also reflecting the extra algebraic projections and redundancies intrinsic to mixed-symmetry fields.

A special focus of the work was the five-dimensional case, where the Curtright field is on-shell dual to the graviton. Using appropriate decompositions, the residual parameters can be organized into three sectors: one controlled by an arbitrary function on the three-sphere, which plays the role of a supertranslation-like parameter, a second controlled by a vector field on the three-sphere and a third controlled by TT modes. After an additional asymptotic gauge fixing that enforces transversality and removes trace pieces in the leading boundary tensor, the charge correspondingly splits into three contributions associated with these three sectors. This decomposition is valuable not only because it yields explicit expressions for the charges but also because it makes the symmetry algebra transparent forming an abelian extension of a $\mathfrak{BMS}$-like algebra. The scalar and TT sectors form abelian ideals, while the vector sector reproduces the standard commutator structure of vector fields on the three-sphere generating the $\mathfrak{o}(4)$ part of the algebra. The mixed bracket is the natural action of sphere diffeomorphisms on tensors. Altogether, the algebra closes as a semidirect sum between the Killing vectors of the celestial sphere and an abelian ideal given by the direct sum of scalar and TT sectors interpreted as supertranslations and higher spin supertranslations. This is the mixed-symmetry analogue of the familiar BMS-type structure, we call Curtright extension of $\mathfrak{BMS}$ algebra: $\mathfrak{CBMS}$.\\
One conceptual tension that emerged concerns the scalar sector in five dimensions. Within the adopted boundary conditions and asymptotic gauge fixing, only one independent supertranslation-like function survives in the charge. This contrasts with some expectations in the literature, where two independent scalar functions have been argued to arise at leading order in related Hamiltonian analyses. In the present framework, the "missing" scalar mode may be removed by the specific asymptotic gauge choice used to define the boundary tensor, or it may reside in subleading or logarithmic branches that were consistently truncated when isolating the minimal data needed for finite leading charges.

Several natural directions for future work follow from these results. A first direction is to revisit the boundary conditions and asymptotic gauge choices in five dimensions with the explicit goal of clarifying the status of the second scalar mode. This would involve relaxing some of the asymptotic constraints imposed on the leading boundary tensor, exploring alternative gauge-for-gauge fixings and systematically retaining a larger set of subleading terms in the polyhomogeneous expansions. Moreover, the study of possible Curtright extension of extended or generalized BMS-structure is a an interesting possibility: like in $D=4$, these extensions could appear relaxing fall-of conditions on fields.\\
A second direction is to extend the construction of asymptotic charges to broader families of mixed-symmetry gauge fields. The Curtright field is the simplest nontrivial example but the methods developed here suggest a general blueprint: choose a gauge that reduces the dynamics to a wave-like form, impose irreducibility conditions that isolate the correct Young-projected degrees of freedom and classify residual parameters on the celestial sphere while carefully accounting for gauge-for-gauge redundancies. In that context, one expects surface charges to be built from the appropriate radial--null curvature components contracted with the leading sphere data of the residual parameters, generalizing the "electric-like" pattern found here and in $p$-forms contexts. Carrying this out for other Young projected tensors would provide a systematic answer to how asymptotic symmetry algebras depend on the representation content of the field and it would establish a general theory of large gauge charges for tensors with mixed symmetries, as already anticipated and discussed in the main text.\\
A third direction concerns duality at null infinity. While the Curtright field and the graviton are dual on-shell in five dimensions, their gauge structures differ off-shell and their boundary descriptions need not coincide automatically. A detailed dictionary between metric boundary data and the dual mixed-symmetry boundary tensor would clarify which aspects of the infrared structure are invariant under duality and which are reshuffled. This includes not only the mapping of charges and symmetry parameters, but also the mapping of fluxes, memory observables, and the interpretation of radiative data. In particular, it would be valuable to understand whether the semidirect-sum algebra obtained here is exactly the dual image of the gravitational asymptotic algebra for the same class of fall-offs, or whether additional boundary counterterms and corner contributions are needed to match the two descriptions fully.\\
A fourth direction is to investigate related soft theorem and memory effects. Establishing the existence and the precise form of soft theorems and memory effects associated with these mixed-symmetry gauge systems. As in other gauge theories, large gauge transformations at null infinity are expected to yield non-trivial Ward identities for the scattering operator, which in turn should control universal soft limits of amplitudes with an additional soft particle. Such effects could be relevant both conceptually and technically: conceptually, they generalize the IR triangle relating symmetries, soft limits, and memory beyond symmetric tensor gauge fields; technically, they provide new constraints on consistent couplings, boundary conditions, and the definition of conserved quantities for mixed-symmetry fields. Moreover, since hook-type fields appear in higher-spin theories, duality-covariant formulations and in certain stringy regimes, understanding their IR sector may uncover universal features shared across these frameworks.

\appendix

\section{Equations}\label{AppB}

\subsection{Equations from residual gauge conditions}\label{B1}

% ------------------------------------------------------------
% Covariant residual gauge conditions in arbitrary D
% (retarded Bondi-like coordinates with given metric + Christoffels)
%
% Coordinates: (u,r,x^i),  i=1,...,D-2, with γ_{ij}(x) the unit S^{D-2} metric.
%
% Metric and inverse (as given):
%   g_{uu}=-1, g_{ur}=g_{ru}=-1, g_{rr}=0, g_{ij}=r^2 γ_{ij},
%   g^{uu}=0,  g^{ur}=g^{ru}=-1, g^{rr}=1, g^{ij}=r^{-2}γ^{ij}.
%
% Non-zero Christoffels (as given / implied):
%   Γ^i_{jr}=Γ^i_{rj}=(1/r) δ^i_j,
%   Γ^u_{ij}= r γ_{ij},    Γ^r_{ij}= -r γ_{ij},
%   Γ^k_{ij}=Γ^k_{ij}(γ)  (Levi-Civita of γ on S^{D-2}).
% All other Γ are taken to vanish.
%
% Angular covariant derivative D_i is the Levi-Civita derivative of γ_{ij}.
% Laplacian on S^{D-2}:  Δ := γ^{ij} D_i D_j.
% -----------------------------------------------------------

Assume the $r$ polyhomogeneous expansions
\begin{equation}
\lambda_{\mu\nu}
=\sum_{l\in\frac12\mathbb Z} r^{-l}\Bigl(\lambda_{\mu\nu}^{(l)}(u,x)
+\bar\lambda_{\mu\nu}^{(l)}(u,x)\ln r\Bigr),
\qquad
\Lambda_{\mu\nu}
=\sum_{l\in\frac12\mathbb Z} r^{-l}\Bigl(\Lambda_{\mu\nu}^{(l)}(u,x)
+\bar\Lambda_{\mu\nu}^{(l)}(u,x)\ln r\Bigr).
\label{exp_lam_Lam_covD}
\end{equation}
The residual gauge conditions are
\begin{equation}
G_{\beta}:=\nabla_{\beta}\lambda^{\mu}{}_{\mu}
-\nabla^{\alpha}\lambda_{\alpha\beta}
+\nabla^{\alpha}\Lambda_{\alpha\beta}=0,
\qquad
S_{\alpha}:=\Box\bigl(\nabla^{\beta}\Lambda_{\beta\alpha}\bigr)=0,
\label{resgauge_covD}
\end{equation}
where $\Box:=\nabla^\mu\nabla_\mu$.
\paragraph{Notation.} We adopt the following notations where $n:=D-2$
\begin{equation}
(\partial_r \lambda_{\mu \nu})^{(l)}:=-(l-1)\lambda_{\mu \nu}^{(l-1)}+\bar \lambda_{\mu \nu}^{(l-1)},
\qquad
\overline{(\partial_r \lambda_{\mu \nu})}^{(l)}:=-(l-1)\bar \lambda_{\mu \nu}^{(l-1)},
\label{convdevrl}
\end{equation}
\begin{equation}
(\partial_r \Lambda_{\mu \nu})^{(l)}:=-(l-1)\Lambda_{\mu \nu}^{(l-1)}+\bar \Lambda_{\mu \nu}^{(l-1)},
\qquad
\overline{(\partial_r \Lambda_{\mu \nu})}^{(l)}:=-(l-1)\bar \Lambda_{\mu \nu}^{(l-1)},
\label{convdevrL}
\end{equation}
\begin{equation}
(\lambda^\mu{}_\mu)^{(l)}:=-2\lambda_{ur}^{(l)}+\lambda_{rr}^{(l)}+\gamma^{ij}\lambda_{ij}^{(l-2)},
\qquad
\overline{(\lambda^\mu{}_\mu)}^{(l)}:=-2\bar\lambda_{ur}^{(l)}+\bar\lambda_{rr}^{(l)}+\gamma^{ij}\bar\lambda_{ij}^{(l-2)}.
\label{convtrace}
\end{equation}
\begin{equation}
    (\partial_r\lambda^\mu{}_\mu)^{(l)}:=-(l-1)(\lambda^\mu{}_\mu)^{(l-1)}+\overline{(\lambda^\mu{}_\mu)}^{(l-1)}, \qquad \overline{(\partial_r\lambda^\mu{}_\mu)}^{(l)}:=-(l-1)\overline{(\lambda^\mu{}_\mu)}^{(l-1)}
\end{equation}
\begin{align}
(\nabla^\beta\Lambda_{\beta u})^{(l)}:={}&
\partial_u\Lambda_{ur}^{(l)}
+(l-1-n)\Lambda_{ur}^{(l-1)}
-\bar\Lambda_{ur}^{(l-1)},\\
\overline{(\nabla^\beta\Lambda_{\beta u})}^{(l)}:={}&
\partial_u\bar\Lambda_{ur}^{(l)}
+(l-1-n)\bar\Lambda_{ur}^{(l-1)} .
\end{align}
\begin{align}
(\nabla^\beta\Lambda_{\beta r})^{(l)}:={}&
(l-1-n)\Lambda_{ur}^{(l-1)}
-\bar\Lambda_{ur}^{(l-1)},\\
\overline{(\nabla^\beta\Lambda_{\beta r})}^{(l)}:={}&
(l-1-n)\bar\Lambda_{ur}^{(l-1)} .
\end{align}
\begin{align}
(\nabla^\beta\Lambda_{\beta i})^{(l)}:={}&
 D^j\Lambda_{ji}^{(l-2)},\\
\overline{(\nabla^\beta\Lambda_{\beta i})}^{(l)}:={}& D^j\bar\Lambda_{ji}^{(l-2)} ,
\end{align}
Moreover, we use $D_i$ to indicate the covariant derivative on $S^{D-2}$, so $\Delta:= D_iD^i$ is the Laplace-Beltrami operator on $S^{D-2}$.

\paragraph{Equation \(G_u=0\).}
\begin{align}
&
+\partial_u(\lambda^\mu{}_\mu)^{(l)}
+\partial_u\lambda_{ur}^{(l)}
+\partial_u\Lambda_{ur}^{(l)}
+(n-l+1)\lambda_{uu}^{(l-1)}
+(l-1-n)\lambda_{ur}^{(l-1)}+
\nonumber\\
&
+(l-1-n)\Lambda_{ur}^{(l-1)}
+\bar\lambda_{uu}^{(l-1)}
-\bar\lambda_{ur}^{(l-1)}
-\bar\Lambda_{ur}^{(l-1)}=0 .
\end{align}
\begin{align}
&
+\partial_u\overline{(\lambda^\mu{}_\mu)}^{(l)}
+\partial_u\bar\lambda_{ur}^{(l)}
+\partial_u\bar\Lambda_{ur}^{(l)}
+(n-l+1)\bar\lambda_{uu}^{(l-1)}
+(l-1-n)\bar\lambda_{ur}^{(l-1)}+
\nonumber\\
&
+(l-1-n)\bar\Lambda_{ur}^{(l-1)}=0 .
\end{align}

\paragraph{Equation \(G_r=0\).}
\begin{align}
&
-(l-1)(\lambda^\mu{}_\mu)^{(l-1)}
+\overline{(\lambda^\mu{}_\mu)}^{(l-1)}
+\partial_u\lambda_{rr}^{(l)}
+(n-l+1)\lambda_{ur}^{(l-1)}
+(l-1-n)\lambda_{rr}^{(l-1)}+
\nonumber\\
&
+(l-1-n)\Lambda_{ur}^{(l-1)}
+\gamma^{ij}\lambda_{ij}^{(l-3)}+
+\bar\lambda_{ur}^{(l-1)}
-\bar\lambda_{rr}^{(l-1)}
-\bar\Lambda_{ur}^{(l-1)}=0.
\end{align}
\begin{align}
&
-(l-1)\overline{(\lambda^\mu{}_\mu)}^{(l-1)}
+\partial_u\bar\lambda_{rr}^{(l)}
+(n-l+1)\bar\lambda_{ur}^{(l-1)}
+(l-1-n)\bar\lambda_{rr}^{(l-1)}+
\nonumber\\
&
+(l-1-n)\bar\Lambda_{ur}^{(l-1)}
+\gamma^{ij}\bar\lambda_{ij}^{(l-3)}=0 .
\end{align}

\paragraph{Equations \(G_i=0\).}
\begin{align}
&
D_i(\lambda^\mu{}_\mu)^{(l)}
- D^j\lambda_{ji}^{(l-2)}
+ D^j\Lambda_{ji}^{(l-2)}=0 .
\end{align}
\begin{align}
D_i\overline{(\lambda^\mu{}_\mu)}^{(l)}
- D^j\bar\lambda_{ji}^{(l-2)}
+ D^j\bar\Lambda_{ji}^{(l-2)}=0 .
\end{align}

\paragraph{Equation $S_u=0$.}
\begin{align}
&+\bigl(2(l-1)-n\bigr)\partial_u(\nabla^\beta\Lambda_{\beta u})^{(l-1)}
-2\partial_u\overline{(\nabla^\beta\Lambda_{\beta u})}^{(l-1)}+\\
&
+(l-2)\bigl((l-1)-n\bigr)(\nabla^\beta\Lambda_{\beta u})^{(l-2)}+\\
&
+\bigl(n-2l+3\bigr)\overline{(\nabla^\beta\Lambda_{\beta u})}^{(l-2)}
+\Delta(\nabla^\beta\Lambda_{\beta u})^{(l-2)} =0,
\end{align}
\begin{align}
&+\bigl(2(l-1)-n\bigr)\partial_u\overline{(\nabla^\beta\Lambda_{\beta u})}^{(l-1)}
+(l-2)\bigl((l-1)-n\bigr)\overline{(\nabla^\beta\Lambda_{\beta u})}^{(l-2)}+
\nonumber\\
&
+\Delta\,\overline{(\nabla^\beta\Lambda_{\beta u})}^{(l-2)}=0 .
\end{align}

\paragraph{Equation $S_r=0$.}
\begin{align}
&+\bigl(2(l-1)-n\bigr)\partial_u(\nabla^\beta\Lambda_{\beta r})^{(l-1)}
-2\partial_u\overline{(\nabla^\beta\Lambda_{\beta r})}^{(l-1)}+
\nonumber\\
&
+(l-2)\bigl((l-1)-n\bigr)(\nabla^\beta\Lambda_{\beta r})^{(l-2)}
+\bigl(n-2l+3\bigr)\overline{(\nabla^\beta\Lambda_{\beta r})}^{(l-2)}
+\Delta(\nabla^\beta\Lambda_{\beta r})^{(l-2)}+
\nonumber\\
&
-2 D^i(\nabla^\beta\Lambda_{\beta i})^{(l-3)}
+n\Bigl[(\nabla^\beta\Lambda_{\beta u})^{(l-2)}
-(\nabla^\beta\Lambda_{\beta r})^{(l-2)}\Bigr]=0 ,
\end{align}
\begin{align}
&+\bigl(2(l-1)-n\bigr)\partial_u\overline{(\nabla^\beta\Lambda_{\beta r})}^{(l-1)}
+(l-2)\bigl((l-1)-n\bigr)\overline{(\nabla^\beta\Lambda_{\beta r})}^{(l-2)}+
\nonumber\\
&
+\Delta\,\overline{(\nabla^\beta\Lambda_{\beta r})}^{(l-2)}
-2 D^i\overline{(\nabla^\beta\Lambda_{\beta i})}^{(l-3)}
+n\Bigl[\overline{(\nabla^\beta\Lambda_{\beta u})}^{(l-2)}
-\overline{(\nabla^\beta\Lambda_{\beta r})}^{(l-2)}\Bigr]=0.
\end{align}

\paragraph{Equation $S_i=0$.}
\begin{align}
&+\bigl(2(l-1)-(n-2)\bigr)\partial_u(\nabla^\beta\Lambda_{\beta i})^{(l-1)}
-2\partial_u\overline{(\nabla^\beta\Lambda_{\beta i})}^{(l-1)}+
\nonumber\\
&
+(l-2)\bigl((l-1)-(n-2)\bigr)(\nabla^\beta\Lambda_{\beta i})^{(l-2)}
+\bigl((n-2)-2l+3\bigr)\overline{(\nabla^\beta\Lambda_{\beta i})}^{(l-2)}+
\nonumber\\
&
+ D^jD_j(\nabla^\beta\Lambda_{\beta i})^{(l-2)}
-(n-1)(\nabla^\beta\Lambda_{\beta i})^{(l-2)}
-2 D_i(\nabla^\beta\Lambda_{\beta u})^{(l-1)}+\\
&
-2 D_i(\nabla^\beta\Lambda_{\beta r})^{(l-1)}=0 ,
\end{align}
\begin{align}
&+\bigl(2(l-1)-(n-2)\bigr)\partial_u\overline{(\nabla^\beta\Lambda_{\beta i})}^{(l-1)}
+(l-2)\bigl((l-1)-(n-2)\bigr)\overline{(\nabla^\beta\Lambda_{\beta i})}^{(l-2)}+
\nonumber\\
&
+ D^jD_j\,\overline{(\nabla^\beta\Lambda_{\beta i})}^{(l-2)}
-(n-1)\overline{(\nabla^\beta\Lambda_{\beta i})}^{(l-2)}+
\nonumber\\
&
-2 D_i\overline{(\nabla^\beta\Lambda_{\beta u})}^{(l-1)}
+2 D_i\overline{(\nabla^\beta\Lambda_{\beta r})}^{(l-1)}=0 .
\end{align}

\subsection{Equations from fall-off conditions}\label{B2}
The fall-off conditions in our gauge and gauge-for-gauge fixing reads as
\subsubsection*{Case I}
\begin{equation}
\begin{aligned}
&\delta\!\bigl(\phi_{[u r]u}\bigr)
= \partial_u \lambda_{(r u)} - \partial_r \lambda_{(u u)} + 3\,\partial_u \Lambda_{[r u]}
\sim \mathcal{O}\!\left(r^{-\frac{D-2}{2}}\right) \, , \\
&\delta\!\bigl(\phi_{[u r]r}\bigr)
= \partial_u \lambda_{(r r)} - \partial_r \lambda_{(u r)} - 3\,\partial_r \Lambda_{[u r]}
\sim \mathcal{O}\!\left(r^{-\frac{D-2}{2}}\right) \, .
\end{aligned}
\end{equation}

\subsubsection*{Case II}
\begin{equation}
\begin{aligned}
&\delta\!\bigl(\phi_{[u i]u}\bigr)
= - \partial_i \lambda_{(u u)} 
\sim \mathcal{O}\!\left(r^{-\frac{D-4}{2}}\right) \, , \\
&\delta\!\bigl(\phi_{[u i]r}\bigr)
=  - \partial_i \lambda_{(u r)} - \partial_i \Lambda_{[u r]}
\sim \mathcal{O}\!\left(r^{-\frac{D-4}{2}}\right) \, , \\
&\delta\!\bigl(\phi_{[r i]u}\bigr)
=- \partial_i \lambda_{(r u)} - \partial_i \Lambda_{[r u]}
\sim \mathcal{O}\!\left(r^{-\frac{D-4}{2}}\right) \, , \\
&\delta\!\bigl(\phi_{[r i]r}\bigr)
= - \partial_i \lambda_{(r r)} 
\sim \mathcal{O}\!\left(r^{-\frac{D-4}{2}}\right) \, .
\end{aligned}
\end{equation}

\subsubsection*{Case III}
\begin{equation}
\begin{aligned}
&\delta\!\bigl(\phi_{[u i]j}\bigr)
= + \partial_u \lambda_{(i j)}
  + \partial_u \Lambda_{[i j]}
\sim \mathcal{O}\!\left(r^{-\frac{D-6}{2}}\right) \, , \\
&\delta\!\bigl(\phi_{[r i]j}\bigr)
= + \partial_r \lambda_{(i j)}
  + \partial_r \Lambda_{[i j]}
\sim \mathcal{O}\!\left(r^{-\frac{D-6}{2}}\right) \, , \\
&\delta\!\bigl(\phi_{[i j]u}\bigr)
=  + 2 \partial_u \Lambda_{[j i]}
\sim \mathcal{O}\!\left(r^{-\frac{D-6}{2}}\right) \, , \\
&\delta\!\bigl(\phi_{[i j]r}\bigr)
=
  + 2 \partial_r \Lambda_{[j i]}
\sim \mathcal{O}\!\left(r^{-\frac{D-6}{2}}\right) \, .
\end{aligned}
\end{equation}

\subsubsection*{Case IV}
\begin{equation}
\begin{aligned}
&\delta\!\bigl(\phi_{[i j]k}\bigr)
= \nabla_i \lambda_{(j k)} - \nabla_j \lambda_{(i k)}
  + \nabla_i \Lambda_{[j k]} - \nabla_j \Lambda_{[i k]}
  + 2 \nabla_k \Lambda_{[j i]}
\sim \mathcal{O}\!\left(r^{-\frac{D-8}{2}}\right) \, .
\end{aligned}
\end{equation}

Assuming the polyhomogeneous $r$ expansions for the gauge parameters
\begin{equation}
\lambda_{\mu\nu}
=\sum_{l\in\frac12\mathbb Z} r^{-l}
\bigl(\lambda_{\mu\nu}^{(l)}+\bar\lambda_{\mu\nu}^{(l)}\ln r\bigr),
\qquad
\Lambda_{\mu\nu}
=\sum_{l\in\frac12\mathbb Z} r^{-l}
\bigl(\Lambda_{\mu\nu}^{(l)}+\bar\Lambda_{\mu\nu}^{(l)}\ln r\bigr).
\end{equation}
and defining
\[
p_1=\frac{D-2}{2},\qquad
p_2=\frac{D-4}{2},\qquad
p_3=\frac{D-6}{2},\qquad
p_4=\frac{D-8}{2},
\]
the recursive equations are

%%%%%%%%%%%%%%%%%%%%%%%%%%%%%%%%%%%%%%%%%%%%%%%%%%%%%%%%%%%%
\subsubsection*{Case I}
\begin{equation}
\partial_u \bar\lambda_{(ru)}^{(l)}
+3\,\partial_u \bar\Lambda_{[ru]}^{(l)}
+(l-1)\,\bar\lambda_{(uu)}^{(l-1)}
=0,
\qquad \forall\, l<p_1 .
\end{equation}

\begin{equation}
\partial_u \lambda_{(ru)}^{(l)}
+3\,\partial_u \Lambda_{[ru]}^{(l)}
+(l-1)\,\lambda_{(uu)}^{(l-1)}
-\bar\lambda_{(uu)}^{(l)}
=0,
\qquad \forall\, l<p_1 .
\end{equation}

\begin{equation}
\partial_u \bar\lambda_{(rr)}^{(l)}
+(l-1)\,\bar\lambda_{(ur)}^{(l-1)}
+3(l-1)\,\bar\Lambda_{[ur]}^{(l-1)}
=0,
\qquad \forall\, l<p_1 .
\end{equation}

\begin{equation}
\partial_u \lambda_{(rr)}^{(l)}
+l\,\lambda_{(ur)}^{(l)}
+3l\,\Lambda_{[ur]}^{(l)}
-\bar\lambda_{(ur)}^{(l)}
-3\,\bar\Lambda_{[ur]}^{(l)}
=0,
\qquad \forall\, l<p_1 .
\end{equation}

%%%%%%%%%%%%%%%%%%%%%%%%%%%%%%%%%%%%%%%%%%%%%%%%%%%%%%%%%%%%
\subsubsection*{Case II}

\begin{equation}
\partial_i \lambda_{(uu)}^{(l)}=0,
\qquad
\partial_i \bar\lambda_{(uu)}^{(l)}=0,
\qquad \forall\, l<p_2 .
\end{equation}

\begin{equation}
\partial_i\!\left(\lambda_{(ur)}^{(l)}+\Lambda_{[ur]}^{(l)}\right)=0,
\qquad
\partial_i\!\left(\bar\lambda_{(ur)}^{(l)}+\bar\Lambda_{[ur]}^{(l)}\right)=0,
\qquad \forall\, l<p_2 .
\end{equation}

\begin{equation}
\partial_i\!\left(\lambda_{(ru)}^{(l)}+\Lambda_{[ru]}^{(l)}\right)=0,
\qquad
\partial_i\!\left(\bar\lambda_{(ru)}^{(l)}+\bar\Lambda_{[ru]}^{(l)}\right)=0,
\qquad \forall\, l<p_2 .
\end{equation}

\begin{equation}
\partial_i \lambda_{(rr)}^{(l)}=0,
\qquad
\partial_i \bar\lambda_{(rr)}^{(l)}=0,
\qquad \forall\, l<p_2 .
\end{equation}

%%%%%%%%%%%%%%%%%%%%%%%%%%%%%%%%%%%%%%%%%%%%%%%%%%%%%%%%%%%%
\subsubsection*{Case III}

\begin{equation}
\partial_u\!\left(\lambda_{(ij)}^{(l)}+\Lambda_{[ij]}^{(l)}\right)=0,
\qquad
\partial_u\!\left(\bar\lambda_{(ij)}^{(l)}+\bar\Lambda_{[ij]}^{(l)}\right)=0,
\qquad \forall\, l<p_3 .
\end{equation}

\begin{equation}
-l\,\left(\bar\lambda_{(ij)}^{(l)}+\bar\Lambda_{[ij]}^{(l)}\right)=0,
\qquad \forall\, l+1<p_3 .
\end{equation}

\begin{equation}
-l\,\left(\lambda_{(ij)}^{(l)}+\Lambda_{[ij]}^{(l)}\right)
+\left(\bar\lambda_{(ij)}^{(l)}+\bar\Lambda_{[ij]}^{(l)}\right)
=0,
\qquad \forall\, l+1<p_3 .
\end{equation}

\begin{equation}
\partial_u \Lambda_{[ji]}^{(l)}=0,
\qquad
\partial_u \bar\Lambda_{[ji]}^{(l)}=0,
\qquad \forall\, l<p_3 .
\end{equation}

\begin{equation}
-2l\,\bar\Lambda_{[ji]}^{(l)}=0,
\qquad \forall\, l+1<p_3 .
\end{equation}

\begin{equation}
-2l\,\Lambda_{[ji]}^{(l)}+2\,\bar\Lambda_{[ji]}^{(l)}=0,
\qquad \forall\, l+1<p_3 .
\end{equation}

%%%%%%%%%%%%%%%%%%%%%%%%%%%%%%%%%%%%%%%%%%%%%%%%%%%%%%%%%%%%
\subsubsection*{Case IV}

\begin{equation}
\nabla_i \bar\lambda_{(jk)}^{(l)}
-\nabla_j \bar\lambda_{(ik)}^{(l)}
+\nabla_i \bar\Lambda_{[jk]}^{(l)}
-\nabla_j \bar\Lambda_{[ik]}^{(l)}
+2\,\nabla_k \bar\Lambda_{[ji]}^{(l)}
=0,
\qquad \forall\, l<p_4 .
\end{equation}

\begin{equation}
\nabla_i \lambda_{(jk)}^{(l)}
-\nabla_j \lambda_{(ik)}^{(l)}
+\nabla_i \Lambda_{[jk]}^{(l)}
-\nabla_j \Lambda_{[ik]}^{(l)}
+2\,\nabla_k \Lambda_{[ji]}^{(l)}
=0,
\qquad \forall\, l<p_4 .
\end{equation}
\subsection{Equations of motion}\label{B3}

Making explicit the equations of motion for the independent field components, we have
\begin{equation}
\partial_r^2\phi_{[ur]u}
-2\partial_u\partial_r\phi_{[ur]u}
+\frac{D-2}{r}(\partial_r-\partial_u)\phi_{[ur]u}
+\frac{1}{r^2}\mathcal{D}^2\phi_{[ur]u}
-\frac{2}{r^3}\mathcal{D}^i\phi_{[u i] u}
-\frac{D-2}{r^2}\phi_{[ur]u}=0 ,
\label{eq:box_uru}
\end{equation}

\begin{equation}
\partial_r^2 \phi_{[ur]r}-2\,\partial_u\partial_r \phi_{[ur]r}
+\frac{1}{r^2}\,\mathcal{D}^2 \phi_{[ur]r}
-\frac{1}{r^3}\,\mathcal{D}^i\!\Big(\phi_{[ui]r}+\phi_{[ur]i}\Big)
+\frac{2}{r^4}\,\gamma^{ij}\phi_{[ui]j}=0\, .
\end{equation}
\begin{equation}
\partial_r^2 \phi_{[ur]i}-2\,\partial_u\partial_r \phi_{[ur]i}
-\frac{2}{r}\,\partial_r \phi_{[ur]i}
+\frac{2}{r}\,\partial_u \phi_{[ur]i}
+\frac{1}{r^2}\,\mathcal{D}^2 \phi_{[ur]i}
-\frac{1}{r^3}\,\mathcal{D}^k \phi_{[uk]i}
-\frac{D-4}{r^2}\,\phi_{[ur]i}=0\, .
\end{equation}

\begin{equation}
\begin{aligned}
&\partial_r^2 \phi_{[ui]u}
-2\,\partial_u\partial_r \phi_{[ui]u}
+\frac{D-4}{r}\,\partial_r \phi_{[ui]u}
-\frac{D-4}{r}\,\partial_u \phi_{[ui]u}
+\frac{1}{r^2}\,\mathcal{D}^2 \phi_{[ui]u}
-\frac{1}{r^3}\,\mathcal{D}^k \phi_{[uk]u}\\
&-\frac{D-3}{r^2}\,\phi_{[ui]u}=0\, .
\end{aligned}
\end{equation}

\begin{equation}
\begin{aligned}
&\partial_r^2 \phi_{[ui]r}
-2\,\partial_u\partial_r \phi_{[ui]r}
+\frac{D-2}{r}\,\partial_r \phi_{[ui]r}
-\frac{D-4}{r}\,\partial_u \phi_{[ui]r}
+\frac{1}{r^2}\,\mathcal{D}^2 \phi_{[ui]r}
-\frac{1}{r^3}\,\mathcal{D}^k \phi_{[uk]r}\\
&-\frac{D-2}{r^2}\,\phi_{[ui]r}
+\frac{1}{r^3}\,\mathcal{D}_i \phi_{[ur]r}=0 \, .
\end{aligned}
\end{equation}

\begin{equation}
\begin{aligned}
&\partial_r^2 \phi_{[ui]j}
-2\,\partial_u\partial_r \phi_{[ui]j}
+\frac{D-4}{r}\,\partial_r \phi_{[ui]j}
-\frac{D-6}{r}\,\partial_u \phi_{[ui]j}
+\frac{1}{r^2}\,\mathcal{D}^2 \phi_{[ui]j}
-\frac{D-3}{r^2}\,\phi_{[ui]j}
\\
&
-\frac{1}{r^3}\Big(
\mathcal{D}_j \phi_{[ui]r}
+\mathcal{D}_i \phi_{[uj]r}
-\gamma_{ij}\,\mathcal{D}^k \phi_{[uk]r}
\Big)
+\frac{2}{r^3}\,\mathcal{D}_{[i}\phi_{[u|j]u}=0 \, .
\end{aligned}
\end{equation}

\begin{comment}
\begin{equation}
\begin{aligned}
&\partial_r^2 \phi_{[ri]u}
-2\,\partial_u\partial_r \phi_{[ri]u}
+\frac{D-2}{r}\,\partial_r \phi_{[ri]u}
-\frac{D-2}{r}\,\partial_u \phi_{[ri]u}
+\frac{1}{r^2}\,\mathcal{D}^2 \phi_{[ri]u}
-\frac{D-2}{r^2}\,\phi_{[ri]u}\\
&
-\frac{1}{r^3}\Big(
\mathcal{D}_i \phi_{[ru]u}
+\mathcal{D}^k \phi_{[rk]u}
\Big)
+\frac{2}{r^3}\,\mathcal{D}_i \phi_{[ur]u}=0\, .
\end{aligned}
\end{equation}

\end{comment}

\begin{equation}
\begin{aligned}
&\partial_r^2 \phi_{[ri]r}
-2\,\partial_u\partial_r \phi_{[ri]r}
+\frac{D}{r}\,\partial_r \phi_{[ri]r}
-\frac{D-2}{r}\,\partial_u \phi_{[ri]r}
+\frac{1}{r^2}\,\mathcal{D}^2 \phi_{[ri]r}
-\frac{D}{r^2}\,\phi_{[ri]r}
\\
&
-\frac{1}{r^3}\,\mathcal{D}^k \phi_{[rk]r}
+\frac{1}{r^3}\,\mathcal{D}_i \phi_{[ur]r}
-\frac{2}{r^3}\,\mathcal{D}_i \phi_{[rr]u}=0 \, .
\end{aligned}
\end{equation}

\begin{equation}
\begin{aligned}
&\partial_r^2 \phi_{[ri]j}
-2\,\partial_u\partial_r \phi_{[ri]j}
+\frac{D-2}{r}\,\partial_r \phi_{[ri]j}
-\frac{D-4}{r}\,\partial_u \phi_{[ri]j}
+\frac{1}{r^2}\,\mathcal{D}^2 \phi_{[ri]j}
-\frac{D-2}{r^2}\,\phi_{[ri]j}
\\
&
-\frac{1}{r^3}\Big(
\mathcal{D}_j \phi_{[ri]r}
+\mathcal{D}_i \phi_{[rj]r}
-\gamma_{ij}\,\mathcal{D}^k \phi_{[rk]r}
\Big)
+\frac{1}{r^3}\,\mathcal{D}_j \phi_{[ui]r}
-\frac{1}{r^3}\,\mathcal{D}_i \phi_{[uj]r}=0 \, .
\end{aligned}
\end{equation}

\begin{comment}
\begin{equation}
\begin{aligned}
&\partial_r^2 \phi_{[ij]u}
-2\,\partial_u\partial_r \phi_{[ij]u}
+\frac{D-4}{r}\,\partial_r \phi_{[ij]u}
-\frac{D-6}{r}\,\partial_u \phi_{[ij]u}
+\frac{1}{r^2}\,\mathcal{D}^2 \phi_{[ij]u}
-\frac{2(D-3)}{r^2}\,\phi_{[ij]u}
\\
&
-\frac{1}{r^3}\Big(
\mathcal{D}_i \phi_{[ju]r}
-\mathcal{D}_j \phi_{[iu]r}
\Big)
+\frac{2}{r^3}\Big(
\mathcal{D}_i \phi_{[ju]u}
-\mathcal{D}_j \phi_{[iu]u}
\Big)=0 \, .
\end{aligned}
\end{equation}

\begin{equation}
\begin{aligned}
&\partial_r^2 \phi_{[ij]r}
-2\,\partial_u\partial_r \phi_{[ij]r}
+\frac{D-2}{r}\,\partial_r \phi_{[ij]r}
-\frac{D-4}{r}\,\partial_u \phi_{[ij]r}
+\frac{1}{r^2}\,\mathcal{D}^2 \phi_{[ij]r}
-\frac{2(D-2)}{r^2}\,\phi_{[ij]r}
\\
&
-\frac{1}{r^3}\Big(
\mathcal{D}_i \phi_{[jr]r}
-\mathcal{D}_j \phi_{[ir]r}
\Big)
+\frac{1}{r^3}\Big(
\mathcal{D}_i \phi_{[ju]r}
-\mathcal{D}_j \phi_{[iu]r}
\Big)=0 \, .
\end{aligned}
\end{equation}

\end{comment}

\begin{equation}
\begin{aligned}
&\partial_r^2 \phi_{[ij]k}
-2\,\partial_u\partial_r \phi_{[ij]k}
+\frac{D-4}{r}\,\partial_r \phi_{[ij]k}
-\frac{D-6}{r}\,\partial_u \phi_{[ij]k}
+\frac{1}{r^2}\,\mathcal{D}^2 \phi_{[ij]k}
-\frac{2(D-3)}{r^2}\,\phi_{[ij]k}
\\
&
-\frac{1}{r^3}\Big(
\mathcal{D}_k \phi_{[ij]r}
+\mathcal{D}_i \phi_{[jk]r}
-\mathcal{D}_j \phi_{[ik]r}
\Big)
+\frac{1}{r^3}\Big(
\mathcal{D}_k \phi_{[ij]u}
+\mathcal{D}_i \phi_{[jk]u}
-\mathcal{D}_j \phi_{[ik]u}
\Big)=0 \, .
\end{aligned}
\end{equation}

\section{Hodge-like decomposition for $(p,q)$-mixed symmetry tensors}
In this section we are going to generalize the Hodge decomposition to a Young projected $(p,q)$-biform, i.e. a mixed symmetry tensors assoaiced with Young diagram $\lambda=(p,q)$. We refer to \cite{Manzoni:2025zmi} for the basic definitions. The context is the generalization of the de Rham complex (for differential forms) to multi-forms and, after Young projection, to differential tensors of mixed symmetry.

Let $M$ be a smooth $D$-dimensional manifold (with pseudo-riemannian structure and metric $g$ when Hodge operators are used).
For integers $p_1,\dots,p_N \in \{0,\dots,D\}$, the $N$-multi-form space is
\begin{equation}\label{eq:Nmultiform-space}
\Omega^{p_1\otimes\cdots\otimes p_N}(M)
:= \Omega^{p_1}(M)\otimes\cdots\otimes\Omega^{p_N}(M)\, .
\end{equation}
An element $T\in\Omega^{p_1\otimes\cdots\otimes p_N}(M)$ has local components
$T_{[\mu^{(1)}_1\cdots\mu^{(1)}_{p_1}]\cdots[\mu^{(N)}_1\cdots\mu^{(N)}_{p_N}]}$
antisymmetric within each block of indices. In general, $\Omega^{p_1\otimes\cdots\otimes p_N}(M)$ carries a reducible representation of $\mathrm{GL}(D)$.
Given a Young diagram $\lambda$ compatible with the degrees $(p_1,\dots,p_N)$, one extracts the corresponding irreducible mixed-symmetry sector via the Young projector.

For each slot $i\in\{1,\dots,N\}$, the \emph{$i$-th differential} is the ordinary de Rham differential acting only on the $i$-th factor:
\begin{equation}\label{eq:ith-differential}
d^{(i)}:\Omega^{p_1\otimes\cdots\otimes p_i\otimes\cdots\otimes p_N}(M)
\longrightarrow
\Omega^{p_1\otimes\cdots\otimes (p_i+1)\otimes\cdots\otimes p_N}(M)\, .
\end{equation}
They satisfy the standard relations
\begin{equation}\label{eq:commuting-differentials}
d^{(i)}\circ d^{(i)}=0\quad\forall i\, ,\qquad
d^{(i)}\circ d^{(j)}=d^{(j)}\circ d^{(i)}\quad\forall i\neq j\, .
\end{equation}

Given a mixed-symmetry gauge field $B$ (understood as Young projected), one defines the $i$-th (partial) field strength
\begin{equation}\label{eq:partial-field-strength}
H^{(i)}:=\Pi_{\lambda^{(i)}}\bigl(d^{(i)}B\bigr)\, ,
\end{equation}
whenever the Young projection yields a well-defined tableau $\lambda^{(i)}$ (obtained by adding one box in the $i$-th block).
These partial field strengths are generally not fully gauge invariant but are useful intermediate objects whose appropriate combination with traces can give rise to gauge invariant objects.

The unambiguous gauge-invariant field strength is obtained by composing all slot differentials.
Define the de Rham-like differential
\begin{equation}\label{eq:deRham-like-differential}
\delta^{(N)}:= d^{(N)}\circ d^{(N-1)}\circ\cdots\circ d^{(2)}\circ d^{(1)}\, \qquad \delta^{(N)}\circ\delta^{(N)}=0\,.
\end{equation}
The field strength of $B$ is then
\begin{equation}\label{eq:full-field-strength}
H:=\Pi_{\lambda^{+}}\bigl(\delta^{(N)}B\bigr)\in
\Omega^{(p_1+1)\otimes\cdots\otimes(p_N+1)}(M)\, ,
\end{equation}
where $\lambda^{+}$ denotes the tableau obtained by increasing each block degree by one (when meaningful).
For $N=1$ one recovers the usual de Rham differential and the ordinary curvature $H=dB$.

If $(M,g)$ is oriented and equipped with a metric, one defines the $i$-th Hodge morphism acting only on the $i$-th slot:
\begin{equation}\label{eq:ith-hodge}
\star^{(i)}:\Omega^{p_1\otimes\cdots\otimes p_i\otimes\cdots\otimes p_N}(M)
\longrightarrow
\Omega^{p_1\otimes\cdots\otimes (D-p_i)\otimes\cdots\otimes p_N}(M)\, .
\end{equation}
The full multi-form Hodge operator is the composition
\begin{equation}\label{eq:full-hodge}
\star := \star^{(N)}\circ\star^{(N-1)}\circ\cdots\circ\star^{(1)}\, .
\end{equation}

An $N$-de Rham-like complex is the cochain complex built from multi-form space with differential $\delta^{(N)}$ and we can define the associated cohomology that turn out to be isomorphic to de Rham cohomology.

In classical Hodge theory on an oriented Riemannian manifold $(M^{D},g)$, the codifferential $d^{\dagger}$ is defined as the formal adjoint of the de Rham differential $d$ with respect to the $L^{2}$ inner product induced by $g$ and the Riemannian volume form. The same logic extends to spaces of $N$-multi-forms.

A natural pointwise inner product is obtained by contracting with $g$ in each antisymmetric block:
\begin{equation}
\inner{A}{B}_{x}
:=\frac{1}{p_{1}!\cdots p_{N}!}\,
A_{\mu^{(1)}_{1}\cdots\mu^{(1)}_{p_{1}}\cdots\mu^{(N)}_{1}\cdots\mu^{(N)}_{p_{N}}}\,
B^{\mu^{(1)}_{1}\cdots\mu^{(1)}_{p_{1}}\cdots\mu^{(N)}_{1}\cdots\mu^{(N)}_{p_{N}}}\, ,
\label{prod}
\end{equation}
and the $L^{2}$ pairing, that defines an $L^2$-norm, is
\begin{equation}
\Ltwo{A}{B}:=\int_{M}\inner{A}{B}\,\vol_{g}\, .
\label{prodt}
\end{equation}
One can think of $\inner{A}{B}_x\,\vol_{g}$ as the natural analogue of $\alpha\wedge *\beta$:
the product is obtained by wedging in each slot and then extracting the scalar density against $\vol_{g}$ exacly as in the Hodge theory.

We define the codifferential associated with $\delta^{(N)}$ as the adjoint $\big(\delta^{(N)}\big)^{\dagger}$ such that 
\begin{equation}
    \Ltwo{\delta^{(N)}A}{B}=\Ltwo{A}{\big(\delta^{(N)}\big)^{\dagger}B}\, .
\end{equation}
Since $\delta^{(N)}$ is a composition we have
\[
\big(\delta^{(N)}\big)^{\dagger}
=\bigl(d^{(1)}\bigr)^{\dagger}\circ\bigl(d^{(2)}\bigr)^{\dagger}\circ\cdots\circ\bigl(d^{(N)}\bigr)^{\dagger}\, .
\]
For each slot $i$, the adjoint of $d^{(i)}$ can be written as
\[
\bigl(d^{(i)}\bigr)^{\dagger}B
=
(-1)^{D(p_{i}+1)+1}s\,\star^{(i)}\, d^{(i)}\, \star^{(i)}B\, ,
\]
when $B$ has $i$-th degree $p_{i}$.
This is the direct analogue of the classical identity $d^{\dagger}=\pm *d*$ for ordinary forms,
applied only to the $i$-th factor. Because the slotwise operators on distinct slots commute, one can collect all stars into the total star and obtain
\[
\big(\delta^{(N)}\big)^{\dagger}B
=
(-1)^{\sum_{i=1}^{N}\left(D(p_{i}+1)+1\right)}s^N\,
\star\,\delta^{(N)}\,\star\,B\, .
\]

We now restrict ourself to $(p,q)$-biforms, and relative mixed symmetry tensors, on a compact riemannian manifold without boundary. We may define the the laplacian 
\begin{equation}
\Delta_{\delta^{(2)}}:=\delta^{(2)}\big(\delta^{(2)}\big)^{\dagger}+\big(\delta^{(2)}\big)^{\dagger}\delta^{(2)}
\label{lap+}
\end{equation}
that is of fourth order; however a better choice in order to have an elliptical operator is to define the laplacian as 
\begin{equation}
\widetilde{\Delta} \;:=\; \Delta^{2}+\Delta_{\delta^{(2)}}\ .
\end{equation}
where 
\begin{equation}
\Delta:= d^{(1)}(d^{(1)})^{\dagger}+(d^{(1)})^{\dagger}d^{(1)}+d^{(2)}(d^{(2)})^{\dagger}+(d^{(2)})^{\dagger}d^{(2)}=\Delta_1 + \Delta_2\ .
\label{lap1}
\end{equation}
Indeed even if $\Delta_{\delta^{(2)}}$ is not elliptic, the sum $\widetilde{\Delta}$ is, let us see why in the next proposition.

\begin{Proposition}\label{prop:ellipticity-Delta-tilde}
Let
\begin{equation*}\label{eq:Delta-tilde}
\widetilde{\Delta}\;:=\;\Delta^{2}+\Delta_{\delta^{(2)}}
\end{equation*}
the fourth-order operator where $\Delta$ and $\Delta_{\delta^{(2)}}$ are defined in \eqref{lap1} and \eqref{lap+}. Then $\widetilde{\Delta}$ is elliptic on $\Omega^{p\otimes q}(M)$ and so is on a Young projected subspace.
\end{Proposition}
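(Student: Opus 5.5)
To prove ellipticity I will compute the principal symbol of $\widetilde{\Delta}$ at a point $x\in M$ and a nonzero covector $\xi\in T^*_xM$, and show it is an invertible endomorphism of $\Lambda^p T^*_xM\otimes\Lambda^q T^*_xM$.

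\textbf{Step 1: symbols of the basic operators.} For each slot $i=1,2$, the principal symbol of $d^{(i)}$ is $\mathrm{i}\,e^{(i)}_\xi$, where $e^{(i)}_\xi:=\xi\wedge$ acts on the $i$-th factor. The symbol of $(d^{(i)})^\dagger$ is $-\mathrm{i}\,\iota^{(i)}_\xi$, the $g$-adjoint contraction on that slot. Hence
\begin{equation*}
\sigma(\Delta_i)=e^{(i)}_\xi\iota^{(i)}_\xi+\iota^{(i)}_\xi e^{(i)}_\xi=|\xi|^2\,\mathrm{Id},
\end{equation*}
which gives $\sigma(\Delta)=2|\xi|^2\,\mathrm{Id}$ and $\sigma(\Delta^2)=4|\xi|^4\,\mathrm{Id}$. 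Operators in distinct slots commute or graded-commute, so the contribution of $\delta^{(2)}=d^{(2)}d^{(1)}$ is
\begin{equation*}
\sigma(\Delta_{\delta^{(2)}})=P+P^*,\qquad P:=(e^{(2)}_\xi e^{(1)}_\xi)(e^{(2)}_\xi e^{(1)}_\xi)^*,
\end{equation*}
up to the signs fixed by the Hodge conventions.

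\textbf{Step 2: positivity.} The key point is that $\sigma(\Delta_{\delta^{(2)}})$ is a sum of operators of the form $AA^*$ and $A^*A$. It is therefore self-adjoint and positive semidefinite with respect to the pointwise inner product \eqref{prod}. The possible sign factors $(-1)^{\dots}s$ only fix the relation between $(\delta^{(2)})^\dagger$ and $\star\delta^{(2)}\star$. The adjointness $\Ltwo{\delta^{(2)}A}{B}=\Ltwo{A}{(\delta^{(2)})^\dagger B}$ is built into the definition, so positivity holds regardless of those factors. It follows that
\begin{equation*}
\sigma(\widetilde{\Delta})\ge 4|\xi|^4\,\mathrm{Id}.
\end{equation*}
This is positive definite, hence invertible, for $\xi\neq0$. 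So $\widetilde{\Delta}$ is elliptic, in fact strongly elliptic, on $\Omega^{p\otimes q}(M)$.

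This also shows why $\Delta_{\delta^{(2)}}$ alone fails. Its symbol vanishes on tensors annihilated by both $\xi\wedge$ in slot two and $\iota_\xi$ in slot one, for example $\xi\otimes\xi$ when $p=q=1$. The $\Delta^2$ term is exactly what repairs this.

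\textbf{Step 3: restriction to the Young-projected subspace.} The Young projector $\Pi_{(p,q)}$ is built from the $\mathrm{GL}(D)$ action permuting indices, which is pointwise algebraic. I would check that $\Delta_1+\Delta_2$ commutes with $\Pi_{(p,q)}$, since it is the natural connection-type Laplacian up to curvature terms that do not affect the symbol. I would also check that, at the symbol level, $\Pi$ commutes with $\sigma(\Delta_{\delta^{(2)}})$, or at least that $\Pi\,\sigma(\widetilde{\Delta})\,\Pi$ is positive definite on the image of $\Pi$.

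The second option suffices without any commutation. If $\Pi$ is chosen orthogonal for the inner product \eqref{prod}, which is possible because the symmetric group acts unitarily, then for $v\in\mathrm{Im}\,\Pi$
\begin{equation*}
\langle\Pi\sigma\Pi v,v\rangle=\langle\sigma v,v\rangle\ge4|\xi|^4|v|^2.
\end{equation*}
Hence the compressed operator $\Pi\widetilde{\Delta}\Pi$ is elliptic on the Young-projected bundle.

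\textbf{Main obstacle.} The hard step is this last one: making precise that "$\widetilde{\Delta}$ on the Young-projected subspace" means the compression $\Pi\widetilde{\Delta}\Pi$, and not literally that $\widetilde{\Delta}$ preserves the subspace. It is also where the orthogonality of $\Pi$ with respect to \eqref{prod} has to be justified. I would handle it with the Hermitian (symmetrized) form of the Young symmetrizer.
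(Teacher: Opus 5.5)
Your proposal is correct and follows the paper's route: $\sigma(\Delta^2)=4|\xi|^4\,\mathrm{Id}$ plus the positive semidefinite symbol of $\Delta_{\delta^{(2)}}$ (which the paper writes as $|\xi|^4(\Pi^{(1)}_+\Pi^{(2)}_+ +\Pi^{(1)}_-\Pi^{(2)}_-)$) gives a positive definite symbol. Your compression argument for the Young-projected subbundle justifies a step the paper only asserts, and it needs nothing specific to Young symmetrizers, since the orthogonal projector onto any subbundle works. There are two harmless slips. First, with $P=AA^*$ and $A=e^{(2)}_\xi e^{(1)}_\xi$ you get $P+P^*=2P$, so the symbol should read $AA^*+A^*A$, as in your Step 2. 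Second, $\xi\otimes\xi$ is not in the kernel of $\sigma(\Delta_{\delta^{(2)}})$, since it is an eigenvector with eigenvalue $|\xi|^4$; a correct example is $\eta\otimes\xi$ with $\eta\perp\xi$.
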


\begin{proof}
Let $\xi\in T_x^*M$.
The principal symbols of the first-order operators are
\[
\sigma_1\!\bigl(d^{(i)}\bigr)(x,\xi)=\varepsilon_i(\xi):=\xi\wedge_{(i)},\qquad
\sigma_1\!\bigl((d^{(i)})^\dagger\bigr)(x,\xi)=\iota_i(\xi):=\iota^{(i)}_{\xi^\sharp}.
\]
It follows that the principal symbol of each $\Delta^{(i)}$ is
\begin{equation*}\label{eq:principal-symbol-slot-laplacian}
\sigma_2\!\bigl(\Delta^{(i)}\bigr)(x,\xi)=|\xi|^2\,\mathrm{Id},
\quad\Rightarrow\quad
\sigma_2(\Delta)(x,\xi)=2|\xi|^2\,\mathrm{Id},
\end{equation*}
and therefore
\begin{equation*}\label{eq:principal-symbol-Delta-square}
\sigma_4\!\bigl(\Delta^2\bigr)(x,\xi)=\bigl(2|\xi|^2\bigr)^2\,\mathrm{Id}=4|\xi|^4\,\mathrm{Id}.
\end{equation*}
Moreover,
\[
\sigma_2\!\bigl(\delta^{(2)}\bigr)(x,\xi)=\varepsilon_2(\xi)\varepsilon_1(\xi),
\qquad
\sigma_2\!\bigl((\delta^{(2)})^\dagger\bigr)(x,\xi)=\iota_1(\xi)\iota_2(\xi),
\]
so $\sigma_4(\Delta_{\delta^{(2)}})(x,\xi)$ is positive semidefinite but generally not invertible since
\[
\sigma_4\!\bigl(\delta^{(2)}(\delta^{(2)})^\dagger\bigr)(x,\xi)
=\varepsilon_2\varepsilon_1\iota_1\iota_2
=|\xi|^4\,\Pi^{(1)}_{+}(\xi)\,\Pi^{(2)}_{+}(\xi),
\]
\[
\sigma_4\!\bigl((\delta^{(2)})^\dagger\delta^{(2)}\bigr)(x,\xi)
=\iota_1\iota_2\varepsilon_2\varepsilon_1
=|\xi|^4\,\Pi^{(1)}_{-}(\xi)\,\Pi^{(2)}_{-}(\xi).
\]
where
\[
\Pi^{(i)}_{+}(\xi):=\frac{1}{|\xi|^2}\,\varepsilon_i(\xi)\,\iota_i(\xi),\qquad
\Pi^{(i)}_{-}(\xi):=\frac{1}{|\xi|^2}\,\iota_i(\xi)\,\varepsilon_i(\xi).
\]
Therefore
\begin{equation}\label{eq:symbol-Delta-delta2}
\sigma_4\!\bigl(\Delta_{\delta^{(2)}}\bigr)(x,\xi)
=|\xi|^4\Bigl(\Pi^{(1)}_{+}(\xi)\Pi^{(2)}_{+}(\xi)+\Pi^{(1)}_{-}(\xi)\Pi^{(2)}_{-}(\xi)\Bigr),
\end{equation}
which has kernel on $\Omega^{p_1 \otimes p_2}(M)$.
In particular, $\Delta_{\delta^{(2)}}$ is not elliptic by itself.

For $\xi\neq 0$, the principal symbol of $\widetilde{\Delta}$ is the sum
\[
\sigma_4(\widetilde{\Delta})(x,\xi)
=
4|\xi|^4\,\mathrm{Id}+\sigma_4(\Delta_{\delta^{(2)}})(x,\xi);
\]
moreover,
\begin{itemize}
    \item since $\Delta$ is a second-order elliptic operator, its symbol $\sigma_{\Delta}(x, \xi)$ is strictly positive definite. Consequently, the symbol of $\Delta^2$ is also strictly positive definite:
    \begin{equation}
    \langle \sigma_{\Delta^2}(x, \xi) v, v \rangle > 0, \quad \forall v \neq 0, \, \forall \xi \neq 0.
    \end{equation}
    
    \item the operator $\Delta_{\delta^{(2)}}$ is non-elliptic, meaning its symbol $\sigma_{\Delta_{\delta^{(2)}}}(x, \xi)$ may have a non-trivial kernel in certain directions. However, its algebraic structure (it is defined by the Hodge type structure) ensures it is semi-definite positive:
    \begin{equation}
    \langle \sigma_{\Delta_{\delta^{(2)}}}(x, \xi) v, v \rangle \geq 0, \quad \forall v \neq 0, \, \forall \xi \neq 0.
    \end{equation}
\end{itemize}
Therefore,
\begin{equation}
\langle \sigma_{\widetilde{\Delta}}(x, \xi) v, v \rangle = \langle \sigma_{\Delta^2}(x, \xi) v, v \rangle + \langle \sigma_{\Delta_{\delta^{(2)}}}(x, \xi) v, v \rangle > 0, \quad \forall v \neq 0, \, \forall \xi \neq 0;
\end{equation}
since $\sigma_{\widetilde{\Delta}}(x, \xi)$ is strictly positive definite, all its eigenvalues are strictly positive, ensuring it is invertible. This proves that $\widetilde{\Delta}$ is an elliptic operator.
\end{proof}

Let $T\in \Omega^{p,q}(M)$, so
\begin{equation}
\begin{aligned}
&\Ltwo{\widetilde{\Delta}T}{T}=\Ltwo{{\Delta^2}T}{T}+\Ltwo{{\Delta_{\delta^{(2)}}}T}{T};
\end{aligned}
\end{equation}
using the definitions of codifferentials and that $\Delta$ is selfadjoint we get
\begin{equation}
\begin{aligned}
    &\big\langle \Delta_{\delta^{(2)}} T , T \big\rangle
=\big\langle \delta^{(2)}(\delta^{(2)})^{\dagger}T , T \big\rangle
+\big\langle (\delta^{(2)})^{\dagger}\delta^{(2)}T , T \big\rangle
=\|(\delta^{(2)})^{\dagger}T\|^{2}+\|\delta^{(2)}T\|^{2}\\
&\langle \Delta^{2}T, T\rangle = \langle \Delta T, \Delta T\rangle = \|\Delta T\|^{2}.
\end{aligned}
\end{equation}
Therefore,
\[
T\in\ker(\widetilde{\Delta})
\quad\Leftrightarrow\quad
\Delta T=0,\ \ \delta^{(2)}T=0,\ \ (\delta^{(2)})^{\dagger}T=0,
\]
equivalently
\begin{equation}
\begin{aligned}
\ker(\widetilde{\Delta})
&=\ker(\Delta)\cap \ker\big(\delta^{(2)}\big)\cap \ker\big((\delta^{(2)})^{\dagger}\big)=\\
&= \ker\big(d^{(1)}\big)\cap \ker\big((d^{(1)})^{\dagger}\big)\cap \ker\big(d^{(2)}\big)\cap \ker\big((d^{(2)})^{\dagger}\big)\cap \ker\big(\delta^{(2)}\big)\cap \ker\big((\delta^{(2)})^{\dagger}\big),
\end{aligned}
\end{equation}
where we used that $\text{im}(d^{(1)}) \perp \text{im}(d^{(2)})$ with respect to the product \eqref{prod} and similar for codifferentials. Since if $T \in \ker(\Delta)$ then $T \in \ker\big(\delta^{(2)}\big)\cap \ker\big((\delta^{(2)})^{\dagger}\big)$, this is because $\delta^{(2)}$ and its codifferential are composition of $d^{(1)},d^{(2)}$ and their codifferentials, it follows that $\ker(\Delta) \subset \ker({\Delta}_{\delta^{(2)}})$ and so $\ker(\widetilde{\Delta}) = \ker(\Delta) \cap \ker({\Delta}_{\delta^{(2)}})= \ker({\Delta})$.\\
By ellipticity of $\widetilde{\Delta}$ and compactness of $M$ the kernel is finite-dimensional and $\text{im}(\widetilde{\Delta})=(\text{ker}(\widetilde{\Delta}))^{\perp}$. We define the space of 2-harmonic $(p,q)$-biforms as
\begin{equation}
    \mathcal{H}^{p \otimes q}(M):=\{T \in \Omega^{p \otimes q}(M) \ | \ \widetilde{\Delta}T=0\}.
\end{equation}
\begin{Theorem}
    If $M$ is compact then 
    $$ \mathcal{H}^{p \otimes q}(M) \cong \mathcal{H}^{p} \otimes \mathcal{H}^{q}$$ where $\mathcal{H}^k$ is the standard Hodge space of harmonic $k$-forms.
\end{Theorem}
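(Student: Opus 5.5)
The plan is to combine the Hodge-type decomposition coming from the ellipticity of $\widetilde{\Delta}$ (Proposition \ref{prop:ellipticity-Delta-tilde}) with the kernel characterisation obtained just before the statement:
\[
\ker(\widetilde{\Delta})=\ker(d^{(1)})\cap\ker((d^{(1)})^{\dagger})\cap\ker(d^{(2)})\cap\ker((d^{(2)})^{\dagger}).
\]
The problem then becomes identifying the simultaneous kernel of the two slotwise Hodge Laplacians $\Delta_1,\Delta_2$ with $\mathcal{H}^p\otimes\mathcal{H}^q$.

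Throughout I will read the slot differentials as the paper does, namely $d^{(i)}$ acting \emph{only} on the $i$-th factor. Concretely, $d^{(1)}=d\otimes 1$ and $d^{(2)}=1\otimes d$ on decomposable elements $\alpha\otimes\beta$ with $\alpha\in\Omega^p(M)$, $\beta\in\Omega^q(M)$. Likewise $(d^{(i)})^{\dagger}$ is $d^{\dagger}$ in the $i$-th slot, so that $\Delta_1=\Delta_{\rm dR}\otimes 1$ and $\Delta_2=1\otimes\Delta_{\rm dR}$.

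\textbf{Step 1: the map $\mathcal{H}^p\otimes\mathcal{H}^q\to\mathcal{H}^{p\otimes q}(M)$ and its injectivity.}
\begin{enumerate}
\item Send $\alpha\otimes\beta$, with $\alpha\in\mathcal{H}^p$ and $\beta\in\mathcal{H}^q$, to the biform $\alpha\otimes\beta$, and extend linearly.
\item By the slotwise reading, this element is killed by $d^{(i)}$ and $(d^{(i)})^{\dagger}$ for both $i$. By the kernel characterisation it therefore lies in $\ker\widetilde{\Delta}$.
\item Injectivity follows by choosing $L^2$-orthonormal bases $\{\alpha_a\}$ of $\mathcal{H}^p$ and $\{\beta_b\}$ of $\mathcal{H}^q$ and using the product structure of the pairing \eqref{prodt} on decomposable elements.
\end{enumerate}

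\textbf{Step 2: surjectivity.}
\begin{enumerate}
\item Let $T\in\ker\widetilde{\Delta}$. Since $M$ is compact, the ordinary Hodge theorem gives a spectral decomposition $\Omega^p=\bigoplus_\mu E_\mu$ into finite-dimensional eigenspaces of $\Delta_{\rm dR}$, with $E_0=\mathcal{H}^p$, and similarly $\Omega^q=\bigoplus_\nu F_\nu$.
\item Expand $T$ in the products $E_\mu\otimes F_\nu$. On each such block, $\Delta_1+\Delta_2$ acts as multiplication by $\mu+\nu\ge 0$.
\item Since $\Delta T=0$ (shown before the statement), and by self-adjointness, $\|\Delta T\|^2=\sum(\mu+\nu)^2\|T_{\mu\nu}\|^2=0$. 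This forces $T_{\mu\nu}=0$ unless $\mu=\nu=0$, so $T\in\mathcal{H}^p\otimes\mathcal{H}^q$.
\item Finite dimensionality of both sides, together with the ellipticity of $\widetilde{\Delta}$, makes the convergence of the expansion harmless.
\end{enumerate}

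\textbf{Main obstacle.} The real difficulty is justifying the slot-independence used in both steps for sections of $\Lambda^p T^*M\otimes\Lambda^qT^*M$ over $M$ itself. Making $d^{(1)}$ well defined on such sections requires a connection on the second factor. With the Levi-Civita connection, $d^{(1)}(\alpha\otimes\beta)=d\alpha\otimes\beta+(\text{terms in }\nabla\beta)$. Likewise, the Weitzenböck formula for $\Delta_1+\Delta_2$ picks up curvature couplings between the slots, and the clean splitting $\Delta_{\rm dR}\otimes 1+1\otimes\Delta_{\rm dR}$ is lost.

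What I would try first is to write $\Delta_1+\Delta_2=2\nabla^*\nabla+\mathcal{R}$ via the Weitzenböck identity. I would then show that the cross terms vanish on elements of $\ker\widetilde{\Delta}$, using the extra conditions $\delta^{(2)}T=(\delta^{(2)})^\dagger T=0$ encoded in $\widetilde{\Delta}$, and thereby reduce to the slotwise statement.

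I expect this reduction to go through cleanly only when harmonic forms are parallel, for instance in the flat or symmetric cases relevant for $H^1_{\rm dR}=H^2_{\rm dR}=0$ on $S^3$. If the cross terms cannot be controlled, the theorem should be understood with $d^{(i)}$ defined by the paper's stipulation that it acts only on the $i$-th factor, and then Steps 1 and 2 constitute the proof.
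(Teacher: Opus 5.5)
\textbf{Overall.} Your Steps 1--2 follow the paper's own route. The paper writes $\ker\widetilde{\Delta}=\ker\Delta=\ker\Delta_1\cap\ker\Delta_2$, asserts $\ker\Delta_1=\mathcal{H}^p\otimes\Omega^q(M)$ and $\ker\Delta_2=\Omega^p(M)\otimes\mathcal{H}^q$, and intersects. Your spectral expansion just makes that slotwise step explicit. But the obstacle you flag is a genuine gap, the paper's proof has the same gap, and your fallback does not close it.

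\textbf{The stipulated reading fails on sections over $M$.} On sections of $\Lambda^pT^*M\otimes\Lambda^qT^*M$ the prescription $d^{(1)}=d\otimes 1$ is not well defined, since $d(f\alpha)\otimes\beta-d\alpha\otimes f\beta=(df\wedge\alpha)\otimes\beta\neq 0$. Moreover, the pairing \eqref{prodt} is not multiplicative on decomposables: $\langle\alpha\otimes\beta,\alpha'\otimes\beta'\rangle=\int_M\langle\alpha,\alpha'\rangle_x\langle\beta,\beta'\rangle_x\,\vol_g$. So two things you rely on are unavailable:
\begin{itemize}
\item the orthonormality of $\{\alpha_a\otimes\beta_b\}$ in Step~1;
\item the orthogonal splitting into blocks $E_\mu\otimes F_\nu$ on which $\Delta$ acts as $\mu+\nu$ in Step~2.
\end{itemize}
In fact the natural map $\mathcal{H}^p\otimes\mathcal{H}^q\to\Omega^{p\otimes q}(M)$ need not even be injective. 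On a compact Riemann surface of genus $\geq 2$, two independent holomorphic differentials $\omega_a=\alpha_a+i\beta_a$ satisfy $\omega_1\wedge\omega_2=0$. Hence the non-zero element $\alpha_1\otimes\alpha_2-\alpha_2\otimes\alpha_1-\beta_1\otimes\beta_2+\beta_2\otimes\beta_1$ maps to the zero section. Steps 1--2 are correct verbatim only if $\Omega^{p\otimes q}$ is read as $(p,q)$-forms on $M\times M$, with the $L^2$ product over $M\times M$. There the statement is the K\"unneth theorem for harmonic forms, but that is not the space carrying the pointwise pairing \eqref{prod}.

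\textbf{The Levi-Civita reading makes the statement false.} The Weitzenb\"ock repair cannot work in general either, because with Levi-Civita slot differentials the statement is false. The case $M=S^3$ used in the paper is a counterexample. The metric $\gamma_{ij}$ is a symmetric $(1,1)$ bi-form with $d^{(1)}\gamma=d^{(2)}\gamma=0$, and both codifferentials reduce to $-D^i\gamma_{ij}=0$. So $\gamma$ is also annihilated by $\delta^{(2)}$ and $(\delta^{(2)})^{\dagger}$. Thus $0\neq\gamma\in\mathcal{H}^{1\otimes 1}(S^3)$, whereas $\mathcal{H}^1(S^3)\otimes\mathcal{H}^1(S^3)=0$. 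Conversely, as you observed, products of non-parallel harmonic forms are not annihilated by $d^{(i)}$.

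So your expectation that the reduction succeeds on $S^3$ because the relevant harmonic forms vanish is mistaken. Every parallel bi-form is 2-harmonic, whether or not it comes from $\mathcal{H}^p\otimes\mathcal{H}^q$. The statement does hold when the slot differentials are honest commuting componentwise derivatives, e.g.\ on flat tori, where $\ker\Delta$ is the space of constant-coefficient bi-forms. On curved compact $M$, neither your argument nor the paper's establishes it.
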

\begin{proof}
    Since $M$ is compact $ \mathcal{H}^{p \otimes q}(M)$ is finite-dimensional. Moreover, since  $\ker(\widetilde{\Delta}) = \ker(\Delta) \cap \ker({\Delta}_{\delta^{(2)}})= \ker({\Delta})$ we have
    \[\mathcal{H}^{p \otimes q}(M):=\{T \in \Omega^{p \otimes q}(M) \ | \ {\Delta}T=0\}=\text{ker}(\Delta).\] Since $\Delta=\Delta_1+\Delta_2$ it follows that $\text{ker}(\Delta)=\text{ker}(\Delta_1)\cap\text{ker}(\Delta_2)$ and 
    \begin{equation*}
    \begin{aligned}
        \text{ker}(\Delta_1)&=\{T \in \Omega^{p \otimes q}(M) \ | \ T \in \mathcal{H}^p \otimes \Omega^q(M)\},\\
        \text{ker}(\Delta_2)&=\{T \in \Omega^{p \otimes q}(M) \ | \ T \in \Omega^p(M) \otimes \mathcal{H}^q\};\\
     \end{aligned}    
    \end{equation*}
     so \[\text{ker}(\Delta) = (\mathcal{H}^p \otimes \Omega^q(M)) \cap (\Omega^p(M) \otimes \mathcal{H}^q)\cong \mathcal{H}^p \otimes \mathcal{H}^q\]
\end{proof}

In the end, the space $\Omega^{p\otimes q}(M)$ can be decomposed as
\begin{equation}
    \Omega^{p \otimes q}(M) \cong  \text{im}(\widetilde{\Delta}) \oplus \mathcal{H}^{p \otimes q}(M);
\end{equation}
explicitly if $T\in \Omega^{p \otimes q}$ we have
\begin{equation}\label{eq:hodge_like_biform_9terms}
\begin{aligned}
T \;=\;& T_{\text{harm}}
+ d^{(1)}A + \bigl(d^{(1)}\bigr)^{\dagger}B
+ d^{(2)}C + \bigl(d^{(2)}\bigr)^{\dagger}D+\\
&\quad
+ d^{(1)}d^{(2)}\phi
+ d^{(1)}\bigl(d^{(2)}\bigr)^{\dagger}\Psi
+ \bigl(d^{(1)}\bigr)^{\dagger}d^{(2)}\Xi
+ \bigl(d^{(1)}\bigr)^{\dagger}\bigl(d^{(2)}\bigr)^{\dagger}\Theta \, ,
\end{aligned}
\end{equation}
where the harmonic part \(T_{\text{harm}}\) satisfies
\begin{equation}\label{eq:harmonic_part_biform}
T_{\text{harm}}\in\mathcal H^{p\otimes q}(M)=\ker(\widetilde\Delta)=\ker(\Delta)
\cong \mathcal H^{p}(M)\otimes\mathcal H^{q}(M)\, ,
\end{equation}
and the potentials have the following bidegrees:
\begin{equation}\label{eq:degrees_potentials_biform}
\begin{gathered}
A\in\Omega^{(p-1)\otimes q}(M),\qquad
B\in\Omega^{(p+1)\otimes q}(M),\\
C\in\Omega^{p\otimes (q-1)}(M),\qquad
D\in\Omega^{p\otimes (q+1)}(M),\\
\phi\in\Omega^{(p-1)\otimes (q-1)}(M),\qquad
\Psi\in\Omega^{(p-1)\otimes (q+1)}(M),\\
\Xi\in\Omega^{(p+1)\otimes (q-1)}(M),\qquad
\Theta\in\Omega^{(p+1)\otimes (q+1)}(M).
\end{gathered}
\end{equation}
This is the general decomposition for a $(p,q)$-biform; however for a $(p,q)$ mixed symmetry tensor the potentials has to be in appropriate representations, meaning that the Young diagram associated to them has to be admissible. In particular this means that the decomposition for a $(p,p)$ mixed symmetry tensor reduces to 
\begin{equation}\label{eq:hodge_like_mixed_9terms}
\begin{aligned}
T \;=\;& T_{\text{harm}} + \bigl(d^{(1)}\bigr)^{\dagger}B
+ d^{(2)}C + \delta^{(2)}\phi
+ \bigl(d^{(1)}\bigr)^{\dagger}d^{(2)}\Xi
+ \bigl(\delta^{(2)}\bigr)^{\dagger}\Theta \, ;
\end{aligned}
\end{equation} 
furthermore, some potential could be vanishing once we require to have the right indices symmetries to appear in the decomposition.

\section{Useful commutation relations and asymptotic gauge fixing}\label{AppD}
% Goal: show that S_{ij} := \nabla^{k}\nabla^{\ell} Y_{k i \ell j} is symmetric on (S^3,g).
% Assumptions on Y_{k i \ell j}: purely algebraic Riemann symmetries:
%  (i)  Y_{k i \ell j} = - Y_{i k \ell j},
%  (ii) Y_{k i \ell j} = - Y_{k i j \ell},
%  (iii)Y_{k i \ell j} =  Y_{\ell j k i},
%  (iv) first Bianchi: Y_{k i \ell j} + Y_{i \ell k j} + Y_{\ell k i j} = 0.
% Connection: Levi--Civita on S^3. Curvature: constant sectional curvature K.

\subsection{Vanishing of the antisymmetric part of $S_{ij}$}\label{s}

Using the pair-exchange symmetry \((k i)\leftrightarrow(\ell j)\),
\[
Y_{k j \ell i} = Y_{\ell i k j},
\]
we compute
\[
S_{ji}
= \nabla^{k}\nabla^{\ell} Y_{k j \ell i}
= \nabla^{k}\nabla^{\ell} Y_{\ell i k j}
= \nabla^{\ell}\nabla^{k} Y_{k i \ell j},
\]
where in the last step we only relabeled the dummy indices \(k \leftrightarrow \ell\).
Hence
\[
S_{ij} - S_{ji}
= \bigl(\nabla^{k}\nabla^{\ell} - \nabla^{\ell}\nabla^{k}\bigr) Y_{k i \ell j}
= [\nabla^{k},\nabla^{\ell}]\, Y_{k i \ell j}.
\]

For any \((0,4)\)-tensor \(T_{k i \ell j}\),
\[
[\nabla_{a},\nabla_{b}]\,T_{k i \ell j}
= - R^{m}{}_{k a b}\,T_{m i \ell j}
  - R^{m}{}_{i a b}\,T_{k m \ell j}
  - R^{m}{}_{\ell a b}\,T_{k i m j}
  - R^{m}{}_{j a b}\,T_{k i \ell m}.
\]
Applying this to \(T=Y\) and contracting with \(g^{ka} g^{\ell b}\) yields
\[
S_{ij} - S_{ji}
= g^{ka} g^{\ell b} [\nabla_{a},\nabla_{b}]\,Y_{k i \ell j}
= T_1 + T_2 + T_3 + T_4,
\]
where
\[
\begin{aligned}
T_1 &\coloneqq - g^{ka} g^{\ell b} R^{m}{}_{k a b}\,Y_{m i \ell j},\\
T_2 &\coloneqq - g^{ka} g^{\ell b} R^{m}{}_{i a b}\,Y_{k m \ell j},\\
T_3 &\coloneqq - g^{ka} g^{\ell b} R^{m}{}_{\ell a b}\,Y_{k i m j},\\
T_4 &\coloneqq - g^{ka} g^{\ell b} R^{m}{}_{j a b}\,Y_{k i \ell m}.
\end{aligned}
\]

On \(S^3\) (a space form) the Riemann tensor is
\[
R_{a b c d} = K\,(g_{a c} g_{b d} - g_{a d} g_{b c}).
\]
Raising the first index,
\[
R^{m}{}_{k a b}
= g^{m c} R_{c k a b}
= K\,(\delta^{m}_{a} g_{k b} - \delta^{m}_{b} g_{k a}).
\]

Define the ``Ricci contraction'' of \(Y\) by
\[
y_{ij} \coloneqq Y^{k}{}_{i k j} = g^{k\ell} Y_{k i \ell j}.
\]

\paragraph{Computation of \(T_1\).}
\[
\begin{aligned}
g^{ka} g^{\ell b} R^{m}{}_{k a b}
&= K\Bigl(g^{ka} g^{\ell b}\delta^{m}_{a} g_{k b} - g^{ka} g^{\ell b}\delta^{m}_{b} g_{k a}\Bigr) \\
&= K\Bigl(g^{km} g^{\ell b} g_{k b} - g^{\ell m} g^{ka} g_{k a}\Bigr) \\
&= K\Bigl(g^{km}\delta^{\ell}_{k} - g^{\ell m}\delta^{k}_{k}\Bigr)
= K\,(g^{\ell m} - 3 g^{\ell m})
= -2K\, g^{\ell m}.
\end{aligned}
\]
Therefore
\[
T_1
= -(-2K g^{\ell m}) Y_{m i \ell j}
= 2K\, g^{\ell m} Y_{m i \ell j}
= 2K\, y_{ij}.
\]

\paragraph{Computation of \(T_2\).}
Using \(R^{m}{}_{i a b} = K(\delta^{m}_{a} g_{i b} - \delta^{m}_{b} g_{i a})\), we get
\[
\begin{aligned}
g^{ka} g^{\ell b} R^{m}{}_{i a b} Y_{k m \ell j}
&= K\Bigl(g^{ka} g^{\ell b}\delta^{m}_{a} g_{i b} Y_{k m \ell j}
        - g^{ka} g^{\ell b}\delta^{m}_{b} g_{i a} Y_{k m \ell j}\Bigr) \\
&= K\Bigl(g^{km} \delta^{\ell}_{i} Y_{k m \ell j}
        - g^{\ell m} \delta^{k}_{i} Y_{k m \ell j}\Bigr) \\
&= K\Bigl(g^{km} Y_{k m i j} - g^{\ell m} Y_{i m \ell j}\Bigr).
\end{aligned}
\]
The first term vanishes since \(Y_{k m i j}\) is antisymmetric in \((k,m)\) while \(g^{km}\) is symmetric:
\[
g^{km} Y_{k m i j} = 0.
\]
For the second term, using \(Y_{i m \ell j} = -Y_{m i \ell j}\),
\[
- g^{\ell m} Y_{i m \ell j}
= - g^{\ell m}(-Y_{m i \ell j})
= g^{\ell m} Y_{m i \ell j}
= y_{ij}.
\]
Hence
\[
g^{ka} g^{\ell b} R^{m}{}_{i a b} Y_{k m \ell j} = K\, y_{ij},
\qquad
T_2 = -K\, y_{ij}.
\]

\paragraph{Computation of \(T_3\).}
Using \(R^{m}{}_{\ell a b} = K(\delta^{m}_{a} g_{\ell b} - \delta^{m}_{b} g_{\ell a})\), first contract with \(g^{\ell b}\):
\[
\begin{aligned}
g^{\ell b} R^{m}{}_{\ell a b}
&= K\Bigl(g^{\ell b}\delta^{m}_{a} g_{\ell b} - g^{\ell b}\delta^{m}_{b} g_{\ell a}\Bigr) \\
&= K\Bigl(\delta^{m}_{a}\, g^{\ell b} g_{\ell b} - g^{\ell m} g_{\ell a}\Bigr)
= K\Bigl(3\delta^{m}_{a} - \delta^{m}_{a}\Bigr)
= 2K\,\delta^{m}_{a}.
\end{aligned}
\]
Therefore
\[
g^{ka} g^{\ell b} R^{m}{}_{\ell a b}
= g^{ka}(2K\delta^{m}_{a})
= 2K\, g^{k m},
\]
and thus
\[
T_3
= -(2K g^{k m}) Y_{k i m j}
= -2K\, y_{ij}.
\]

\paragraph{Computation of \(T_4\).}
Similarly,
\[
g^{ka} g^{\ell b} R^{m}{}_{j a b} Y_{k i \ell m}
= K\Bigl(g^{km} Y_{k i j m} - g^{\ell m} Y_{j i \ell m}\Bigr).
\]
The second term vanishes since \(Y_{j i \ell m}\) is antisymmetric in \((\ell,m)\):
\[
g^{\ell m} Y_{j i \ell m} = 0.
\]
For the first term, use pair exchange \(Y_{k i j m} = Y_{j m k i}\) and then antisymmetry in the first pair:
\[
\begin{aligned}
g^{km} Y_{k i j m}
&= g^{km} Y_{j m k i}
= - g^{km} Y_{m j k i}
= - g^{mk} Y_{k j m i}
= - g^{km} Y_{k j m i}
= - y_{ji}.
\end{aligned}
\]
Hence
\[
g^{ka} g^{\ell b} R^{m}{}_{j a b} Y_{k i \ell m} = -K\, y_{ji},
\qquad
T_4 = -(-K y_{ji}) = K\, y_{ji}.
\]

Summing up,
\[
S_{ij} - S_{ji}
= (2K y_{ij}) + (-K y_{ij}) + (-2K y_{ij}) + (K y_{ji})
= K\,(y_{ji} - y_{ij}).
\]

Using the first Bianchi identity,
\[
Y_{k i \ell j} + Y_{i \ell k j} + Y_{\ell k i j} = 0,
\]
and contracting with \(g^{k\ell}\), we obtain
\[
g^{k\ell} Y_{k i \ell j} + g^{k\ell} Y_{i \ell k j} + g^{k\ell} Y_{\ell k i j} = 0.
\]
The first term is \(y_{ij}\).
The third term vanishes since \(Y_{\ell k i j}\) is antisymmetric in \((\ell,k)\):
\[
g^{k\ell} Y_{\ell k i j} = 0.
\]
For the second term, use \(Y_{i \ell k j} = -Y_{\ell i k j}\) and then pair exchange \(Y_{\ell i k j} = Y_{k j \ell i}\):
\[
g^{k\ell} Y_{i \ell k j}
= - g^{k\ell} Y_{\ell i k j}
= - g^{k\ell} Y_{k j \ell i}
= - y_{ji}.
\]
Therefore
\[
y_{ij} - y_{ji} = 0 \quad\Longrightarrow\quad y_{ij} = y_{ji}.
\]
Plugging this into \(S_{ij}-S_{ji} = K(y_{ji}-y_{ij})\) gives
\[
S_{ij} - S_{ji} = 0 \quad\Longrightarrow\quad S_{ij} = S_{ji}.
\]

\subsection{Asymptotic gauge fixing}\label{agfa}
The gauge condition equation reads
\begin{equation}
    0=(D^iH_{ruij})'=D^iH_{ruij}+\delta (D^iH_{ruij})
\end{equation}
where 
\begin{equation}
    \delta(H_{ruij})=-2D_i\partial_j\Lambda_{ru}+2r\gamma_{ij}(\partial_u\Lambda_{ru}+\partial_r\Lambda_{ru}),
\end{equation}
and we have used the gauge-for-gauge \eqref{gfg}.
Therefore, expanding order by order, thee equation for order $\frac{D-4}{2}$ is
\begin{equation}
    D^iH_{ruij}^{\big(\frac{D-4}{2}\big)}-2\Delta\partial_j\Lambda_{ru}^{\big(\frac{D-4}{2}\big)}+2D_j\partial_u\Lambda_{ru}^{\big(\frac{D-2}{2}\big)}-(D-4)D_j\Lambda_{ru}^{\big(\frac{D-4}{2}\big)}-2D_j\bar{\Lambda}_{ru}^{\big(\frac{D-4}{2}\big)}=0;
\end{equation}
that turns out to be compatible and there are enough arbitrary functions to ensure solvability.

\subsection{Other useful commutators}\label{comms}

For tensor $\mathcal{H}^{ij}$ we have
\begin{equation}
[D_{a},D_{b}]\,\mathcal{H}^{ij}
=
R^{i}_{mab}\mathcal{H}^{mj} + R^{j}_{mab}\mathcal{H}^{im} \ .
\label{C1}
\end{equation}
On a 3D space form of sectional curvature $K$ ,
\begin{equation}
R_{abcd}=K\big(\gamma_{ac}\gamma_{bd}-\gamma_{ad}\gamma_{bc}\big) \ , \qquad
R_{ab}=2K\,\gamma_{ab} \ .
\label{C2}
\end{equation}
A direct contraction of \eqref{C1} by $\gamma^{b\ell} $ and by using \eqref{C2} we get
\begin{equation}
[D_{j},D^{\ell}]\,\mathcal{H}^{ij}
=
K\mathcal{H}^{\ell i}+ 2K\mathcal{H}^{i \ell} - K\gamma^{i \ell}\mathcal{H};
\label{C3}
\end{equation}
using $v^{i}:=D_{j}H^{ij}$ we get,
\begin{equation}
D_{j}D^{\ell}\mathcal{H}^{ij}
=
D^{\ell}D_{j}\mathcal{H}^{ij} + [D_{j},D^{\ell}]\mathcal{H}^{ij}
=
D^{\ell}v^{i}+K\mathcal{H}^{\ell i}+ 2K\mathcal{H}^{i \ell} - K\gamma^{i \ell}\mathcal{H}.
\label{comm2}
\end{equation}

Contracting \eqref{C1} on the pair $(i,j)$ we get
\begin{equation}
[D_{i},D_{j}]\,\mathcal{H}^{ij}
=
R_{mj}\mathcal{H}^{mj}-R_{mi}\mathcal{H}^{im}
=0;
\end{equation}
since $D_{i}\mathcal{H}^{ij}=0$ we have $D_{j}D_{i}\mathcal{H}^{ij}=0$, hence
\begin{equation}
0
=
D_{j}D_{i}\mathcal{H}^{ij}
=
D_{i}D_{j}\mathcal{H}^{ij}+[D_{j},D_{i}]\mathcal{H}^{ij}
=
D_{i}D_{j}\mathcal{H}^{ij}=D_{i}v^{i}.
\label{comm3}
\end{equation}

\section*{Conflict of Interest, Funding and Data Availability}
The author declares no conflict of interest. This research received no external funding. No new data were created or analyzed in this study.

\section*{Acknowledgements}
The author wants to thank the life partner M.L.L.; without her, I would not have had the will to complete this work. Thank you for being there every day: I love you.

% End of appendix snippet.

\bibliographystyle{JHEP} 
\bibliography{bibl}

@article{Lee:1990nz,
    author = "Lee, J. and Wald, Robert M.",
    title = "{Local symmetries and constraints}",
    doi = "10.1063/1.528801",
    journal = "J. Math. Phys.",
    volume = "31",
    pages = "725--743",
    year = "1990"
}

@article{Campoleoni:2010zq,
  author = {Campoleoni, Andrea and Fredenhagen, Stefan and Pfenninger, Stephan and Theisen, Stefan},
  title = {Asymptotic symmetries of three-dimensional gravity coupled to higher-spin fields},
  journal = {JHEP},
  volume = {11},
  pages = {007},
  year = {2010},
  doi = {10.1007/JHEP11(2010)007},
  eprint = {1008.4744},
  archivePrefix = {arXiv},
  primaryClass = {hep-th}
}

@article{Henneaux:2010xg,
    author = "Henneaux, Marc and Rey, Soo-Jong",
    title = "{Nonlinear $W_{\infty}$ as Asymptotic Symmetry of Three-Dimensional Higher Spin Anti-de Sitter Gravity}",
    eprint = "1008.4579",
    archivePrefix = "arXiv",
    primaryClass = "hep-th",
    doi = "10.1007/JHEP12(2010)007",
    journal = "JHEP",
    volume = "12",
    pages = "007",
    year = "2010"
}

@article{Campoleoni:2011hg,
    author = "Campoleoni, Andrea and Fredenhagen, Stefan and Pfenninger, Stefan",
    title = "{Asymptotic W-symmetries in three-dimensional higher-spin gauge theories}",
    eprint = "1107.0290",
    archivePrefix = "arXiv",
    primaryClass = "hep-th",
    reportNumber = "AEI-2011-041",
    doi = "10.1007/JHEP09(2011)113",
    journal = "JHEP",
    volume = "09",
    pages = "113",
    year = "2011"
}

@article{Gonzalez:2013oaa,
    author = "Gonzalez, Hernan A. and Matulich, Javier and Pino, Miguel and Troncoso, Ricardo",
    title = "{Asymptotically flat spacetimes in three-dimensional higher spin gravity}",
    eprint = "1307.5651",
    archivePrefix = "arXiv",
    primaryClass = "hep-th",
    reportNumber = "CECS-PHY-13-06",
    doi = "10.1007/JHEP09(2013)016",
    journal = "JHEP",
    volume = "09",
    pages = "016",
    year = "2013"
}

@article{Joung:2017hsi,
    author = "Joung, Euihun and Kim, Jaewon and Kim, Jihun and Rey, Soo-Jong",
    title = "{Asymptotic Symmetries of Colored Gravity in Three Dimensions}",
    eprint = "1712.07744",
    archivePrefix = "arXiv",
    primaryClass = "hep-th",
    doi = "10.1007/JHEP03(2018)104",
    journal = "JHEP",
    volume = "03",
    pages = "104",
    year = "2018"
}

@article{Monjo:2025u13,
  author = {Monjo, Robert},
  title = {Weak mixing angle under $U(1,3)$ colored gravity},
  journal = {JHEP},
  volume = {06},
  pages = {207},
  year = {2025},
  doi = {10.1007/JHEP06(2025)207},
  eprint = {2502.11236},
  archivePrefix = {arXiv},
  primaryClass = {hep-ph}
}

@article{Barnich:2014kra,
    author = "Barnich, Glenn and Oblak, Blagoje",
    title = "{Notes on the BMS group in three dimensions: I. Induced representations}",
    eprint = "1403.5803",
    archivePrefix = "arXiv",
    primaryClass = "hep-th",
    doi = "10.1007/JHEP06(2014)129",
    journal = "JHEP",
    volume = "06",
    pages = "129",
    year = "2014"
}

@article{Bunster:2013oaa,
    author = {Bunster, Claudio and Henneaux, Marc and H{\"o}rtner, Sergio},
    title = "{Twisted Self-Duality for Linearized Gravity in D dimensions}",
    eprint = "1306.1092",
    archivePrefix = "arXiv",
    primaryClass = "hep-th",
    doi = "10.1103/PhysRevD.88.064032",
    journal = "Phys. Rev. D",
    volume = "88",
    number = "6",
    pages = "064032",
    year = "2013"
}

@article{West:2014xma,
  author = {West, Peter},
  title = {Dual gravity and $E_{11}$},
  year = {2014},
  eprint = {1411.0920},
  archivePrefix = {arXiv},
  primaryClass = {hep-th}
}

@article{Hohm:2013pua,
    author = "Hohm, Olaf and Samtleben, Henning",
    title = "{U-duality covariant gravity}",
    eprint = "1307.0509",
    archivePrefix = "arXiv",
    primaryClass = "hep-th",
    reportNumber = "LMU-ASC-45-13",
    doi = "10.1007/JHEP09(2013)080",
    journal = "JHEP",
    volume = "09",
    pages = "080",
    year = "2013"
}

@article{Hohm:2014fxa,
  author = {Hohm, Olaf and Samtleben, Henning},
  title = {Exceptional field theory. III. $E_{8(8)}$},
  journal = {Phys. Rev. D},
  volume = {90},
  pages = {066002},
  year = {2014},
  eprint = {1406.3348},
  archivePrefix = {arXiv},
  primaryClass = {hep-th}
}

@article{Hohm:2013uia,
    author = "Hohm, Olaf and Samtleben, Henning",
    title = "{Exceptional field theory. II. E$_{7(7)}$}",
    eprint = "1312.4542",
    archivePrefix = "arXiv",
    primaryClass = "hep-th",
    reportNumber = "MIT-CTP-4522",
    doi = "10.1103/PhysRevD.89.066017",
    journal = "Phys. Rev. D",
    volume = "89",
    pages = "066017",
    year = "2014"
}

@article{Hohm:2013vpa,
    author = "Hohm, Olaf and Samtleben, Henning",
    title = "{Exceptional Field Theory I: $E_{6(6)}$ covariant Form of M-Theory and Type IIB}",
    eprint = "1312.0614",
    archivePrefix = "arXiv",
    primaryClass = "hep-th",
    reportNumber = "MIT-CTP-4519",
    doi = "10.1103/PhysRevD.89.066016",
    journal = "Phys. Rev. D",
    volume = "89",
    number = "6",
    pages = "066016",
    year = "2014"
}

@article{Ciambelli:2022vot,
    author = "Ciambelli, Luca",
    title = "{From Asymptotic Symmetries to the Corner Proposal}",
    eprint = "2212.13644",
    archivePrefix = "arXiv",
    primaryClass = "hep-th",
    doi = "10.22323/1.435.0002",
    journal = "PoS",
    volume = "Modave2022",
    pages = "002",
    year = "2023"
}

@article{Afshar:2018apx,
    author = "Afshar, Hamid and Esmaeili, Erfan and Sheikh-Jabbari, M. M.",
    title = "{Asymptotic Symmetries in $p$-Form Theories}",
    eprint = "1801.07752",
    archivePrefix = "arXiv",
    primaryClass = "hep-th",
    reportNumber = "IPM-P-2018-002",
    doi = "10.1007/JHEP05(2018)042",
    journal = "JHEP",
    volume = "05",
    pages = "042",
    year = "2018"
}

@article{Campoleoni:2017qot,
    author = "Campoleoni, Andrea and Francia, Dario and Heissenberg, Carlo",
    title = "{Asymptotic Charges at Null Infinity in Any Dimension}",
    eprint = "1712.09591",
    archivePrefix = "arXiv",
    primaryClass = "hep-th",
    doi = "10.3390/universe4030047",
    journal = "Universe",
    volume = "4",
    number = "3",
    pages = "47",
    year = "2018"
}

@article{Labastida:1987kw,
    author = "Labastida, J. M. F.",
    title = "{Massless Particles in Arbitrary Representations of the Lorentz Group}",
    reportNumber = "IASSNS-HEP-87-43",
    doi = "10.1016/0550-3213(89)90490-2",
    journal = "Nucl. Phys. B",
    volume = "322",
    pages = "185--209",
    year = "1989"
}

@article{Ruzziconi:2019pzd,
    author = "Ruzziconi, Romain",
    title = "{Asymptotic Symmetries in the Gauge Fixing Approach and the BMS Group}",
    eprint = "1910.08367",
    archivePrefix = "arXiv",
    primaryClass = "hep-th",
    doi = "10.22323/1.384.0003",
    journal = "PoS",
    volume = "Modave2019",
    pages = "003",
    year = "2020"
}

@article{Curtright:1980yk,
    author = "Curtright, Thomas",
    title = "{GENERALIZED GAUGE FIELDS}",
    reportNumber = "EFI-80-04",
    doi = "10.1016/0370-2693(85)91235-3",
    journal = "Phys. Lett. B",
    volume = "165",
    pages = "304--308",
    year = "1985"
}

@article{Curtright:1980un,
    author = "Curtright, Thomas L.",
    editor = "Durand, Loyal and Pondrom, Lee G.",
    title = "{HIGH SPIN FIELDS}",
    reportNumber = "UFTP-80-16",
    doi = "10.1063/1.2948665",
    journal = "AIP Conf. Proc.",
    volume = "68",
    pages = "985--988",
    year = "1980"
}

@article{Curtright:1980yj,
    author = "Curtright, Thomas L. and Freund, Peter G. O.",
    title = "{MASSIVE DUAL FIELDS}",
    reportNumber = "EFI 80/05-CHICAGO",
    doi = "10.1016/0550-3213(80)90174-1",
    journal = "Nucl. Phys. B",
    volume = "172",
    pages = "413--424",
    year = "1980"
}

@article{Curtright:2019yur,
    author = "Curtright, Thomas L.",
    title = "{Massive Dual Spinless Fields Revisited}",
    eprint = "1907.11530",
    archivePrefix = "arXiv",
    primaryClass = "hep-th",
    doi = "10.1016/j.nuclphysb.2019.114784",
    journal = "Nucl. Phys. B",
    volume = "948",
    pages = "114784",
    year = "2019"
}

@article{sagnotti2012notes,
      title={Notes on Strings and Higher Spins}, 
      author={A. Sagnotti},
      year={2012},
      eprint={1112.4285},
      archivePrefix={arXiv},
      primaryClass={hep-th}
}

@article{Campoleoni:2020ejn,
    author = "Campoleoni, Andrea and Francia, Dario and Heissenberg, Carlo",
    title = "{On asymptotic symmetries in higher dimensions for any spin}",
    eprint = "2011.04420",
    archivePrefix = "arXiv",
    primaryClass = "hep-th",
    reportNumber = "NORDITA 2020-103",
    doi = "10.1007/JHEP12(2020)129",
    journal = "JHEP",
    volume = "12",
    pages = "129",
    year = "2020"
}

@phdthesis{Esmaeili:2020eua,
    author = "Esmaeili, Erfan",
    title = "{$p$-form gauge fields: charges and memories}",
    eprint = "2010.13922",
    archivePrefix = "arXiv",
    primaryClass = "hep-th",
    school = "IPM, Tehran",
    month = "9",
    year = "2020"
}

@article{Bekaert:2006py,
    author = "Bekaert, Xavier and Boulanger, Nicolas",
    title = "{The unitary representations of the Poincar\textbackslash{}'e group in any spacetime dimension}",
    eprint = "hep-th/0611263",
    archivePrefix = "arXiv",
    doi = "10.21468/SciPostPhysLectNotes.30",
    journal = "SciPost Phys. Lect. Notes",
    volume = "30",
    pages = "1",
    year = "2021"
}

@phdthesis{Heissenberg:2019fbn,
    author = "Heissenberg, Carlo",
    title = "{Topics in Asymptotic Symmetries and Infrared Effects}",
    eprint = "1911.12203",
    archivePrefix = "arXiv",
    primaryClass = "hep-th",
    school = "Pisa, Scuola Normale Superiore",
    year = "2019"
}

@article{Francia:2018jtb,
    author = "Francia, Dario and Heissenberg, Carlo",
    title = "{Two-Form Asymptotic Symmetries and Scalar Soft Theorems}",
    eprint = "1810.05634",
    archivePrefix = "arXiv",
    primaryClass = "hep-th",
    doi = "10.1103/PhysRevD.98.105003",
    journal = "Phys. Rev. D",
    volume = "98",
    number = "10",
    pages = "105003",
    year = "2018"
}

@article{Campiglia:2015qka,
    author = "Campiglia, Miguel and Laddha, Alok",
    title = "{Asymptotic symmetries of QED and Weinberg\textquoteright{}s soft photon theorem}",
    eprint = "1505.05346",
    archivePrefix = "arXiv",
    primaryClass = "hep-th",
    doi = "10.1007/JHEP07(2015)115",
    journal = "JHEP",
    volume = "07",
    pages = "115",
    year = "2015"
}

@article{Bondi:1962px,
    author = "Bondi, H. and van der Burg, M. G. J. and Metzner, A. W. K.",
    title = "{Gravitational waves in general relativity. 7. Waves from axisymmetric isolated systems}",
    doi = "10.1098/rspa.1962.0161",
    journal = "Proc. Roy. Soc. Lond. A",
    volume = "269",
    pages = "21--52",
    year = "1962"
}

@article{Sachs:1962wk,
    author = "Sachs, R. K.",
    title = "{Gravitational waves in general relativity. 8. Waves in asymptotically flat space-times}",
    doi = "10.1098/rspa.1962.0206",
    journal = "Proc. Roy. Soc. Lond. A",
    volume = "270",
    pages = "103--126",
    year = "1962"
}

@article{PhysRev.128.2851,
  title = {Asymptotic Symmetries in Gravitational Theory},
  author = {Sachs, R.},
  journal = {Phys. Rev.},
  volume = {128},
  issue = {6},
  pages = {2851--2864},
  numpages = {0},
  year = {1962},
  month = {12},
  publisher = {American Physical Society},
  doi = {10.1103/PhysRev.128.2851},
  url = {https://link.aps.org/doi/10.1103/PhysRev.128.2851}
}

@article{Afshar_2019,
   title={String memory effect},
   volume={2019},
   ISSN={1029-8479},
   url={http://dx.doi.org/10.1007/JHEP02(2019)053},
   DOI={10.1007/jhep02(2019)053},
   number={2},
   journal={Journal of High Energy Physics},
   publisher={Springer Science and Business Media LLC},
   author={Afshar, Hamid and Esmaeili, Erfan and Sheikh-Jabbari, M. M.},
   year={2019},
   month=2 }

@article{Jokela:2019apz,
    author = "Jokela, Niko and Kajantie, K. and Sarkkinen, Miika",
    title = "{Memory effect in Yang-Mills theory}",
    eprint = "1903.10231",
    archivePrefix = "arXiv",
    primaryClass = "hep-th",
    reportNumber = "HIP-2019-09/TH",
    doi = "10.1103/PhysRevD.99.116003",
    journal = "Phys. Rev. D",
    volume = "99",
    number = "11",
    pages = "116003",
    year = "2019"
}

@article{Pate_2017,
   title={Color Memory: A Yang-Mills Analog of Gravitational Wave Memory},
   volume={119},
   ISSN={1079-7114},
   url={http://dx.doi.org/10.1103/PhysRevLett.119.261602},
   DOI={10.1103/physrevlett.119.261602},
   number={26},
   journal={Physical Review Letters},
   publisher={American Physical Society (APS)},
   author={Pate, Monica and Raclariu, Ana-Maria and Strominger, Andrew},
   year={2017},
   month=dec }

@article{Tolish:2016ggo,
    author = "Tolish, Alexander and Wald, Robert M.",
    title = "{Cosmological memory effect}",
    eprint = "1606.04894",
    archivePrefix = "arXiv",
    primaryClass = "gr-qc",
    doi = "10.1103/PhysRevD.94.044009",
    journal = "Phys. Rev. D",
    volume = "94",
    number = "4",
    pages = "044009",
    year = "2016"
}

@article{Kehagias_2016,
   title={BMS in cosmology},
   volume={2016},
   ISSN={1475-7516},
   url={http://dx.doi.org/10.1088/1475-7516/2016/05/059},
   DOI={10.1088/1475-7516/2016/05/059},
   number={05},
   journal={Journal of Cosmology and Astroparticle Physics},
   publisher={IOP Publishing},
   author={Kehagias, A. and Riotto, A.},
   year={2016},
   month=5, pages={059–059} }

@article{Barnich:2010ojg,
    author = "Barnich, Glenn and Troessaert, Cedric",
    editor = "Anagnostopoulos, Konstantinos N. and Bahns, Dorothea and Grosse, Harald and Irges, Nikos and Zoupanos, George",
    title = "{Supertranslations call for superrotations}",
    eprint = "1102.4632",
    archivePrefix = "arXiv",
    primaryClass = "gr-qc",
    reportNumber = "ULB-TH-11-02",
    doi = "10.22323/1.127.0010",
    journal = "PoS",
    volume = "CNCFG2010",
    pages = "010",
    year = "2010"
}

@article{Campiglia_2020,
	doi = {10.1103/physrevd.101.104039},
	url = {https://doi.org/10.1103%2Fphysrevd.101.104039},
	year = 2020,
	month = {5},
	publisher = {American Physical Society ({APS})},
	volume = {101},
	number = {10},
	author = {Miguel Campiglia and Javier Peraza},
	title = {Generalized {BMS} charge algebra},
	journal = {Physical Review D}
}

@article{Freidel_2021weyl,
	doi = {10.1007/jhep07(2021)170},
	url = {https://doi.org/10.1007%2Fjhep07%282021%29170},
	year = 2021,
	month = {7},
	publisher = {Springer Science and Business Media {LLC}
},
	volume = {2021},
	number = {7},
	author = {Laurent Freidel and Roberto Oliveri and Daniele Pranzetti and Simone Speziale},
	title = {The Weyl {BMS} group and Einstein's equations},
	journal = {Journal of High Energy Physics}
}

@article{Pasterski:2015zua,
    author = "Pasterski, Sabrina",
    title = "{Asymptotic Symmetries and Electromagnetic Memory}",
    eprint = "1505.00716",
    archivePrefix = "arXiv",
    primaryClass = "hep-th",
    doi = "10.1007/JHEP09(2017)154",
    journal = "JHEP",
    volume = "09",
    pages = "154",
    year = "2017"
}

@article{Strominger_2014,
   title={Asymptotic symmetries of Yang-Mills theory},
   volume={2014},
   ISSN={1029-8479},
   url={http://dx.doi.org/10.1007/JHEP07(2014)151},
   DOI={10.1007/jhep07(2014)151},
   number={7},
   journal={Journal of High Energy Physics},
   publisher={Springer Science and Business Media LLC},
   author={Strominger, Andrew},
   year={2014},
   month=jul }

@article{Bonga:2020fhx,
    author = "Bonga, B\'eatrice and Prabhu, Kartik",
    title = "{BMS-like symmetries in cosmology}",
    eprint = "2009.01243",
    archivePrefix = "arXiv",
    primaryClass = "gr-qc",
    doi = "10.1103/PhysRevD.102.104043",
    journal = "Phys. Rev. D",
    volume = "102",
    number = "10",
    pages = "104043",
    year = "2020"
}

@article{Ferreira:2016hee,
    author = "Ferreira, Ricardo Z. and Sandora, McCullen and Sloth, Martin S.",
    title = "{Asymptotic Symmetries in de Sitter and Inflationary Spacetimes}",
    eprint = "1609.06318",
    archivePrefix = "arXiv",
    primaryClass = "hep-th",
    doi = "10.1088/1475-7516/2017/04/033",
    journal = "JCAP",
    volume = "04",
    pages = "033",
    year = "2017"
}

@article{Campoleoni:2019ptc,
    author = "Campoleoni, Andrea and Francia, Dario and Heissenberg, Carlo",
    title = "{Electromagnetic and color memory in even dimensions}",
    eprint = "1907.05187",
    archivePrefix = "arXiv",
    primaryClass = "hep-th",
    doi = "10.1103/PhysRevD.100.085015",
    journal = "Phys. Rev. D",
    volume = "100",
    number = "8",
    pages = "085015",
    year = "2019"
}

@article{Campoleoni:2017mbt,
    author = "Campoleoni, Andrea and Francia, Dario and Heissenberg, Carlo",
    title = "{On higher-spin supertranslations and superrotations}",
    eprint = "1703.01351",
    archivePrefix = "arXiv",
    primaryClass = "hep-th",
    doi = "10.1007/JHEP05(2017)120",
    journal = "JHEP",
    volume = "05",
    pages = "120",
    year = "2017"
}

@article{NicolasBoulanger_2003,
doi = {10.1088/1126-6708/2003/06/060},
url = {https://dx.doi.org/10.1088/1126-6708/2003/06/060},
year = {2003},
month = {7},
publisher = {},
volume = {2003},
number = {06},
pages = {060},
author = {Nicolas Boulanger and  Sandrine Cnockaert and  Marc Henneaux},
title = {A note on spin-s duality},
journal = {Journal of High Energy Physics},
}

@article{Compere:2018aar,
    author = "Comp\`ere, Geoffrey and Fiorucci, Adrien",
    title = "{Advanced Lectures on General Relativity}",
    eprint = "1801.07064",
    archivePrefix = "arXiv",
    primaryClass = "hep-th",
    month = "1",
    year = "2018"
}

@book{hamermesh1989group,
  title={Group Theory and Its Application to Physical Problems},
  author={Hamermesh, M.},
  isbn={9780486661810},
  lccn={89023257},
  series={Addison Wesley Series in Physics},
  url={https://books.google.it/books?id=c0o9\_wlCzgcC},
  year={1989},
  publisher={Dover Publications}
}

@article{strominger2018lectures,
      title={Lectures on the Infrared Structure of Gravity and Gauge Theory}, 
      author={Andrew Strominger},
      year={2018},
      eprint={1703.05448},
      archivePrefix={arXiv},
      primaryClass={hep-th}
}

@book{green_schwarz_witten_2012, place={Cambridge}, series={Cambridge Monographs on Mathematical Physics}, title={Superstring Theory: 25th Anniversary Edition}, volume={1}, DOI={10.1017/CBO9781139248563}, publisher={Cambridge University Press}, author={Green, Michael B. and Schwarz, John H. and Witten, Edward}, year={2012}, collection={Cambridge Monographs on Mathematical Physics}}

@book{Green:2012pqa,
    author = "Green, Michael B. and Schwarz, John H. and Witten, Edward",
    title = "{Superstring Theory Vol. 2}: {25th Anniversary Edition}",
    doi = "10.1017/CBO9781139248570",
    publisher = "Cambridge University Press",
    series = "Cambridge Monographs on Mathematical Physics",
    month = "11",
    year = "2012"
}

@book{Polchinski:1998rq,
    author = "Polchinski, J.",
    title = "{String theory. Vol. 1: An introduction to the bosonic string}",
    doi = "10.1017/CBO9780511816079",
    publisher = "Cambridge University Press",
    series = "Cambridge Monographs on Mathematical Physics",
    month = "12",
    year = "2007"
}

@book{Polchinski:1998rr,
    author = "Polchinski, J.",
    title = "{String theory. Vol. 2: Superstring theory and beyond}",
    doi = "10.1017/CBO9780511618123",
    publisher = "Cambridge University Press",
    series = "Cambridge Monographs on Mathematical Physics",
    month = "12",
    year = "2007"
}

@article{vukovic2018,
      title={Higher spin theory}, 
      author={Ivan Vuković},
      year={2018},
      eprint={1809.02179},
      archivePrefix={arXiv},
      primaryClass={hep-th},
      url={https://arxiv.org/abs/1809.02179}, 
}

@article{Hamada_2017,
   title={Soft pion theorem, asymptotic symmetry and new memory effect},
   volume={2017},
   ISSN={1029-8479},
   url={http://dx.doi.org/10.1007/JHEP11(2017)203},
   DOI={10.1007/jhep11(2017)203},
   number={11},
   journal={Journal of High Energy Physics},
   publisher={Springer Science and Business Media LLC},
   author={Hamada, Yuta and Sugishita, Sotaro},
   year={2017},
   month=nov }

@article{witfre,
  title = {Systematics of higher-spin gauge fields},
  author = {de Wit, Bernard and Freedman, Daniel Z.},
  journal = {Phys. Rev. D},
  volume = {21},
  issue = {2},
  pages = {358--367},
  numpages = {0},
  year = {1980}
}

@article{Vasiliev_2004,
   title={Higher spin gauge theories in various dimensions},
   volume={52},
   ISSN={1521-3978},
   url={http://dx.doi.org/10.1002/prop.200410167},
   DOI={10.1002/prop.200410167},
   number={6–7},
   journal={Fortschritte der Physik},
   publisher={Wiley},
   author={Vasiliev, M.A.},
   year={2004},
   month=may, pages={702–717} 
}

@article{Ferrero:2024eva,
    author = "Ferrero, Pietro and Francia, Dario and Heissenberg, Carlo and Romoli, Matteo",
    title = "{Double-copy supertranslations}",
    eprint = "2402.11595",
    archivePrefix = "arXiv",
    primaryClass = "hep-th",
    doi = "10.1103/PhysRevD.110.026009",
    journal = "Phys. Rev. D",
    volume = "110",
    number = "2",
    pages = "026009",
    year = "2024"
}

@article{Romoli:2024hlc,
    author = "Romoli, Matteo",
    title = "{$ \mathcal{O} $(r$^{N}$) two-form asymptotic symmetries and renormalized charges}",
    eprint = "2409.08131",
    archivePrefix = "arXiv",
    primaryClass = "hep-th",
    doi = "10.1007/JHEP12(2024)085",
    journal = "JHEP",
    volume = "12",
    pages = "085",
    year = "2024"
}

@article{Labastida:1986gy,
    author = "Labastida, J. M. F. and Morris, T. R.",
    title = "{MASSLESS MIXED SYMMETRY BOSONIC FREE FIELDS}",
    reportNumber = "PRINT-86-1128 (IAS,PRINCETON)",
    doi = "10.1016/0370-2693(86)90143-7",
    journal = "Phys. Lett. B",
    volume = "180",
    pages = "101--106",
    year = "1986"
}

@article{Hull:2000zn,
    author = "Hull, C. M.",
    title = "{Strongly coupled gravity and duality}",
    eprint = "hep-th/0004195",
    archivePrefix = "arXiv",
    reportNumber = "QMW-00-03",
    doi = "10.1016/S0550-3213(00)00323-0",
    journal = "Nucl. Phys. B",
    volume = "583",
    pages = "237--259",
    year = "2000"
}

@article{Hull:2001iu,
    author = "Hull, C. M.",
    title = "{Duality in gravity and higher spin gauge fields}",
    eprint = "hep-th/0107149",
    archivePrefix = "arXiv",
    reportNumber = "QMUL-PH-01-01",
    doi = "10.1088/1126-6708/2001/09/027",
    journal = "JHEP",
    volume = "09",
    pages = "027",
    year = "2001"
}

@article{Medeiros_2003,
   title={Exotic Tensor Gauge Theory and Duality},
   volume={235},
   ISSN={1432-0916},
   url={http://dx.doi.org/10.1007/s00220-003-0810-z},
   DOI={10.1007/s00220-003-0810-z},
   number={2},
   journal={Communications in Mathematical Physics},
   publisher={Springer Science and Business Media LLC},
   author={Medeiros, P.F. de and Hull, C.M.},
   year={2003},
   month=apr, pages={255–273} }

@article{Manzoni:2024agc,
    author = "Manzoni, Federico",
    title = "{Axialgravisolitons at infinite corner}",
    eprint = "2404.04951",
    archivePrefix = "arXiv",
    primaryClass = "gr-qc",
    doi = "10.1088/1361-6382/ad61b5",
    journal = "Class. Quant. Grav.",
    volume = "41",
    number = "17",
    pages = "177001",
    year = "2024"
}

@article{manz2,
    author = "Francia, Dario and Manzoni, Federico",
    title = "{Asymptotic charges of $p-$forms and their dualities in any $D$}",
    eprint = "2411.04926",
    archivePrefix = "arXiv",
    primaryClass = "hep-th",
    month = "11",
    year = "2024"
}

@article{Romoli:2025map,
    author = "Romoli, Matteo and Manzoni, Federico",
    title = "{Higher-order p-form Asymptotic Symmetries in $D=p+2$}",
    eprint = "2503.22572",
    archivePrefix = "arXiv",
    primaryClass = "hep-th",
    doi = "10.1007/s10773-026-06276-7",
    journal = "Int. J. Theor. Phys.",
    volume = "65",
    number = "3",
    pages = "56",
    year = "2026"
}

@article{Manzoni:2024tow,
    author = "Manzoni, Federico",
    title = "{Duality, asymptotic charges and higher form symmetries in p-form gauge theories}",
    eprint = "2411.05602",
    archivePrefix = "arXiv",
    primaryClass = "math-ph",
    doi = "10.1140/epjc/s10052-026-15359-y",
    journal = "Eur. Phys. J. C",
    volume = "86",
    number = "2",
    pages = "155",
    year = "2026"
}

@article{deAguiarAlves:2025vfu,
    author = "de Aguiar Alves, N{\'\i}ckolas and Landulfo, Andr{\'e} G. S.",
    title = "{Sound as a gauge theory and its infrared triangle}",
    eprint = "2512.15796",
    archivePrefix = "arXiv",
    primaryClass = "hep-th",
    month = "12",
    year = "2025"
}

@article{Manzoni:2025zmi,
    author = "Manzoni, Federico",
    title = "{Duality, asymptotic charges and algebraic topology in mixed symmetry tensor gauge theories and applications}",
    eprint = "2501.05104",
    archivePrefix = "arXiv",
    primaryClass = "math-ph",
    month = "1",
    year = "2025"
}

@phdthesis{Manzoni:2025gxw,
    author = "Manzoni, Federico",
    title = "{Duality and asymptotic symmetries in gravisolitons and gauge theories}",
    URL = "https://hdl.handle.net/11590/505556",
    school = "Universit{\`a} degli Studi di Roma Tre",
    year = "2025"
}

@article{HollandsIshibashi2005BondiEnergyHigherD,
  author       = {Hollands, Stefan and Ishibashi, Akihiro},
  title        = {Asymptotic flatness and Bondi energy in higher dimensional gravity},
  journal      = {Journal of Mathematical Physics},
  volume       = {46},
  number       = {2},
  pages        = {022503},
  year         = {2005},
  doi          = {10.1063/1.1829152},
  eprint       = {gr-qc/0304054},
  archivePrefix= {arXiv},
  primaryClass = {gr-qc}
}

@article{TanabeTanahashiShiromizu2010NullInfinity5D,
  author       = {Tanabe, Kentaro and Tanahashi, Norihiro and Shiromizu, Tetsuya},
  title        = {On asymptotic structure at null infinity in five dimensions},
  journal      = {Journal of Mathematical Physics},
  volume       = {51},
  number       = {6},
  pages        = {062502},
  year         = {2010},
  doi          = {10.1063/1.3429580},
  eprint       = {0909.0426},
  archivePrefix= {arXiv},
  primaryClass = {gr-qc}
}

@article{TanabeKinoshitaShiromizu2011NullInfinityAnyD,
  author       = {Tanabe, Kentaro and Kinoshita, Shunichiro and Shiromizu, Tetsuya},
  title        = {Asymptotic flatness at null infinity in arbitrary dimensions},
  journal      = {Physical Review D},
  volume       = {84},
  pages        = {044055},
  year         = {2011},
  doi          = {10.1103/PhysRevD.84.044055},
  eprint       = {1104.0303},
  archivePrefix= {arXiv},
  primaryClass = {gr-qc}
}

@article{KapecLysovPasterskiStrominger2017HigherDSupertranslations,
  author       = {Kapec, Daniel and Lysov, Vyacheslav and Pasterski, Sabrina and Strominger, Andrew},
  title        = {Higher-dimensional supertranslations and {W}einberg's soft graviton theorem},
  journal      = {Annals of Mathematical Sciences and Applications},
  volume       = {2},
  number       = {1},
  pages        = {69--94},
  year         = {2017},
  doi          = {10.4310/AMSA.2017.v2.n1.a2},
  eprint       = {1502.07644},
  archivePrefix= {arXiv},
  primaryClass = {hep-th}
}

@article{Fuentealba:2022BMS5PRL,
  author        = {Fuentealba, Oscar and Henneaux, Marc and Matulich, Javier and Troessaert, C{\'e}dric},
  title         = {Bondi-Metzner-Sachs Group in Five Spacetime Dimensions},
  journal       = {Physical Review Letters},
  volume        = {128},
  number        = {5},
  pages         = {051103},
  year          = {2022},
  doi           = {10.1103/PhysRevLett.128.051103},
  eprint        = {2111.09664},
  archivePrefix = {arXiv},
  primaryClass  = {hep-th}
}

@article{Fuentealba:2022HamiltonianBMS5,
  author        = {Fuentealba, Oscar and Henneaux, Marc and Matulich, Javier and Troessaert, C{\'e}dric},
  title         = {Asymptotic structure of the gravitational field in five spacetime dimensions: Hamiltonian analysis},
  journal       = {Journal of High Energy Physics},
  number        = {07},
  pages         = {149},
  year          = {2022},
  doi           = {10.1007/JHEP07(2022)149},
  eprint        = {2206.04972},
  archivePrefix = {arXiv},
  primaryClass  = {hep-th}
}

@article{Barnich:2010BMSsuperrotations,
  author        = {Barnich, Glenn and Troessaert, C{\'e}dric},
  title         = {Symmetries of asymptotically flat 4 dimensional spacetimes at null infinity revisited},
  journal       = {Physical Review Letters},
  volume        = {105},
  pages         = {111103},
  year          = {2010},
  doi           = {10.1103/PhysRevLett.105.111103},
  eprint        = {0909.2617},
  archivePrefix = {arXiv},
  primaryClass  = {gr-qc}
}

@article{Barnich:2011ChargeAlgebra,
  author        = {Barnich, Glenn and Troessaert, C{\'e}dric},
  title         = {BMS charge algebra},
  journal       = {Journal of High Energy Physics},
  number        = {12},
  pages         = {105},
  year          = {2011},
  doi           = {10.1007/JHEP12(2011)105},
  eprint        = {1106.0213},
  archivePrefix = {arXiv},
  primaryClass  = {hep-th}
}

@article{Campiglia:2014SubleadingSoft,
  author        = {Campiglia, Miguel and Laddha, Alok},
  title         = {Asymptotic symmetries and subleading soft graviton theorem},
  journal       = {Physical Review D},
  volume        = {90},
  pages         = {124028},
  year          = {2014},
  doi           = {10.1103/PhysRevD.90.124028},
  eprint        = {1408.2228},
  archivePrefix = {arXiv},
  primaryClass  = {hep-th}
}

@article{Campiglia:2015NewSymmetriesSmatrix,
  author        = {Campiglia, Miguel and Laddha, Alok},
  title         = {New symmetries for the Gravitational S-matrix},
  journal       = {Journal of High Energy Physics},
  number        = {04},
  pages         = {076},
  year          = {2015},
  doi           = {10.1007/JHEP04(2015)076},
  eprint        = {1502.02318},
  archivePrefix = {arXiv},
  primaryClass  = {hep-th}
}

@article{Kapec:2014VirasoroSmatrix,
  author        = {Kapec, Daniel and Lysov, Vyacheslav and Pasterski, Sabrina and Strominger, Andrew},
  title         = {Semiclassical Virasoro Symmetry of the Quantum Gravity S-matrix},
  journal       = {Journal of High Energy Physics},
  number        = {08},
  pages         = {058},
  year          = {2014},
  doi           = {10.1007/JHEP08(2014)058},
  eprint        = {1406.3312},
  archivePrefix = {arXiv},
  primaryClass  = {hep-th}
}

@article{Freidel:2021BMSW,
  author        = {Freidel, Laurent and Oliveri, Roberto and Pranzetti, Daniele and Speziale, Simone},
  title         = {The Weyl BMS group and Einstein's equations},
  journal       = {Journal of High Energy Physics},
  number        = {07},
  pages         = {170},
  year          = {2021},
  doi           = {10.1007/JHEP07(2021)170},
  eprint        = {2104.05793},
  archivePrefix = {arXiv},
  primaryClass  = {hep-th}
}

@article{CURTRIGHT1985304,
title = {Generalized gauge fields},
journal = {Physics Letters B},
volume = {165},
number = {4},
pages = {304-308},
year = {1985},
issn = {0370-2693},
doi = {https://doi.org/10.1016/0370-2693(85)91235-3},
url = {https://www.sciencedirect.com/science/article/pii/0370269385912353},
author = {Thomas Curtright}
}

@article{Hull:2024qpy,
    author = {Hull, Chris and Lindstr{\"o}m, Ulf and Vel{\'a}squez Cotini Hutt, Maxwell L.},
    title = "{Gauge-invariant charges of the dual graviton}",
    eprint = "2412.10503",
    archivePrefix = "arXiv",
    primaryClass = "hep-th",
    doi = "10.1007/JHEP02(2025)198",
    journal = "JHEP",
    volume = "02",
    pages = "198",
    year = "2025"
}

@article{Hull:2023iny,
    author = "Hull, C. M.",
    title = "{Magnetic charges for the graviton}",
    eprint = "2310.18441",
    archivePrefix = "arXiv",
    primaryClass = "hep-th",
    reportNumber = "Imperial-TP-2023-CH-03",
    doi = "10.1007/JHEP05(2024)257",
    journal = "JHEP",
    volume = "05",
    pages = "257",
    year = "2024"
}

@ARTICLE{2024CQGra..41s5012H,
       author = {{Hull}, Chris and {Hutt}, Maxwell L. and {Lindstr{\"o}m}, Ulf},
        title = "{Gauge-invariant magnetic charges in linearised gravity}",
      journal = {Classical and Quantum Gravity},
         year = 2024,
        month = oct,
       volume = {41},
          doi = {10.1088/1361-6382/ad718a},
archivePrefix = {arXiv},
       eprint = {2405.08876},
 primaryClass = {hep-th},
}

@article{Hull:2024xgo,
    author = {Hull, Chris and Hutt, Maxwell L. and Lindstr{\"o}m, Ulf},
    title = "{Charges and topology in linearised gravity}",
    eprint = "2401.17361",
    archivePrefix = "arXiv",
    primaryClass = "hep-th",
    reportNumber = "Imperial-TP-2024-CH-01",
    doi = "10.1007/JHEP07(2024)097",
    journal = "JHEP",
    volume = "07",
    pages = "097",
    year = "2024"
}

@article{Hull:2024bcl,
    author = {Hull, Chris and Hutt, Maxwell L. and Lindstr{\"o}m, Ulf},
    title = "{Generalised symmetries in linear gravity}",
    eprint = "2409.00178",
    archivePrefix = "arXiv",
    primaryClass = "hep-th",
    reportNumber = "Imperial-TP-2024-CH-5, UUITP-24/24",
    doi = "10.1007/JHEP04(2025)046",
    journal = "JHEP",
    volume = "04",
    pages = "046",
    year = "2025"
}

@article{Hull:2024ism,
    author = {Hull, Chris and Hutt, Maxwell L. and Lindstr{\"o}m, Ulf},
    title = "{Gauging generalised symmetries in linear gravity}",
    eprint = "2410.08720",
    archivePrefix = "arXiv",
    primaryClass = "hep-th",
    reportNumber = "Imperial-TP-2024-CH-6, UUITP-27/24",
    doi = "10.1007/JHEP01(2025)145",
    journal = "JHEP",
    volume = "01",
    pages = "145",
    year = "2025"
}

\end{document}